\documentclass[
a4paper,
fleqn,
]{article}

\usepackage
[
]
{geometry}

\usepackage{setspace}
\linespread{1.25}

\usepackage{Settings}

\newcommand{\FBV}{\mathop{\bigvee}
\limits
}

\newcommand{\FBW}{\mathop{\bigwedge}
\limits
}

\newcommand{\FBC}{\mathop{\bigcup}
\limits
}

\usepackage{lineno}

\title{Completeness of two fragments of a logic \\ for conditional strategic reasoning}
\author{
Yinfeng Li${}^{1,2}$ and Fengkui Ju${}^{3,4}$\footnote{Corresponding author} \vspace{5pt} \\
{\small {$^1$IRIT-CNRS, University of Toulouse, France}} \\
{\small {$^2$\href{mailto:yinfeng.li@irit.fr}{yinfeng.li@irit.fr}}} \vspace{2.5pt} \\
{\small {$^3$School of Philosophy, Beijing Normal University, China}} \\
{\small {$^4$\href{mailto:fengkui.ju@bnu.edu.cn}{fengkui.ju@bnu.edu.cn}}}
}
\date{}

\begin{document}

\setlength{\parskip}{0.5em}


\maketitle

\begin{abstract}

\noindent Classical logics for strategic reasoning, such as Coalition Logic and Alternating-time Temporal Logic, formalize absolute strategic reasoning about the unconditional strategic abilities of agents to achieve their goals.
Goranko and Ju in \cite{goranko_towards_2019,goranko_logic_2022} introduced a Logic for Conditional Strategic Reasoning ($\FCSR$). However, its completeness is still an open problem.
$\FCSR$ has three featured operators, and one of them has the following reading: For some action of A that guarantees the achievement of her goal, B has an action to guarantee the achievement of his goal. This operator makes good sense when A is cooperating with B. The logic about this operator is called Logic for Cooperating Conditional Strategic Reasoning ($\FCCSR$). In this paper, we prove the completeness of two fragments of $\FCCSR$: the liability fragment and the ability fragment.
The key ingredients of our proof approach include standard disjunctions, the validity-reduction condition of standard disjunctions, abstract game forms and their realization, and the derivability-reduction condition of standard disjunctions. The approach has good potential to be applied to the completeness of $\FCSR$ and other strategic logics.

\medskip

\noindent \textbf{Keywords:} logics for conditional strategic abilities; completeness; reduction of validity.

\end{abstract}


\section{Introduction}
\label{section:Introduction}

\subsection{Logic for Conditional Strategic Reasoning $\FCSR$}

Logical research on strategic reasoning started with Coalition Logic $\FCL$ (\cite{pauly_logic_2001,pauly_modal_2002}) and Alternating-time Temporal Logic $\FATL$ (\cite{alur_alternating-time_2002}), which is a temporal extension of $\FCL$. The featured operator of $\FCL$ and $\FATL$ is $\Fclo{\FAA} \phi$, indicating \emph{some available joint action of the coalition $\FAA$ guarantees $\phi$}. There has been much following work, including Alternating-time Temporal Epistemic Logic $\FATEL$ (\cite{hoek_cooperation_2004}) and Strategy Logic $\FSL$ \cite{mogavero_reasoning_2014} (cf. \cite{mogavero_reasoning_2017}). We refer to \cite{benthem_models_2015} and \cite{agotnes_knowledge_2015} for overviews of the area.

Most of the present logical systems in the area assume arbitrary behavior of the agents outside of the proponent coalitions. Thus, these logics formalize \emph{absolute} strategic reasoning about the unconditional strategic abilities of agents to achieve their goals.

Usually, agents have their own goals and act in pursuit of their fulfillment rather than just to prevent the proponents from achieving their goals. This calls for more refined strategic reasoning, \emph{conditional} on agents' knowledge of opponents' goals and possible actions to achieve them.

Goranko and Ju \cite{goranko_towards_2019,goranko_logic_2022} presented a logic $\FCSR$ (Logic for Local Conditional Strategic Reasoning) for reasoning of the type: \textit{For some/every joint action of the coalition $\FAA$ that guarantees its goal, the coalition $\FBB$ has a joint action to guarantee its goal.} \footnote{The notation for this logic is $\mathsf{ConStR}$ in \cite{goranko_towards_2019,goranko_logic_2022}. In this paper, we use a simpler one.} 
The logic $\FCSR$ extends Coalition Logic with the following three featured operators:
\begin{itemize}
\item $\Fchance{\FAA}{\phi}{\FBB}{\psi}$: \emph{$\coA$ has an available joint action $\sigma_\FAA$ such that (1) it guarantees $\phi$ and (2) $\FBB$ has an available joint action $\sigma_\FBB$ that guarantees $\psi$.}
\item $\dere{\FAA}{\FBB}{\phi}{\psi}$: \emph{for every available joint action $\sigma_\FAA$ of $\FAA$ that guarantees $\phi$, $\FBB$ has an available joint action $\sigma_\FBB$ that guarantees $\psi$.}
\item $\dedic{\FAA}{\FBB}{\phi}{\psi}$: \emph{$\FBB$ has an available joint action $\sigma_{\FBB}$ such that for every available joint action $\sigma_\FAA$ of $\FAA$ that guarantees $\phi$, $\sigma_{\FBB}$ guarantees $\psi$.}
\end{itemize}

Note that for these operators, $\FAA$ and $\FBB$ perform their actions simultaneously. The operator $\Fchance{\FAA}{\phi}{\FBB}{\psi}$ makes good sense when $\FAA$ is \emph{cooperating} with $\FBB$, which is why it has the subscript ``c''.
The operators $\dere{\FAA}{\phi}{\FBB}{\psi}$ and $\dedic{\FAA}{\FBB}{\phi}{\psi}$ have close connection to \emph{$\beta$ effecitivity} and \emph{$\alpha$ effectivity} in game theory, respectively, which is why they have subscripts ``$\beta$'' and ``$\alpha$''. 

We refer to \cite{goranko_towards_2019,goranko_logic_2022} for some real scenarios where these operators can be used. In what follows, we give an example to illustrate them.

\begin{example}
\label{example:card}

Alice and Bob are playing a simple one-round game. Alice and Bob have three cards each. On every card, there are two natural numbers, called an F-number and an S-number, respectively. Here, ``F'' is for ``first'' and ``S'' is for ``second''. Both Alice and Bob will show a card at the same time. There are two kinds of winning: F-winning and S-winning, respectively determined by which F-number is greater and which S-number is greater.
\begin{itemize}

\item

Suppose Alice's three cards are $(5,3)$, $(4,8)$, $(1,8)$, and Bob' three cards are $(4,4)$, $(3,2)$, $(2,2)$. Then, the following holds:
\textbf{Alice has a card such that
(1) it guarantees Alice to F-win, and
(2) Bob has a card that guarantees Bob to S-win.}

\emph{Why does the statement hold? Given that Alice is going to show $(5,3)$, (1) Alice will F-win for sure, and (2) given Bob is going to show $(4,4)$, Bob will S-win for sure.
Note that no card lets Bob S-win for sure.
This statement can be expressed by the operator $\Fchance{\FAA}{\phi}{\FBB}{\psi}$.}

\item

Suppose Alice's three cards are $(5,3)$, $(7,2)$, $(1,8)$, and Bob' three cards are $(4,4)$, $(3,3)$, $(2,2)$. Then, the following holds:
\textbf{For every card of Alice, if it guarantees Alice to F-win, then Bob has a card such that it guarantees Bob to S-win.}

\emph{%
Why does the statement hold? Alice has two cards to F-win for sure: $(5,3)$ and $(7,2)$. Given that Alice is going to show $(5,3)$, the card $(4,4)$ will let Bob S-win for sure. Given that Alice is going to show $(7,2)$, the card $(3,3)$ will let Bob S-win for sure.
Again, note that no card lets Bob S-win for sure.
This statement can be expressed by the operator $\dere{\FAA}{\FBB}{\phi}{\psi}$.
}

\item

Suppose Alice's three cards are $(6,3)$, $(5,8)$, $(1,9)$, and Bob' three cards are $(4,6)$, $(3,4)$, $(2,9)$. Then, the following holds:
\textbf{Bob has a card such that, for every card of Alice, if it guarantees Alice to F-win, then Bob's card guarantees Bob to S-win.}

\emph{%
Why does the statement hold? Alice has two cards to F-win for sure: $(6,3)$ and $(5,8)$.
Suppose Bob is going to show $(2,9)$. Then, 
given that Alice is going to show $(6,3)$, Bob will S-win for sure, and given that Alice is going to show $(5,8)$, Bob will S-win for sure.
Again, note that no card lets Bob S-win for sure.
This statement can be expressed by the operator $\dedic{\FAA}{\FBB}{\phi}{\psi}$.
}

\end{itemize}

\end{example}

It is shown in \cite{goranko_towards_2019,goranko_logic_2022} that none of the operators $\Fchance{\FAA}{\phi}{\FBB}{\psi}$, $\dere{\FAA}{\phi}{\FBB}{\psi}$, and $\dedic{\FAA}{\phi}{\FBB}{\psi}$ can be expressed in Coalition Logic and Alternating-time Temporal Logic.
Generally speaking, the reason is that these operators express the dependence of strategies of $\FBB$ on strategies of $\FAA$, which Coalition Logic and Alternating-time Temporal Logic cannot express.

\paragraph{Some related work with $\FCSR$}

In \cite{goranko_socially_2018}, Goranko and Enqvist introduced Group Protecting Coalition Logic $\mathsf{GPCL}$, which focuses on the synchronization of different coalitions to achieve their respective goals. The featured formula of $\mathsf{GPCL}$ is $\Fclo{\FAA_1 \leadto \phi_1, \dots, \FAA_n \leadto \phi_n}$, meaning \emph{there is an action profile such that its restriction to $\FAA_1$ guarantees $\phi_1$, \dots, and its restriction to $\FAA_n$ guarantees $\phi_n$.} The operator $\Fchance{\FAA}{\phi}{\FBB}{\psi}$ of $\FCSR$ is definable in $\mathsf{GPCL}$. Enqvist and Goranko \cite{enqvist_temporal_2022} studied Temporal Logic of Coalitional Goal Assignments $\mathsf{TLCGA}$, which is a temporal extension of $\mathsf{GPCL}$.
In \cite{naumov_intelligence_2021}, Naumov and Yuan presented a logic for reasoning about the powers of coalitions based on their information about the moves of other coalitions. The featured operator of the logic is $[\FBB]_\FAA \psi$, meaning
\emph{for every joint action of $\FAA$, given that $\FAA$ is going to perform it and $\FBB$ distributively knows this, $\FBB$ has a joint action to guarantee $\psi$}.
Without considering knowledge, $[\FBB]_\FAA \psi$ is definable by $\dere{\FAA}{\FBB}{\top}{\psi}$ in $\FCSR$.

\subsection{Our work}

The completeness of $\FCSR$ is still open. Intuitively, this is challenging work: the meaning of the three operators of $\FCSR$ involves complex iterations of quantifiers. In this paper, we want to do some useful work towards showing the completeness of $\FCSR$.

We call the sub-logic of $\FCSR$ with the operator $\Fchance{\FAA}{\phi}{\FBB}{\psi}$ Logic for Cooperating Conditional Strategic Reasoning $\FCCSR$. In this work, we show the completeness of two fragments of $\FCCSR$: the ability fragment and the liability fragment. The former is generated by $\Fchance{\FAA}{\phi}{\FBB}{\psi}$ without negation, and the latter is generated by the dual of $\Fchance{\FAA}{\phi}{\FBB}{\psi}$ without negation.
%

Generally speaking, our proof approach is as follows.
First, we show a normal form lemma, by which every formula can be transformed to a conjunction of some so-called standard disjunctions.
Second, we show a downward validity lemma, which reduces the validity of a standard disjunction to the validity of its subformulas with lower complexity.
Third, we show an upward derivability lemma, which reduces the derivability of a formula to the derivability of its subformulas with lower complexity.
Fourth, we show the completeness by induction.
This approach has some connection to the approach of Goranko and van Drimmelen \cite{goranko_complete_2006}, who showed the completeness of $\FATL$.

The remaining part of the paper is structured as follows. 
Section \ref{section:Preliminaries} provides some preliminaries on concurrent game models, the Coalition Logic $\FCL$, the liability fragment and the ability fragment of $\FCL$, the logic $\FCCSR$ for cooperating conditional strategic reasoning, and the liability fragment and the ability fragment of $\FCCSR$.
Section \ref{section:Our completeness proof strategy} presents our completeness proof approach, and Section \ref{section:A model construction technique} introduces regular abstract game forms and their realization, which will be used later.
Section \ref{section:Completeness of the liability fragment of FCL} shows the completeness of the liability fragment of $\FCL$, and Section \ref{section:Completeness of the ability fragment of FCL} shows the completeness of the liability fragment of $\FCL$. The proofs in the two sections share some similarities with the proofs in the following two sections. The purpose of the two sections is to offer readers some preparations for the proofs in the following two sections.
Section \ref{section:Completeness of the liability fragment of FCCSR} shows the completeness of the liability fragment of $\FCCSR$, and Section \ref{section:Completeness of the ability fragment of FCCSR} shows the completeness of the ability fragment of $\FCCSR$.
We end with brief concluding remarks in Section \ref{section:Concluding remarks}.

\section{Preliminaries}
\label{section:Preliminaries}

In this section, we present models, languages, and semantics of Coalition Logic $\FCL$ and Logic for Cooperating Conditional Strategic Reasoning $\FCCSR$. In addition, we also specify two fragments of $\FCL$ and two fragments of $\FCCSR$.

\subsection{Concurrent game models}

In the sequel, we assume a nonempty finite set of \emph{agents} $\FAG$ and a countable set of atomic propositions $\FAP$.

Let $\FAC$ be a nonempty set of \emph{actions}. What follows are some auxiliary notions and notations.
\begin{itemize}

\item

Each (possibly empty) subset $\FAA$ of $\FAG$ is called a \Fdefs{coalition}. $\FAG$ is called the \Fdefs{grand coalition}. In the sequel, we will often write $a$ instead of $\{a\}$ when no confusion arises.

\item

For every coalition $\FAA$, a function $\sigma_\FAA: \FAA \rightarrow \FAC$ is called a \Fdefs{joint action} of $\FAA$. Note that $\emptyset$ is the only joint action of the empty coalition: the empty set is the only function from the empty coalition to $\FAC$. A joint action of $\FAG$ is called an \Fdefs{action profile}.

For every $\FAA \subseteq \FAG$, we define $\FJA_\FAA = \{\sigma_\FAA \mid \sigma_\FAA: \FAA \rightarrow \FAC\}$, which is the set of joint actions of $\FAA$.
We use $\FJA$ to indicate $\bigcup \{\FJA_\FAA \mid \FAA \subseteq \FAG\}$.

\item

Let $\FAA$ be a coalition and $\FBB \subseteq \FAA$. Let $\sigma_\FAA$ be a joint action of $\FAA$. We use $\sigma_\FAA|_\FBB$ to denote the subset of $\sigma_\FAA$ which is a joint action of $\FBB$, called the \Fdefs{restriction} of $\sigma_\FAA$ to $\FBB$. Respectively, $\sigma_\FAA$ is called an \Fdefs{extension} of $\sigma_\FAA|_\FBB$. Given a set of joint actions $\Sigma_\FAA$ of $\FAA$, we define $\Sigma_\FAA |_\FBB := \{\sigma_\FAA|_\FBB \mid \sigma_\FAA \in \Sigma_\FAA\}$.

\item

Let $\FAA$ and $\FBB$ be two disjoint coalitions, $\Sigma_\FAA$ be a set of joint actions of $\FAA$, and $\Sigma_\FBB$ be a set of joint actions of $\FBB$. We define $\Sigma_\FAA \oplus \Sigma_\FBB$ as $\{\sigma_\FAA \cup \sigma_\FBB \mid \sigma_\FAA \in \Sigma_\FAA \mbox { and } \sigma_\FBB \in \Sigma_\FBB\}$, which is a set of joint actions of $\FAA \cup \FBB$.

Further, let $\{\FAA_i \mid i \in I\}$ be a family of pairwise disjoint coalitions for some (possibly empty) index set $I$. For every $i \in I$, let $\Sigma_{\FAA_i}$ be a set of joint actions of $\FAA_i$.
Define:
\[
\bigoplus \{\Sigma_{\FAA_i} \mid i \in I\} = \{\FBC \Delta \mid \Delta \subseteq \FBC \{\Sigma_{\FAA_i} \mid i \in I\} \text{ and for every } i \in I, \Delta \cap \Sigma_{\FAA_i} \text{ is a singleton}\}
\]
which is a set of joint actions of the coalition $\FBC \{\FAA_i \mid i \in I\}$.

\end{itemize}

\begin{definition}[Concurrent game models]

A \defstyle{concurrent game model} is a tuple $\MM = (\FST, \FAC,$ $ \Faja, \Fout, \Flab)$, where:
\begin{itemize}

\item

$\FST$ is a nonempty set of states.

\item

$\FAC$ is a nonempty set of actions.

\item

$\Faja: \FST \times \mathcal{P}(\FAG) \rightarrow \mathcal{P}(\FJA)$ is a function, called an \Fdefs{available joint action function}, such that for every $s \in \FST$ and $\FAA \subseteq \FAG$:
\begin{itemize}

\item

$\Faja (s,\FAA)$ is a nonempty subset of $\FJA_\FAA$;

\item

$\Faja (s,\FAA) = \bigoplus \{\Faja(s,\{a\}) \mid a \in \FAA\}$, where $\FAA \neq \emptyset$.

\end{itemize}

\emph{Here $\Faja (s,\FAA)$ is the set of all available joint actions of $\FAA$ at $s$.} \emph{Note $\Faja (s,\emptyset) = \{\emptyset\}$.}

\item

$\Fout: \FST \times \FJA \rightarrow \mathcal{P}(\FST)$ is a function, called an \Fdefs{outcome state function}, such that for every $s \in \FST$, $\FAA \subseteq \FAG$, and $\sigma_\FAA \in \FJA_\FAA$:
\begin{itemize}

\item

if $\FAA = \FAG$ and $\sigma_\FAA \in \Faja (s,\FAA)$, then $\Fout (s, \sigma_\FAA)$ is a singleton;

\item

if $\FAA = \FAG$ and $\sigma_\FAA \notin \Faja (s,\FAA)$, then $\Fout (s, \sigma_\FAA) = \emptyset$;

\item

if $\FAA \neq \FAG$ and $\sigma_\FAA \in \Faja (s,\FAA)$, then $\Fout (s, \sigma_\FAA) = \FBC \{\Fout (s, \sigma_\FAG) \mid \sigma_\FAG \in \Faja (s,\FAG) \text{ and}$ $\sigma_\FAA \subseteq \sigma_\FAG\}$;

\item

if $\FAA \neq \FAG$ and $\sigma_\FAA \notin \Faja (s,\FAA)$, then $\Fout (s, \sigma_\FAA) = \emptyset$.

\end{itemize}

\emph{Here $\Fout (s, \sigma_\FAA)$ is the set of outcome states of performing $\sigma_\FAA$ at $s$.} \emph{Note $\Fout (s,\emptyset) = \FBC \{\Fout (s, \sigma_\FAG) \mid \sigma_\FAG \in \Faja (s,\FAG)\}$.}

\item

$\Flab: \FST \rightarrow \mathcal{P}(\FAP)$ is a labeling function.

\end{itemize}

\end{definition}

For any model $\MM$ and state $s$ of $\MM$, $(\MM,s)$ is called a \Fdefs{pointed model}.

\subsection{Coalition Logic $\FCL$ and its two fragments}

\begin{definition}[The language $\Phi_{\FCL}$]

The language $\Phi_{\FCL}$ is defined as follows:
\[
\phi ::=\top \mid p \mid \neg \phi \mid (\phi \wedge \phi) \mid \Fclo{\FAA} \phi
\]

\end{definition}

The propositional connectives $\bot, \lor, \rightarrow$, and $\leftrightarrow$ are defined as usual.
Define the dual $\Fclod{\FAA} \phi$ of $\Fclo{\FAA} \phi$ as $\neg \Fclo{\FAA} \neg \phi$, whose intuitive reading is: \emph{every available joint action of $\FAA$ is liable to $\phi$}.

Here, the notations of $\Fclo{\cdot} \phi$ and $\Fclod{\cdot}$ of the two modalities differ from their notations in the literature. We do this to indicate the iterations of quantifiers in their meaning. 

\begin{definition}[Semantics of $\Phi_{\FCL}$]
~

\begin{center}

\begin{tabular}{lll}
$\MM, s \Vdash \top$ & & \\
$\MM, s \Vdash \neg \phi$ & $\Leftrightarrow$ & \parbox[t]{27em}{not $\MM, s \Vdash \phi$} \\
$\MM, s \Vdash \phi \land \psi$ & $\Leftrightarrow$ & \parbox[t]{27em}{$\MM, s \Vdash \phi$ and $\MM, s \Vdash \psi$} \\
$\MM, s \Vdash \Fclo{\FAA} \phi$ & $\Leftrightarrow$ & \parbox[t]{27em}{$\FAA$ has a joint action $\sigma_\FAA$ in $\Faja (s, \FAA)$ such that $\MM, t \Vdash \phi$ for every $t \in \Fout (s, \sigma_\FAA)$}
\end{tabular}

\end{center}

\end{definition}

It can be verified:

\begin{center}
\begin{tabular}{lll}
$\MM, s \Vdash \Fclod{\FAA} \phi$ & $\Leftrightarrow$ & \parbox[t]{27em}{for all joint actions $\sigma_\FAA$ of $\FAA$ in $\Faja (s, \FAA)$, $\MM, t \Vdash \phi$ for some $t \in \Fout (s, \sigma_\FAA)$}
\end{tabular}
\end{center}

Here are some notations that will be used later:
\begin{itemize}
\item We use $\MM, S \Vdash \phi$ to state that for all $s \in S$, $\MM, s \Vdash \phi$, where $S$ is a set of states.
\item We use $\ja{\FAA} \leadto_s \phi$ to state that $\MM, t \Vdash \phi$ for all $t \in \Fout (s, \ja{\FAA})$.
The meaning of $\ja{\FAA} \leadto_s \Gamma$ is as expected, where $\Gamma$ is a set of formulas.
\item We use $\ja{\FAA} \Fena_s \phi$ to state that $\MM, t \Vdash \phi$ for some $t \in \Fout (s, \ja{\FAA})$.
\end{itemize}

\medskip

\begin{definition}[The liability fragment $\Phi_{\FCLn}$ of $\Phi_{\FCL}$]

The liability fragment $\Phi_{\FCLn}$ of $\Phi_{\FCL}$ is defined as follows:
\[
\phi ::= \top \mid \bot \mid p \mid \neg p \mid (\phi \wedge \phi) \mid (\phi \vee \phi) \mid \Fclod{\FAA} \phi
\]

\end{definition}

\begin{definition}[The ability fragment $\Phi_{\FCLp}$ of $\Phi_{\FCL}$]

The ability fragment $\Phi_{\FCLp}$ of $\Phi_{\FCL}$ is defined as follows:
\[
\phi ::= \top \mid \bot \mid p \mid \neg p \mid (\phi \wedge \phi) \mid (\phi \vee \phi) \mid \Fclo{\FAA}{\phi}
\]

\end{definition}

Here, ``LI'' and ``AB'' in ``$\FCLn$'' and ``$\FCLp$'' are for ``liability'' and ``ability'', respectively.

\subsection{Logic for Cooperating Conditional Strategic Reasoning $\FCCSR$ and its two fragments}

\begin{definition}[The language $\Phi_{\FCCSR}$]

The language $\Phi_{\FCCSR}$ is defined as follows:
\[
\phi ::= p \mid \top \mid \neg \phi \mid (\phi \land \phi) \mid \Fchance{\FAA}{\phi}{\FBB}{\phi}
\]

\end{definition}

Define the dual $\Fchanced{\FAA}{\phi}{\FBB}{\psi}$ of $\Fchance{\FAA}{\phi}{\FBB}{\psi}$ as $\neg \Fchance{\FAA}{\neg \phi}{\FBB}{\neg \psi}$, whose intuitive reading is: \emph{every availabe joint action of $\FAA$ that guarantees $\neg \phi$ also prevents $\FBB$ to guarantee $\neg \psi$}. An alternative reading of it is: \emph{for every available joint action of $\FAA$, either it is liable to $\phi$, or for every joint action of $\FBB$, it is liable to $\psi$.}

This operator formalizes the conditional reasoning scenario where the goals of $\FAA$ and $\FBB$ are conflicting, and whichever way $\FAA$ acts towards the guaranteed achievement of its goal, that would block $\FBB$ from acting to guarantee the achievement of its goal.

Fix a concurrent game model $\MM = (\FST, \FAC, \Faja, \Fout, \Flab)$. For all coalitions $\FAA$ and $\FBB$, and joint actions $\sigma_\FAA$ of $\FAA$ and $\sigma_\FBB$ of $\FBB$, define $\sigma_\FAA \uplus \sigma_\FBB = \sigma_\FAA \cup \{(b, \beta) \in \sigma_\FBB \mid b \in \FBB-\FAA \text{ and } \beta \in \FAC\}$, which is a joint action of $\FAA \cup \FBB$.

\begin{definition}[Semantics of $\Phi_{\FCCSR}$]
~
\begin{center}

\begin{tabular}{lll}
$\MM, s \Vdash \top$ & & \\
$\MM, s \Vdash \neg \phi$ & $\Leftrightarrow$ & \parbox[t]{27em}{not $\MM, s \Vdash \phi$} \\
$\MM, s \Vdash \phi \land \psi$ & $\Leftrightarrow$ & \parbox[t]{27em}{$\MM, s \Vdash \phi$ and $\MM, s \Vdash \psi$} \\
$\MM, s \Vdash \Fchance{\FAA}{\phi}{\FBB}{\psi}$ & $\Leftrightarrow$ & \parbox[t]{27em}
{
$\FAA$ has a joint action $\sigma_\FAA$ in $\Faja (s,\FAA)$ such that (1) $\MM, t \Vdash \phi$ for every $t \in \Fout (s, \sigma_\FAA)$, and (2) $\FBB$ has a joint action $\sigma_\FBB$ in $\Faja (s,\FBB)$ such that $\MM, t \Vdash \psi$ for every $t \in \Fout (s, \sigma_\FAA \uplus \sigma_\FBB)$
}
\end{tabular}
\end{center}

\end{definition}

It can be verified:

\begin{center}
\begin{tabular}{lll}
$\MM, s \Vdash \Fchanced{\FAA}{\phi}{\FBB}{\psi}$ & $\Leftrightarrow$ & \parbox[t]{27em}{for every joint action $\sigma_\FAA$ of $\FAA$ in $\Faja (s,\FAA)$, either (1) $\MM, t \Vdash \phi$ for some $t \in \Fout (s, \sigma_\FAA)$, or (2) for every joint action $\sigma_\FBB$ of $\FBB$ in $\Faja (s,\FBB)$, $\MM, t \Vdash \psi$ for some $t \in \Fout (s, \sigma_\FAA \uplus \sigma_\FBB)$}
\end{tabular}
\end{center}

We say that (1) a formula $\phi$ is \Fdefs{satisfiable} if for some pointed model $(\MM,s)$, $\MM, s \Vdash \phi$, (2) $\phi$ is \Fdefs{valid} if for every pointed model $(\MM,s)$, $\MM, s \Vdash \phi$, and (3) $\phi$ is \Fdefs{equivalent} to a formula $\phi'$ if $\phi \leftrightarrow \phi'$ is valid.

\medskip

The following fact can be easily shown, and we skip its proof:

\begin{fact}
~
\begin{enumerate}
\item $\MM, s \Vdash \Fchance{\FAA}{\phi}{\FBB}{\psi}$ if and only if $\FAA \cup \FBB$ has a joint action $\sigma_{\FAA \cup \FBB}$ in $\Faja (s,\FAA \cup \FBB)$ such that (1) $\MM, t \Vdash \phi$ for every $t \in \Fout (s, \sigma_{\FAA \cup \FBB} |_{\FAA})$, and (2) $\MM, t \Vdash \psi$ for every $t \in \Fout (s, \sigma_{\FAA \cup \FBB})$.
\item The following formulas are equivalent:
\begin{enumerate}
\item $\Fchance{\FAA}{\phi}{\FBB}{\psi}$
\item $\Fchance{\FAA}{\phi}{\FBB}{(\phi \land \psi)}$
\item $\Fchance{\FAA}{\phi}{\FBB-\FAA}{\psi}$
\item $\Fchance{\FAA}{\phi}{\FAA \cup \FBB}{\psi}$
\end{enumerate}
\item The operator $\Fclo{\FAA} \phi$ in $\FCL$ is definable as $\Fchance{\FAA}{\phi}{\emptyset}{\top}$.
\end{enumerate}

\end{fact}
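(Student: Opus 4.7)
The plan is to prove each of the three claims by directly unfolding the semantics of $\Fchance{\FAA}{\phi}{\FBB}{\psi}$ and invoking two structural observations about concurrent game models: (i) availability decomposes pointwise, $\Faja(s, \FAA \cup \FBB) = \bigoplus\{\Faja(s,\{a\}) \mid a \in \FAA \cup \FBB\}$, so a joint action of $\FAA \cup \FBB$ is available iff each of its singleton restrictions is; and (ii) because $\sigma_\FAA \subseteq \sigma_\FAA \uplus \sigma_\FBB$ as a set of agent-action pairs, any action profile extending the right-hand side also extends the left-hand side, whence $\Fout(s, \sigma_\FAA \uplus \sigma_\FBB) \subseteq \Fout(s, \sigma_\FAA)$ and $\sigma_\FAA \uplus \sigma_\FBB \in \Faja(s, \FAA \cup \FBB)$ whenever $\sigma_\FAA \in \Faja(s,\FAA)$ and $\sigma_\FBB \in \Faja(s,\FBB)$.

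For item~1, the forward direction takes the witnesses $\sigma_\FAA, \sigma_\FBB$ supplied by the semantic clause and sets $\sigma_{\FAA \cup \FBB} := \sigma_\FAA \uplus \sigma_\FBB$; availability follows from (i) and (ii), while clauses~(1) and~(2) are immediate after noting $\sigma_{\FAA \cup \FBB}|_\FAA = \sigma_\FAA$ and $\Fout(s, \sigma_{\FAA \cup \FBB}) = \Fout(s, \sigma_\FAA \uplus \sigma_\FBB)$. Conversely, given a suitable $\sigma_{\FAA \cup \FBB}$, set $\sigma_\FAA := \sigma_{\FAA \cup \FBB}|_\FAA$ and $\sigma_\FBB := \sigma_{\FAA \cup \FBB}|_\FBB$; their availability comes from (i), and $\sigma_\FAA \uplus \sigma_\FBB = \sigma_{\FAA \cup \FBB}$ because the pairs in $\sigma_\FBB$ indexed by agents in $\FAA \cap \FBB$ coincide with those already in $\sigma_\FAA$.

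For item~2, the equivalence (a)$\Leftrightarrow$(b) falls out of (ii): if $\phi$ holds throughout $\Fout(s, \sigma_\FAA)$, then it also holds throughout $\Fout(s, \sigma_\FAA \uplus \sigma_\FBB)$, so replacing $\psi$ by $\phi \land \psi$ is free. For (a)$\Leftrightarrow$(c) and (a)$\Leftrightarrow$(d), the point is that by the definition of $\uplus$, pairs $(b,\beta) \in \sigma_\FBB$ with $b \in \FAA$ are discarded, so $\sigma_\FAA \uplus \sigma_\FBB$ depends on $\sigma_\FBB$ only through $\sigma_\FBB|_{\FBB \setminus \FAA}$. One therefore freely replaces the quantifier over joint actions of $\FBB$ by a quantifier over joint actions of $\FBB \setminus \FAA$, or of $\FAA \cup \FBB$, each time using (i) to transport availability between $\FBB$, $\FBB \setminus \FAA$, and any extension of the latter. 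Item~3 is then immediate: $\Faja(s,\emptyset) = \{\emptyset\}$ is always available, $\sigma_\FAA \uplus \emptyset = \sigma_\FAA$, and $\top$ holds vacuously, so the semantic clauses for $\Fchance{\FAA}{\phi}{\emptyset}{\top}$ and $\Fclo{\FAA}\phi$ coincide literally.

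The whole fact is bookkeeping; the only subtlety, and therefore the main point requiring care in writing, is the asymmetry of $\uplus$ on $\FAA \cap \FBB$ — $\FAA$'s choices override $\FBB$'s on the overlap — which is why items~2(c) and~2(d) must be established by a small calculation with $\uplus$ rather than inferred from a naive monotonicity of $\Fchance{\FAA}{\phi}{-}{\psi}$ in its second coalition argument.
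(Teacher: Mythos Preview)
Your proposal is correct. The paper itself does not give a proof of this fact---it says ``The following fact can be easily shown, and we skip its proof''---so there is nothing to compare against; your argument is exactly the kind of routine unfolding the authors had in mind, and your explicit attention to the asymmetry of $\uplus$ on $\FAA \cap \FBB$ is the one point where carelessness could cause trouble.
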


\medskip

\begin{definition}[The liability fragment $\Phi_{\FCCSRn}$ of $\Phi_{\FCCSR}$]

The liability fragment $\Phi_{\FCCSRn}$ of $\Phi_{\FCCSR}$ is defined as follows:
\[
\phi ::= \top \mid \bot \mid p \mid \neg p \mid (\phi \wedge \phi) \mid (\phi \vee \phi) \mid \Fchanced{\FAA}{\phi}{\FBB}{\phi}
\]

\end{definition}

\begin{definition}[The ability fragment $\Phi_{\FCCSRp}$ of $\Phi_{\FCCSR}$]

The ability fragment $\Phi_{\FCCSRp}$ of $\Phi_{\FCCSR}$ is defined as follows:
\[
\phi ::= \top \mid \bot \mid p \mid \neg p \mid (\phi \wedge \phi) \mid (\phi \vee \phi) \mid \Fchance{\FAA}{\phi}{\FBB}{\phi}
\]

\end{definition}

\newcommand{\FSD}{\mathsf{SD}}

\newcommand{\FSC}{\mathsf{SC}}

\section{Our proof strategy}
\label{section:Our completeness proof strategy}

In this section, we present our approach to show the completeness of the following four fragments: the liability fragment $\FCLn$ of $\FCL$, the ability fragment $\FCLp$ of $\FCL$, the liability fragment $\FCCSRn$ of $\FCCSR$, and the ability fragment $\FCCSRp$ of $\FCCSR$.

We use $\FXL$ to indicate an arbitrary member among the four fragments. We assume the usual notion of \Fdefs{modal depth} of formulas of $\FXL$.
We will show the completeness of $\FXL$ in four steps.

First, we show a \Fdefs{normal form lemma} for $\FXL$: \emph{for every $\phi$ in $\Phi_\FXL$, there is $\phi'$ in $\Phi_\FXL$ such that: (1) $\models \phi$ if and only if $\models \phi'$, (2) $\vdash_\FXL \phi$ if and only if $\vdash_\FXL \phi'$, (3) $\phi'$ has the same modal depth with $\phi$, and (4) $\phi'$ is in a normal form $\FSD_0 \land \dots \land \FSD_n$, where every $\FSD_i$ is a so-called \Fdefs{standard disjunction} for $\FXL$.}

Second, we show a \Fdefs{downward validity lemma} for $\FXL$: \textit{for every standard disjunction $\FSD$ for $\FXL$, if $\models \FSD$, then a \Fdefs{validity-reduction condition} of $\FSD$ is met.} Here, the validity-reduction condition of $\FSD$ concerns the validity of some formulas with lower modal depth than $\FSD$.
This step is crucial for the whole proof.
 
Third, we show a \Fdefs{upward derivability lemma} for $\FXL$: \textit{for every standard disjunction $\FSD$ for $\FXL$, if a \Fdefs{derivability-reduction condition} of $\FSD$ is met, then $\vdash_\FXL \FSD$.} Here, the derivability-reduction condition of $\FSD$ is the result of replacing \emph{validity} in the validity-reduction condition of $\FSD$ by \emph{derivability}.

Fourth, we show by induction that \emph{for every $\phi$ in $\Phi_\FXL$, if $\models \FSD$, then $\vdash_\FXL \FSD$}.
The arguments go as follows:
\begin{itemize}

\item

Let $\phi$ be a formula in $\Phi_\FXL$. Assume $\models \phi$. We want to show that $\vdash_\FXL \phi$. We put an induction on the modal depth $n$ of $\phi$.

\item

Assume $n = 0$. Then, $\phi$ is a formula of the classical propositional logic. As $\FXL$ extends the classical propositional logic, $\vdash_\FXL \phi$.

\item

Assume $n > 0$.

By the normal form lemma, there is $\phi'$ in $\Phi_\FXL$ meeting the following conditions: (1) $\models \phi$ if and only if $\models \phi'$, (2) $\vdash_\FXL \phi$ if and only if $\vdash_\FXL \phi'$, (3) $\phi'$ has the same modal depth with $\phi$, and (4) $\phi'$ is in a normal form $\FSD_0 \land \dots \land \FSD_n$, where every $\FSD_i$ is a standard disjunction.
Pick a $\FSD_i$. Then, $\models \FSD_i$. It suffices to show $\vdash_\FXL \FSD_i$.

Assume the modal depth of $\FSD_i$ is less than $n$. Then, by the inductive hypothesis, $\vdash_\FXL \FSD_i$.

Assume the modal depth of $\FSD_i$ is $n$. By the downward validity lemma, the validity-reduction condition of $\FSD$ is met. By the inductive hypothesis, the derivability-reduction condition of $\FSD$ is met. By the upward derivability lemma, $\vdash_\FXL \FSD_i$.

\end{itemize}

\paragraph{Remarks}

\emph{
Given the soundness and completeness of $\FXL$, we can see that for every standard disjunction $\FSD$, (1) $\FSD$ is valid if and only if the validity-reduction condition of $\FSD$ is met, and (2) $\FSD$ is derivable if and only if the derivability-reduction condition of $\FSD$ is met. Therefore, the names ``validity-reduction condition'' and ``derivability-reduction condition'' are correctly given.
}

\emph{
By the normal form lemma and the validity-reduction condition of standard disjunctions, we can reduce the validity problem of formulas of $\FXL$ to the validity problem of formulas of the classical propositional logic. It follows that $\FXL$ is decidable.
}

\section{Regular abstract game forms and their realization}
\label{section:A model construction technique}

In this section, we define regular abstract game forms and give a result about them, which will be used to show the downward validity lemmas for the four fragments.

Let $\FXL$ be a fragment among the four fragments.

We call a conjunction of literals in the classical propositional logic an \Fdefs{elementary conjunction}.

\begin{definition}[Regular abstract game forms]

An \defstyle{abstract game form} is a tuple $\FAGF = (\FAC_0, \Fforce)$ meeting the following conditions, where for every $\FAA \subseteq \FAG$, $\FJA^0_\FAA$ is the set of joint actions of $\FAA$ with respect to $\FAC_0$, and $\FJA^0 = \FBC \{\FJA^0_ \FAA \mid \FAA \subseteq \FAG\}$:
\begin{enumerate}[label=(\arabic*),leftmargin=3.33em]
\item $\FAC_0$ is a nonempty set of actions;
\item $\Fforce$ is a function from $\FJA^0$ to $\powerset{\Phi_{\FXL}}$.
\end{enumerate}

An abstract game form $\FAGF = (\FAC_0, \Fforce)$ is \Fdefs{regular} if:
\begin{enumerate}[label=(\arabic*),leftmargin=3.33em]
\item for every $\sigma \in \FJA$, $\Fforce (\sigma)$ is satisfiable;
\item $\Fforce$ is \Fdefs{monotonic}: for all $\FAA, \FAA' \subseteq \FAG$ such that $\FAA \subseteq \FAA'$, for all $\ja{\FAA} \in \FJA^0_\FAA$ and $\ja{\FAA'} \in \FJA^0_{\FAA'}$ such that $\ja{\FAA} \subseteq \ja{\FAA'}$, $\Fforce (\ja{\FAA}) \subseteq \Fforce (\ja{\FAA'})$.
\end{enumerate}

\end{definition}

Intuitively, $\Fforce (\sigma)$ consists of some formulas which $\sigma$ can \emph{ensure}.

\begin{definition}[Realization of regular abstract game forms and satisfiable elementary conjunctions]
\label{definition:Realization}

Let $\FAGF = (\FAC_0, \Fforce)$ be a regular abstract game form and $\gamma$ be a satisfiable elementary conjunction.

We say a pointed model $(\MM,s_0)$, where $\MM = (\FST, \FAC, \Faja, \Fout, \Flab)$, \Fdefs{realizes} $\FAGF$ and $\gamma$, if the following conditions are met:
\begin{enumerate}[label=(\arabic*),leftmargin=3.33em]

\item 

$\FAC_0 \subseteq \FAC$;

\item 

$\Faja (s_0, \FAA) = \FJA^0_\FAA$ for every $\FAA \subseteq \FAG$;

\item 

$\MM, s_0 \Vdash \gamma$;

\item 

$\ja{\FAA} \leadto_{s_0} \Fforce (\ja{\FAA})$ for all $\FAA \subseteq \FAG$ and $\ja{\FAA} \in \FJA^0_\FAA$.

\end{enumerate}

\end{definition}

\begin{theorem}[Realizability of regular abstract game forms and satisfiable elementary conjunctions]
\label{theorem:Realizability of abstract game forms}

Every regular abstract game form and satisfiable elementary conjunction is realized by a pointed concurrent game model.

\end{theorem}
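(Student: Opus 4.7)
The plan is to build $\MM$ by disjoint-union surgery: I use regularity clause (1) to pick one witnessing pointed model for each action profile in $\FJA^0_\FAG$, glue them under a fresh root $s_0$, and wire up availability and outcome at $s_0$ so that Definition~\ref{definition:Realization} falls out. For each $\sigma_\FAG \in \FJA^0_\FAG$ I choose a pointed concurrent game model $(\MM_{\sigma_\FAG}, t_{\sigma_\FAG})$ with $\MM_{\sigma_\FAG}, t_{\sigma_\FAG} \Vdash \Fforce(\sigma_\FAG)$, arranging the state spaces to be pairwise disjoint and disjoint from $\{s_0\}$.

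The model $\MM$ will then have state set $\{s_0\} \cup \bigcup_{\sigma_\FAG} \FST_{\sigma_\FAG}$ and action set $\FAC := \FAC_0 \cup \bigcup_{\sigma_\FAG} \FAC_{\sigma_\FAG}$, inheriting $\Faja$, $\Fout$, $\Flab$ from each $\MM_{\sigma_\FAG}$ at sub-model states. At $s_0$ I would set $\Faja(s_0, \{a\}) := \{\{(a, \alpha)\} \mid \alpha \in \FAC_0\}$ for each agent $a$, so that the $\bigoplus$ clause of the model definition forces $\Faja(s_0, \FAA) = \FJA^0_\FAA$ for every coalition $\FAA$; I would declare $\Fout(s_0, \sigma_\FAG) := \{t_{\sigma_\FAG}\}$ for every $\sigma_\FAG \in \FJA^0_\FAG$ (the other cases of $\Fout(s_0, \cdot)$ being then forced by the model axioms); and I would take $\Flab(s_0)$ to be the set of atoms occurring positively in $\gamma$, which is consistent since $\gamma$ is satisfiable and so contains no complementary pair of literals. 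Clauses (1)--(3) of the realization definition then hold immediately: (1) and (2) by construction, and (3) because the truth of the elementary conjunction $\gamma$ at $s_0$ depends only on $\Flab(s_0)$. For clause (4), unfolding $\Fout(s_0, \ja{\FAA})$ yields $\{t_{\sigma_\FAG} \mid \sigma_\FAG \in \FJA^0_\FAG,\ \ja{\FAA} \subseteq \sigma_\FAG\}$, and for any such $t_{\sigma_\FAG}$ monotonicity of $\Fforce$ gives $\Fforce(\ja{\FAA}) \subseteq \Fforce(\sigma_\FAG)$, while $\MM_{\sigma_\FAG}, t_{\sigma_\FAG} \Vdash \Fforce(\sigma_\FAG)$ by construction.

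The main obstacle is transporting the satisfaction $\MM_{\sigma_\FAG}, t_{\sigma_\FAG} \Vdash \psi$ to $\MM, t_{\sigma_\FAG} \Vdash \psi$ for every $\psi \in \Fforce(\sigma_\FAG)$, since $\MM$ now carries extraneous states and, at $s_0$, extraneous actions. I would discharge this via a generated-submodel lemma for concurrent game models, established by induction on formula complexity uniformly for the modalities $\Fclo{\FAA}$, $\Fclod{\FAA}$, $\Fchance{\FAA}{\cdot}{\FBB}{\cdot}$, and $\Fchanced{\FAA}{\cdot}{\FBB}{\cdot}$: the truth value of any $\Phi_\FXL$-formula at a state $s$ depends only on $\Faja$, $\Fout$, $\Flab$ at $s$ and at states reachable from $s$ via iterated $\Fout$. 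Because no transition from any state of $\FST_{\sigma_\FAG}$ ever returns to $s_0$ or enters a foreign $\FST_{\sigma'_\FAG}$, the states reachable from $t_{\sigma_\FAG}$ in $\MM$ coincide with those in $\MM_{\sigma_\FAG}$, and the extra actions available only at $s_0$ are irrelevant at $t_{\sigma_\FAG}$. The lemma then transfers $\Fforce(\sigma_\FAG)$ into $\MM$, completing clause (4).
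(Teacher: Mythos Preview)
Your proposal is correct and follows essentially the same approach as the paper's proof: choose a witnessing pointed model for each action profile using regularity clause (1), glue them under a fresh root $s_0$ with $\Faja(s_0,\FAA)=\FJA^0_\FAA$ and $\Fout(s_0,\sigma_\FAG)=\{t_{\sigma_\FAG}\}$, label $s_0$ by the positive atoms of $\gamma$, and then invoke a generated-submodel argument together with monotonicity of $\Fforce$ to verify clause (4). The paper additionally arranges the action sets $\FAC_{\sigma_\FAG}$ and $\FAC_0$ to be pairwise disjoint, which you omit, but this is a cosmetic convenience rather than a necessity for the argument.
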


\begin{proof}

Let $\FAGF = (\FAC_0, \Fforce)$ be an abstract game form and $\gamma$ be a satisfiable elementary conjunction.

-----------------------------------

Let $\{(\MM_{\ja{\FAG}}, s_{\ja{\FAG}}) \mid \ja{\FAG} \in \FJA^0_\FAG \}$, where $\MM_{\ja{\FAG}} = (\FST_{\ja{\FAG}}, \FAC_{\ja{\FAG}}, \Faja_{\ja{\FAG}}, \Fout_{\ja{\FAG}}, \Flab_{\ja{\FAG}})$, be a class of pointed models meeting the following conditions:
\begin{enumerate}[label=(\arabic*),leftmargin=3.33em]

\item 

for all $\ja{\FAG} \in \FJA^0_\FAG$, $\MM_{\ja{\FAG}}, s_{\ja{\FAG}} \Vdash \Fforce (\ja{\FAG})$;

\item 

all $\FST_{\ja{\FAG}}$ are pairwise disjoint;

\item 

all $\FAC_{\ja{\FAG}}$ and $\FAC_0$ are pairwise disjoint.

\end{enumerate}

Let $s_0$ be a state not in any $\FST_{\ja{\FAG}}$.

Define a model $\MM = (\FST, \FAC, \Faja, \Fout, \Flab)$ as follows:
\begin{itemize}

\item $\FST =\{s_0\} \cup \FBC \{\FST_{\ja{\FAG}} \mid \ja{\FAG} \in \FJA^0_\FAG\}$;

\item $\FAC =\FAC_{0} \cup \FBC \{\FAC_{\ja{\FAG}} \mid \ja{\FAG} \in \FJA^0_\FAG\}$;

\item for every $s \in \FST$ and $\FAA \subseteq \FAG$:
\[
\Faja (s, \FAA) =
\begin{cases}
\Faja_{\sigma_\FAG} (s, \FAA) & \text{if } s \in \FST_{\sigma_\FAG} \text{ for some } \ja{\FAG} \in \FJA^0_\FAG \\
\FJA^0_\FAA & \text{if } s = s_0
\end{cases}
\]

\item $\Fout: \FST \times \FJA \to \powerset{\FST}$ is such that for every $s \in \FST$, $\FAA \subseteq \FAG$, and $\ja{\FAA} \in \FJA_\FAA$:
\[
\Fout (s, \ja{\FAA}) =
\]
\[
\begin{cases}
\{s_{\ja{\FAA}}\} & \text{if } \FAA = \FAG, s = s_0, \text{and } \ja{\FAA} \in \Faja_0 (s,\FAA) \\
\emptyset & \text{if } \FAA = \FAG, s = s_0, \text{and } \ja{\FAA} \notin \Faja_0 (s,\FAA) \\
\Fout_{\ja{\FAG}} (s, \sigma_\FAA) & \text{if } \FAA = \FAG, s \in \FST_{\sigma_\FAG} \text{ for some } \ja{\FAG} \in \FJA^0_\FAG, \text{and } \ja{\FAA} \in \Faja_{\sigma_\FAG} (s,\FAA) \\
\emptyset & \text{if } \FAA = \FAG, s \in \FST_{\sigma_\FAG} \text{ for some } \ja{\FAG} \in \FJA^0_\FAG, \text{and } \ja{\FAA} \notin \Faja_{\sigma_\FAG} (s,\FAA) \\
\FBC \{\Fout (s, \sigma_\FAG) \mid \sigma_\FAA \subseteq \sigma_\FAG\} & \text{otherwise} \\
\end{cases}
\]

\item $\Flab: \FST \to \powerset{\FAP}$ is such that for every $s \in \FST$:
\[
\Flab (s) =
\begin{cases}
\Flab_{\sigma_\FAG} (s) & \text{if } s \in \FST_{\sigma_\FAG} \text{ for some } \ja{\FAG} \in \FJA^0_\FAG \\
\{p \mid \text{$p$ is a conjunct of $\gamma$}\} & \text{if } s = s_0 \\
\end{cases}
\]

\end{itemize}

It is easy to check that $(\MM,s_0)$ is a pointed concurrent game model.
Figure \ref{figure:grafted pointed cgm} explains $(\MM,s_0)$.

\begin{figure}
\begin{center}

\begin{tikzpicture}
[
->=stealth,
scale=1,
every node/.style={transform shape},
]
\tikzstyle{every state}=[minimum size=12mm]

\node[state] (s-1) {$s_{\sigma_\FAG^1}$};
\node[state,position=0:{66mm} from s-1] (s-n) {$s_{\sigma_\FAG^n}$};
\node[position=0:{33mm} from s-1] (s-ghost) {\dots};

\node[position=225:{20mm} from s-1] (s-1-1) {};
\node[position=315:{20mm} from s-1] (s-1-n) {};
\node[position=270:{11mm} from s-1] (s-1-ghost) {\dots};

\node[position=225:{20mm} from s-n] (s-n-1) {};
\node[position=315:{20mm} from s-n] (s-n-n) {};
\node[position=270:{11mm} from s-n] (s-n-ghost) {\dots};

\node[above=6mm] (a-s-1) at (s-1) {$\{\dots\}$};
\node[above=6mm] (a-s-n) at (s-n) {$\{\dots\}$};

\path
(s-1) edge [right] node {} (s-1-1)
(s-1) edge [left] node {} (s-1-n)

(s-n) edge [right] node {} (s-n-1)
(s-n) edge [left] node {} (s-n-n)
;

\node[position=90:{8mm} from s-1] (pm1) {$(\MM_{\sigma^1_\FAG},s_{\sigma^1_\FAG})$};
\node[position=90:{8mm} from s-n] (pmn) {$(\MM_{\sigma^n_\FAG},s_{\sigma^n_\FAG})$};

\end{tikzpicture}

\medskip

-----------------------------------------------

\medskip

\begin{tikzpicture}
[
->=stealth,
scale=1,
every node/.style={transform shape},
]
\tikzstyle{every state}=[minimum size=12mm]

\node[state] (s) {$s_0$};

\node[state,position=210:{35mm} from s] (s-1) {$s_{\sigma_\FAG^1}$};
\node[state,position=330:{35mm} from s] (s-n) {$s_{\sigma_\FAG^n}$};
\node[position=270:{17mm} from s] (s-ghost) {\dots};

\node[position=225:{20mm} from s-1] (s-1-1) {};
\node[position=315:{20mm} from s-1] (s-1-n) {};
\node[position=270:{11mm} from s-1] (s-1-ghost) {\dots};

\node[position=225:{20mm} from s-n] (s-n-1) {};
\node[position=315:{20mm} from s-n] (s-n-n) {};
\node[position=270:{11mm} from s-n] (s-n-ghost) {\dots};

\node[above=6mm] (a-s) at (s) {$\{q,r\}$};

\node[above=6mm] (a-s-1) at (s-1) {$\{\dots\}$};
\node[above=6mm] (a-s-n) at (s-n) {$\{\dots\}$};

\path
(s) edge [right] node {$\sigma_\FAG^1$} (s-1)
(s) edge [left] node {$\sigma_\FAG^n$} (s-n)

(s-1) edge [right] node {} (s-1-1)
(s-1) edge [left] node {} (s-1-n)

(s-n) edge [right] node {} (s-n-1)
(s-n) edge [left] node {} (s-n-n)
;

\node[position=90:{8mm} from s] (pm) {$(\MM,s_0)$};

\end{tikzpicture}

\caption{This figure explains the pointed model $(\MM,s_0)$ constructed in the proof for Theorem \ref{theorem:Realizability of abstract game forms}.
We assume $\FJA^0_\FAG = \{ \sigma^1_\FAG, \dots, \sigma^n_\FAG \}$ and $\gamma = \neg p \land q \land r$.
Above the separation line, there are $n$ pointed models: $(\MM_{\sigma_\FAG^1},s_{\sigma_\FAG^1}), \dots, (\MM_{\sigma_\FAG^n},s_{\sigma_\FAG^n})$.
For every $i$ such that $1 \leq i \leq n$, $\MM_{\sigma_\FAG^i},s_{\sigma_\FAG^i} \Vdash \Fforce (\sigma_\FAG^i)$.
The pointed model $(\MM,s_0)$ is depicted below the separation line.
In the model $\MM$:
(1) available joint actions of a coalition $\FAA$ at the state $s_0$ is specified by $\FJA^0_\FAA$; available joint actions of $\FAA$ at other states are as in the corresponding models;
(2) for every $i$ such that $1 \leq i \leq n$, the outcome state of the available joint action $\sigma^i_\FAG$ of $\FAG$ at the state $s_0$ is $s_{\sigma^i_\FAG}$;
outcome states of available joint actions of $\FAG$ at other states are as in the corresponding models;
outcome states of available joint actions of other coalitions at a state are derived as usual;
(3) at the state $s_0$, only $q$ and $r$ are true, which are conjuncts of $\neg p \land q \land r$; truth values of atomic propositions at other states are as in the corresponding models. 
}

\label{figure:grafted pointed cgm}

\end{center}
\end{figure}

-----------------------------------

We claim $(\MM,s_0)$ realizes $\FAGF$ and $\gamma$.

It is easy to see the first three conditions in the definition of realization of abstract game forms hold. It remains to show the fourth condition holds: $\ja{\FAA} \leadto_{s_0} \Fforce (\ja{\FAA})$ for every coalition $\FAA$ and $\ja{\FAA} \in \FJA^0_\FAA$.

Let $\ja{\FAG} \in \FJA^0_\FAG$. It can be easily shown that the \emph{generated sub-model} of $\MM_{\ja{\FAG}}$ at $s_{\ja{\FAG}}$ is also the \emph{generated sub-model} of $\MM$ at $s_{\ja{\FAG}}$. Then, for every $\phi \in \Phi_\FXL$, $\MM_{\ja{\FAG}}, s_{\ja{\FAG}} \Vdash \phi$ if and only if $\MM, s_{\ja{\FAG}} \Vdash \phi$.

Let $\FAA \subseteq \FAG$ and $\ja{\FAA} \in \FJA^0_\FAA$. Then $\ja{\FAA} \in \Faja (s_0,\FAA)$. Let $s \in \Fout (s_0, \ja{\FAA})$. We want to show $\MM, s \Vdash \Fforce (\sigma_\FAA)$.
Note: $\Fout (s_0, \ja{\FAA}) = 
\FBC \{\Fout (s_0, \sigma_\FAG) \mid \sigma_\FAG \in \FJA_\FAG \text{ and } \sigma_\FAA \subseteq \sigma_\FAG\} = 
\FBC \{\Fout (s_0, \sigma_\FAG) \mid \sigma_\FAG \in \Faja (s_0, \FAG) \text{ and } \sigma_\FAA \subseteq \sigma_\FAG\} = 
\FBC \{\{s_{\ja{\FAG}}\} \mid \sigma_\FAG \in \Faja (s_0, \FAG) \text{ and } \sigma_\FAA \subseteq \sigma_\FAG\} = 
\{s_{\ja{\FAG}} \mid \sigma_\FAG \in \Faja (s_0, \FAG) \text{ and } \sigma_\FAA \subseteq \sigma_\FAG\} =
\{s_{\ja{\FAG}} \mid \sigma_\FAG \in \FJA^0_\FAG \text{ and } \sigma_\FAA \subseteq \sigma_\FAG\}$.
Let $s = s_{\sigma_\FAG}$ for some $\sigma_\FAG \in \FJA^0_\FAG$. Then, $\MM_{\sigma_\FAG}, s_{\sigma_\FAG} \Vdash \Fforce (\sigma_\FAG)$.
Then, $\MM, s_{\sigma_\FAG} \Vdash \Fforce (\sigma_\FAG)$. Note $\Fforce$ is monotonic. Then, $\Fforce (\sigma_\FAA) \subseteq \Fforce (\sigma_\FAG)$. Then, $\MM, s_{\sigma_\FAG} \Vdash \Fforce (\sigma_\FAA)$.

\end{proof}

\section{Completeness of the liability fragment $\FCLn$ of $\FCL$}
\label{section:Completeness of the liability fragment of FCL}

\begin{definition}[An axiomatic system for $\FCLn$]\label{definition:An axiomatic system for FCLn}
~

\noindent Axioms: propositional tautologies in $\Phi_{\FCLn}$

\noindent Inference rules:

\begin{center}
\begin{tabular}{lll}
$\Frule_1$: & $\dfrac{\phi_1, \dots, \phi_n}
{\psi}$ & \parbox[t]{27em}{where $(\phi_1 \land \dots \land \phi_n) \rightarrow \psi$ is a propositional tautology} \vspace{10pt} \\
$\Frule_2$: & $\dfrac{\phi}
{\Fclod{\FAA} \phi}$ & \vspace{10pt} \\
$\Frule_3$: & $\dfrac{\Fclod{\FAA \cup \FBB} (\phi \lor \psi) \lor \chi}
{\Fclod{\FAA} \phi \lor \Fclod{\FBB} \psi \lor \chi}$ & \parbox[t]{25em}{where $\FAA \cap \FBB = \emptyset$}
\end{tabular}
\end{center}
\end{definition}

Note that implications are not well-formed in $\Phi_{\FCLn}$.

\begin{theorem}[Soundness of $\FCLn$]

The axiomatic system for $\FCLn$ given in Definition \ref{definition:An axiomatic system for FCLn} is sound with respect to the set of valid formulas in $\Phi_{\FCLn}$.

\end{theorem}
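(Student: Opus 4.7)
The plan is the standard soundness verification: show that every axiom is valid and every inference rule preserves validity, so derivability implies validity by induction on derivation length. Axioms are propositional tautologies in $\Phi_{\FCLn}$, hence valid at every pointed model, and rule $\Frule_1$ preserves validity by classical propositional reasoning. The only interesting cases are the modal rules $\Frule_2$ and $\Frule_3$.

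For rule $\Frule_2$ I would first verify a small auxiliary fact: for every pointed model $(\MM,s)$, coalition $\FAA$, and joint action $\sigma_\FAA \in \Faja(s,\FAA)$, the outcome set $\Fout(s,\sigma_\FAA)$ is nonempty. When $\FAA = \FAG$ this is immediate; otherwise the $\bigoplus$-decomposition of $\Faja(s,\FAG)$ combined with nonemptiness of each $\Faja(s,\{a\})$ produces an extension $\sigma_\FAG \in \Faja(s,\FAG)$ of $\sigma_\FAA$, and the singleton $\Fout(s,\sigma_\FAG)$ is contained in $\Fout(s,\sigma_\FAA)$ via the union clause defining $\Fout$. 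Given this, if $\phi$ is valid then for any $\sigma_\FAA \in \Faja(s,\FAA)$ any witness $t \in \Fout(s,\sigma_\FAA)$ satisfies $\phi$, whence $\MM,s \Vdash \Fclod{\FAA}\phi$.

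The main case is rule $\Frule_3$. Assume $\Fclod{\FAA \cup \FBB}(\phi \lor \psi) \lor \chi$ is valid and, towards a contradiction, that $\MM,s \not\Vdash \Fclod{\FAA}\phi \lor \Fclod{\FBB}\psi \lor \chi$ at some pointed model. Unpacking yields witnesses $\sigma_\FAA \in \Faja(s,\FAA)$ and $\sigma_\FBB \in \Faja(s,\FBB)$ with $\MM,t \not\Vdash \phi$ for every $t \in \Fout(s,\sigma_\FAA)$ and $\MM,t \not\Vdash \psi$ for every $t \in \Fout(s,\sigma_\FBB)$, together with $\MM,s \not\Vdash \chi$. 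The disjointness hypothesis $\FAA \cap \FBB = \emptyset$ makes $\sigma_\FAA \cup \sigma_\FBB$ a well-defined joint action of $\FAA \cup \FBB$, and the $\bigoplus$-clause of $\Faja$ places it in $\Faja(s, \FAA \cup \FBB)$. The key semantic observation is the inclusion $\Fout(s, \sigma_\FAA \cup \sigma_\FBB) \subseteq \Fout(s,\sigma_\FAA) \cap \Fout(s,\sigma_\FBB)$: every grand-coalition extension of $\sigma_\FAA \cup \sigma_\FBB$ is simultaneously an extension of $\sigma_\FAA$ and of $\sigma_\FBB$, so the indexing families of the union clauses defining $\Fout(s,\sigma_\FAA)$ and $\Fout(s,\sigma_\FBB)$ contain the one for $\Fout(s,\sigma_\FAA \cup \sigma_\FBB)$. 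It follows that every $t \in \Fout(s, \sigma_\FAA \cup \sigma_\FBB)$ refutes both $\phi$ and $\psi$, hence $\MM,s \not\Vdash \Fclod{\FAA \cup \FBB}(\phi \lor \psi)$, contradicting validity of the premise. The main (modest) obstacle is the bookkeeping in this step, particularly the edge cases $\FAA = \emptyset$, $\FBB = \emptyset$, and $\FAA \cup \FBB = \FAG$, in which $\Fout$ is defined by the singleton clause rather than the union clause; in each of them the inclusion still goes through by a direct check.
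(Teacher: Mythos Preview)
Your proposal is correct and follows essentially the same approach as the paper. The paper reduces $\Frule_3$ to the validity of the dual implication $(\Fclo{\FAA}\neg\phi \land \Fclo{\FBB}\neg\psi) \rightarrow \Fclo{\FAA \cup \FBB}(\neg\phi \land \neg\psi)$ and declares it easy; your contradiction argument with the combined witness $\sigma_\FAA \cup \sigma_\FBB$ and the outcome-inclusion observation is exactly the unpacking of that step, and your treatment of $\Frule_2$ (nonemptiness of outcome sets) spells out what the paper leaves implicit.
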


\begin{proof}

It suffices to show that all the axioms are valid and all the rules preserve validity.
It is easy to see that all the axioms are valid, and the first two rules preserve validity. 
We show the third rule preserves validity. It suffices to show that
$\Fclod{\FAA \cup \FBB} (\phi \lor \psi)
\rightarrow
\big(
\Fclod{\FAA} \phi \lor \Fclod{\FBB} \psi
\big)
$ is valid.
It suffices to show that 
$\big(
\neg \Fclod{\FAA} \phi \land \neg \Fclod{\FBB} \psi
\big)
\rightarrow
\neg \Fclod{\FAA \cup \FBB} (\phi \lor \psi)$
is valid.
It suffices to show that
$\big(
\Fclo{\FAA} \neg \phi \land \Fclo{\FBB} \neg \psi
\big)
\rightarrow
\Fclo{\FAA \cup \FBB} (\neg \phi \land \neg \psi)$
is valid, which is easy to see.

\end{proof}

\paragraph{Remarks}

\emph{Note that the formulas occurring in this proof are not formulas of $\Phi_\FCLn$, but this does not affect whether the proof works. This remark also applies to the soundness proofs for the other three fragments given later.}

\subsection{Standard disjunction and a normal form lemma for $\FCLn$}

We call a disjunction of literals in the classical propositional logic an \Fdefs{elementary disjunction}.
For every natural number $n$, we call $\FNI = \{x \in \mathbb{Z} \mid -n \leq x \leq -1\}$ a \Fdefs{set of negative indices}. Note that $\FNI$ might be empty.

\begin{definition}[Standard disjunctions for $\FCLn$]

Let $\gamma$ be an elementary disjunction and $\FNI$ be a set of negative indices.

A formula $\FSD$ in the form of $\gamma \vee \FBV_{i \in \FNI} \Fclod{\FAA_i} \phi_i$ in $\Phi_\FCLn$ is called a \defstyle{standard disjunction} for $\FCLn$ with respect to $\gamma$ and $\FNI$.

\end{definition}

The following example illustrates this definition.

\begin{example}[]
\label{example:Standard disjunctions}

Let $\gamma = \bot$ and $\FNI = \{-1, -2, -3\}$.
Then, the following formula is a standard disjunction for $\FCLn$ with respect to $\gamma$ and $\FNI$, where
$\FAA_{-1} = \{a\}$, $\phi_{-1} = p$, $\FAA_{-2} = \{b\}$, $\phi_{-2} = \neg p$, $\FAA_{-3} = \FAG$, $\phi_{-3} = q$:
\[
\FSD = \bot \lor \Fclod{\FAA_{-1}} \phi_{-1} \lor \Fclod{\FAA_{-2}} \phi_{-2} \lor \Fclod{\FAA_{-3}} \phi_{-3},
\]
that is,
\[
\FSD = \bot \lor \Fclod{a} p \lor \Fclod{b} \neg p \lor \Fclod{\FAG} q.
\]

\end{example}

\paragraph{Remarks}

\emph{
Strictly speaking, a standard disjunction for $\FCLn$ is determined by an elementary disjunction $\gamma$, a set of negative indices $\FNI$, a function from $\FNI$ to the set of coalitions, and a function from $\FNI$ to $\Phi_\FCLn$. Here, we make the two functions implicit. We will also do this in the sequel.
}

\begin{lemma}[Normal form for $\FCL$]
\label{lemma:normal-form CLn}

For every $\phi \in \Phi_{\FCLn}$, there is $\phi' \in \Phi_{\FCLn}$ such that (1) $\models \phi$ if and only if $\models \phi'$, (2) $\vdash_\FCLn \phi$ if and only if $\vdash_{\FCLn} \phi'$, (3) $\phi$ and $\phi'$ have the same modal depth, and (4) $\phi'$ is in the form of $\FSD_0 \land \dots \land \FSD_k$, where every $\FSD_i$ is a standard disjunction for $\FCLn$.

\end{lemma}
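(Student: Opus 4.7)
The plan is to treat each outermost modal subformula $\Fclod{\FAA}\psi$ occurring in $\phi$ as an opaque atomic symbol and then apply the classical propositional CNF transformation. Since $\Phi_{\FCLn}$ is negation-free except at atoms, every $\phi \in \Phi_{\FCLn}$ is built only from literals ($p$, $\neg p$, $\top$, $\bot$) and modal formulas $\Fclod{\FAA}\psi$ by means of $\land$ and $\lor$. Viewing each such modal formula as an extra atom, the standard distributivity laws rewrite $\phi$ as a conjunction $\FSD_0 \land \dots \land \FSD_k$ in which each $\FSD_i$ is a disjunction whose disjuncts are either propositional literals or modal formulas $\Fclod{\FAA}\psi$, after the usual simplifications of the constants $\top$ and $\bot$.

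In each conjunct $\FSD_i$ I would then gather the propositional-literal disjuncts into a single elementary disjunction $\gamma_i$ (taking $\gamma_i = \bot$ if there are none) and gather the modal disjuncts into $\bigvee_{j \in I_i} \Fclod{\FAA_{i,j}} \phi_{i,j}$ for an appropriate set $I_i$ of negative indices (taking $I_i = \emptyset$ if there are no modal disjuncts). This puts $\FSD_i$ in the exact shape $\gamma_i \lor \bigvee_{j \in I_i} \Fclod{\FAA_{i,j}} \phi_{i,j}$ of a standard disjunction for $\FCLn$. If the CNF reduces to the empty conjunction, i.e.\ $\phi$ is propositionally equivalent to $\top$, I set $\phi' := p \lor \neg p$ for some fixed atom $p$, which is a valid standard disjunction and has modal depth $0$, matching that of $\phi$ in this case.

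The four required properties are then verified as follows. For (1) and (3), every step of the CNF transformation is a propositional equivalence that does not descend into any modal subformula, so validity is preserved and the modal depth is unchanged. For (2), the biconditional $\phi \leftrightarrow \phi'$ is a propositional tautology once the opaque modal subformulas are recognised as atoms; the inference rule $\Frule_1$, which admits any propositional-tautological consequence, therefore allows us to convert a derivation of $\phi$ into one of $\phi'$ and conversely. For (4), the output is a conjunction of standard disjunctions by construction.

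The main difficulty in this proof is really just the bookkeeping around degenerate cases: empty sets of literals, empty sets of modal disjuncts, and the constants $\top$ and $\bot$. There is no substantial modal content in the argument, since $\Frule_1$ absorbs all propositional reasoning over atoms and the modal subformulas are treated uniformly as atoms throughout the transformation; this is essentially why we chose a standard-disjunction normal form that keeps the modal layer untouched.
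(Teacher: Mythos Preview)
Your approach is correct and coincides with what the paper intends: the paper's own ``proof'' is just the one-line remark that the lemma follows from $\FCLn$ being an extension of classical propositional logic (this gives the derivability biconditional via $\Frule_1$) together with soundness (this gives the validity biconditional), and your CNF-with-modal-atoms argument is exactly the standard way to unpack that remark.

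One small correction to your handling of the degenerate case: when $\phi$ is propositionally equivalent to $\top$, it does \emph{not} follow that $\phi$ has modal depth $0$; for instance $\phi = \Fclod{\FAA}p \lor \top$ has modal depth $1$. Replacing such a $\phi$ by $p\lor\neg p$ therefore violates condition~(3). The fix is simply not to drop the modal disjuncts when a $\top$ appears in a clause: instead absorb the $\top$ into the elementary part by writing $\gamma = p \lor \neg p$ and keep the modal disjuncts, so that e.g.\ $\Fclod{\FAA}p \lor \top$ becomes the standard disjunction $(p\lor\neg p)\lor\Fclod{\FAA}p$. This is exactly the kind of bookkeeping you already flagged as the only real difficulty.
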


This lemma can be shown by (1) $\FCLn$ is an extension of the classical propositional logic, and (2) the soundness of $\FCLn$.

The following notion will be used to state the validity-reduction and the derivability-reduction conditions of standard disjunctions for $\FCLn$.

\begin{definition}[Neat negative index sets]

Let $\gamma$ be an elementary disjunction, $\FNI$ be a set of negative indices, and $\FSD = \gamma \vee \FBV_{i \in \FNI} \Fclod{\FAA_i} \phi_i$ be a standard disjunction for $\FCLn$ with respect to $\gamma$ and $\FNI$.

A subset $\FNI'$ of $\FNI$ is \Fdefs{neat} if for all $i, i' \in \FNI'$, if $i \neq i'$, then $\FAA_i \cap \FAA_{i'} = \emptyset$.

\end{definition}

\subsection{Downward validity lemma for $\FCLn$}

\begin{definition}[Validity-reduction condition of standard disjunction for $\FCLn$]

Let $\gamma$ be an elementary disjunction, $\FNI$ be a set of negative indices, and $\FSD = \gamma \vee \FBV_{i \in \FNI} \Fclod{\FAA_i} \phi_i$ be a standard disjunction for $\FCLn$ with respect to $\gamma$ and $\FNI$.

The \Fdefs{validity-reduction condition} of $\FSD$ is defined as follows: one of the following conditions holds:
\begin{enumerate}[label=(\alph*),leftmargin=3.33em]
\item $\vDash \gamma$;
\item there is a neat subset $\FNI'$ of $\FNI$ such that $\vDash \FBV_{i \in \FNI'} \phi_i$.
\end{enumerate}

\end{definition}

The following example illustrates this definition.

\begin{example}[]
\label{example:Standard disjunctions}

Let $\gamma = \bot$ and $\FNI = \{-1, -2, -3\}$.
Consider the following standard disjunction for $\FCLn$ with respect to $\gamma$ and $\FNI$, where
$\FAA_{-1} = \{a\}$, $\phi_{-1} = p$, $\FAA_{-2} = \{b\}$, $\phi_{-2} = \neg p$, $\FAA_{-3} = \FAG$, $\phi_{-3} = q$:
\[
\FSD = \bot \lor \Fclod{\FAA_{-1}} \phi_{-1} \lor \Fclod{\FAA_{-2}} \phi_{-2} \lor \Fclod{\FAA_{-3}} \phi_{-3},
\]
that is,
\[
\FSD = \bot \lor \Fclod{a} p \lor \Fclod{b} \neg p \lor \Fclod{\FAG} q.
\]

It can be easily verified that $\FSD$ is valid.
In fact, the validity-reduction condition of $\FSD$ holds.
Let $\FNI' = \{-1,-2\}$. Note $\FBV_{i \in \FNI'} \phi_i = p \lor \neg p$. Clearly, $\vDash p \lor \neg p$.

\end{example}

\begin{lemma}[Downward validity for $\FCLn$]
\label{lemma:Downward validity for FCLn}

Let $\gamma$ be an elementary disjunction, $\FNI$ be a set of negative indices, and $\FSD = \gamma \vee \FBV_{i \in \FNI} \Fclod{\FAA_i} \phi_i$ be a standard disjunction for $\FCLn$ with respect to $\gamma$ and $\FNI$.

Assumme $\vDash \FSD$.

Then, the validity-reduction condition of $\FSD$ is met; that is, one of the following conditions holds:
\begin{enumerate}[label=(\alph*),leftmargin=3.33em]
\item $\vDash \gamma$;
\item there is a neat subset $\FNI'$ of $\FNI$ such that $\vDash \FBV_{i \in \FNI'} \phi_i$.
\end{enumerate}

\end{lemma}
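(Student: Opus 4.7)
The plan is to prove the contrapositive. Assume neither (a) nor (b) holds; I will construct a pointed model refuting $\FSD$. Since (a) fails, $\neg \gamma$ is satisfiable, and being the negation of an elementary disjunction it is equivalent to a satisfiable elementary conjunction $\gamma^*$. Since (b) fails, for every neat subset $\FNI' \subseteq \FNI$, the formula $\FBW_{i \in \FNI'} \neg \phi_i$ is satisfiable. The refuting model will be obtained from Theorem~\ref{theorem:Realizability of abstract game forms}.

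Define the abstract game form $\FAGF = (\FAC_0, \Fforce)$ by taking $\FAC_0 = \FNI$, so that an action of a single agent is simply an index. For each $i \in \FNI$, let $\sigma^*_i : \FAA_i \to \FAC_0$ be the constant joint action sending every $a \in \FAA_i$ to $i$; this will be our witness refuting $\Fclod{\FAA_i} \phi_i$. For any coalition $\FAA$ and any $\sigma : \FAA \to \FAC_0$, set
\[
\Fforce(\sigma) = \{\neg \phi_i \mid i \in \FNI, \; \FAA_i \subseteq \FAA, \; \sigma|_{\FAA_i} = \sigma^*_i\}.
\]
Monotonicity of $\Fforce$ is immediate, since extending $\sigma$ can only add indices to the defining set. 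For satisfiability, observe that the index set $\FNI(\sigma) = \{i \in \FNI \mid \FAA_i \subseteq \FAA, \; \sigma|_{\FAA_i} = \sigma^*_i\}$ is a neat subset of $\FNI$: if distinct $i, i'$ both belonged to $\FNI(\sigma)$ with some $a \in \FAA_i \cap \FAA_{i'}$, then $\sigma(a)$ would equal both $i$ and $i'$, a contradiction. Hence by the failure of (b), $\FBW_{i \in \FNI(\sigma)} \neg \phi_i$ is satisfiable, so $\Fforce(\sigma)$ is satisfiable and $\FAGF$ is regular.

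Apply Theorem~\ref{theorem:Realizability of abstract game forms} to obtain a pointed model $(\MM, s_0)$ realizing $\FAGF$ and $\gamma^*$. Then $\MM, s_0 \Vdash \gamma^*$ gives $\MM, s_0 \not\Vdash \gamma$. For each $i \in \FNI$, the joint action $\sigma^*_i$ belongs to $\Faja(s_0, \FAA_i)$, and $i \in \FNI(\sigma^*_i)$, so $\neg \phi_i \in \Fforce(\sigma^*_i)$ and every outcome of $\sigma^*_i$ at $s_0$ satisfies $\neg \phi_i$. Thus $\MM, s_0 \Vdash \Fclo{\FAA_i} \neg \phi_i$, i.e., $\MM, s_0 \not\Vdash \Fclod{\FAA_i} \phi_i$. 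Every disjunct of $\FSD$ is therefore refuted at $s_0$, contradicting $\vDash \FSD$.

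The main obstacle is settling on the right action set and $\Fforce$. The trick is to let each agent \emph{commit to an index}; then automatically, for any action profile $\sigma$, the set $\FNI(\sigma)$ of indices whose entire coalition committed in unison forms a neat subset of $\FNI$. This neatness property is precisely what converts the failure of the validity-reduction condition into the satisfiability of $\Fforce(\sigma)$ required by the realization theorem; once this is in place, the rest of the argument is bookkeeping.
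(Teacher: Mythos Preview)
Your proof is essentially identical to the paper's: both take $\FAC_0 = \FNI$, define $\Fforce$ via the ``committed index'' construction (your $\FNI(\sigma)$ is exactly the paper's $\Fcoin(\sigma)$), verify neatness to obtain regularity, and invoke the realization theorem with an elementary conjunction equivalent to $\neg\gamma$. One small edge case the paper handles explicitly and you should add: when $\FNI = \emptyset$ the set $\FAC_0$ would be empty and the abstract game form is ill-formed, but in that case $\FSD = \gamma$ and the conclusion is immediate.
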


\begin{proof}
~

It is easy to see that the result holds if $\FNI = \emptyset$. Assume $\FNI \neq \emptyset$.

Assume the validity-reduction condition of $\FSD$ is not met. Then, (a) $\neg \gamma$ is satisfiable, and (b) for all neat subset $\FNI'$ of $\FNI$, $\FBW_{i \in \FNI'} \neg \phi_i$ is satisfiable.
It suffices to show $\neg \gamma \wedge \FBW_{i \in \FNI} \Fclo{\FAA_i} \neg \phi_i$ is satisfiable.

\paragraph{Remarks}

\emph{Note that formulas such as $\neg \gamma$ and $\neg \phi_i$ might not be formulas of $\Phi_\FCLn$, but this does not affect whether the proof works. This remark also applies to the remaining part of the proof and the proofs for the downward validity lemmas for the other three fragments given later.
}

--------------------

Define an \Fdefs{abstract game form $\FAGF = (\FAC_0, \Fforce)$ for $\FSD$} as follows, where for every $\FAA \subseteq \FAG$, $\FJA^0_\FAA$ is the set of joint actions of $\FAA$ with respect to $\FAC_0$:
\begin{itemize}
\item $\FAC_0 = \FNI$;

\item For every $\FCC \subseteq \FAG$ and $\sigma_\FCC \in \FJA^0_\FCC$, define:
\[
\Fcoin (\ja{\FCC}) = \{i \in \FNI \mid \FAA_i \subseteq \FCC \text{ and for all } a \in \FAA_i, \sigma_\FCC (a) = i\}
\]
\[
\Fforce (\ja{\FCC}) = \{\neg \phi_i \mid i \in \Fcoin ({\ja{\FCC}})\}
\]

\end{itemize}

\begin{example}[]
\label{example:Standard disjunctions}

This example illustrates the abstract game form with respect to $\FSD$.

Assume $\FAG = \{a,b\}$.
Let $\gamma = p$ and $\FNI = \{-1, -2\}$.
Consider the following standard disjunction for $\FCLn$ with respect to $\gamma$ and $\FNI$, where
$\FAA_{-1} = \{a\}$, $\phi_{-1} = q$, $\FAA_{-2} = \{b\}$, $\phi_{-2} = r$:
\[
\FSD = p \lor \Fclod{\FAA_{-1}} \phi_{-1} \lor \Fclod{\FAA_{-2}} \phi_{-2},
\]
that is,
\[
\FSD = p \lor \Fclod{a} q \lor \Fclod{b} r.
\]

It is easy to check that all four subsets of $\FNI$ are neat.
Then, it is easy to verify that (a) $\neg \gamma$ is satisfiable, and (b) for all neat subset $\FNI'$ of $\FNI$, $\FBW_{i \in \FNI'} \neg \phi_i$ is satisfiable.

The abstract game form $\FAGF = (\FAC_0, \Fforce)$ with respect to $\FSD$ is as follows:
\begin{itemize}

\item $\FAC_0 = \FNI$.

\item 

The sets of joint actions of coalitions are as follows:

$\FJA_\emptyset = \{\epsilon\}$;

$\FJA_a = \FJA_b = \FNI$;

$\FJA_\FAG = \{(-1,-1), (-1,-2), (-2, -1), (-2, -2) \}$.

\item

The function $\Fcoin$ from $\FJA$, which equals to $\bigcup \{\FJA_\FAA \mid \FAA \subseteq \FAG\}$, to the power set of $\FNI$ is as follows, where we use coalitions as subscripts of $\Fcoin$ to indicate whose joint actions its arguments are: 

$\Fcoin_\emptyset (\epsilon) = \emptyset$

$\Fcoin_a (-1) = \{-1\}$;

$\Fcoin_a (-2) = \emptyset$;

$\Fcoin_b (-1) = \emptyset$;

$\Fcoin_b (-2) = \{-2\}$;

$\Fcoin_\FAG (-1,-1) = \{-1\}$;

$\Fcoin_\FAG (-1,-2) = \{-1,-2\}$;

$\Fcoin_\FAG (-2,-1) = \{-2,-1\}$;

$\Fcoin_\FAG (-2,-2) = \{-2\}$.

\item 

The function $\Fforce$ from $\FJA$ to the power set of $\Phi_\FCL$ is as follows, where we use coalitions as subscripts of $\Fforce$ to indicate whose joint actions its arguments are:

$\Fforce_\emptyset (\epsilon) = \emptyset$;

$\Fforce_a (-1) = \{ \neg q \}$;

$\Fforce_a (-2) = \emptyset$;

$\Fforce_b (-1) = \emptyset$;

$\Fforce_a (-2) = \{ \neg r \}$;

$\Fforce_\FAG (-1,-1) = \{\neg q\}$;

$\Fforce_\FAG (-1,-2) = \{\neg q, \neg r\}$;

$\Fforce_\FAG (-2,-1) = \{\neg r, \neg q\}$;

$\Fforce_\FAG (-2,-2) = \{\neg r\}$.

\end{itemize}

\end{example}

We claim that $\FAGF$ is regular.

First, we show for all coalition $\FCC$ and $\ja{\FCC} \in \FJA^0_\FCC$, $\Fforce (\ja{\FCC})$ is satisfiable.
Let $\FCC$ be a coalition and $\ja{\FCC} \in \FJA^0_\FCC$. Note for all neat subset $\FNI'$ of $\FNI$, $\FBW_{i \in \FNI'} \neg \phi_i$ is satisfiable. It suffices to show that $\Fcoin (\ja{\FCC})$ is neat.
Let $i, i' \in \Fcoin (\ja{\FCC})$ such that $i \neq i'$.
Then, $\FAA_i \subseteq \FCC$, $\sigma_\FCC (a) = i$ for all $a \in \FAA_i$, $\FAA_{i'} \subseteq \FCC$, and $\sigma_\FCC (a) = i'$ for all $a \in \FAA_{i'}$.
Assume $\FAA_i \cap \FAA_{i'} \neq \emptyset$. Then there is an agent $a$ such that $a \in \FAA_i$ and $a \in \FAA_{i'}$. Then $\sigma_\FCC (a) = i$ and $\sigma_\FCC (a) = i'$. Then $i = i'$. We have a contradiction. Then, $\FAA_i \cap \FAA_{i'} = \emptyset$. Then, $\Fcoin (\ja{\FCC})$ is neat.

Second, we show that $\Fforce$ is monotonic. Let $\FAA, \FAA' \subseteq \FAG$, $\ja{\FAA} \in \FJA^0_\FAA$ and $\ja{\FAA'} \in \FJA^0_{\FAA'}$ such that $\ja{\FAA} \subseteq \ja{\FAA'}$. It suffices to show $\Fcoin (\ja{\FAA}) \subseteq \Fcoin (\ja{\FAA'})$.
Let $i \in \Fcoin (\ja{\FAA})$. Then $\FAA_i \subseteq \FAA$ and for all $a \in \FAA_i$, $\sigma_\FAA (a) = i$. Then $\FAA_i \subseteq \FAA'$ and for all $a \in \FAA_i$, $\sigma_{\FAA'} (a) = i$. Then $i \in \Fcoin (\ja{\FAA'})$. Then, $\Fcoin (\ja{\FAA}) \subseteq \Fcoin (\ja{\FAA'})$.

--------------------

Let $\gamma'$ be an elementary conjunction equivalent to $\neg \gamma$.

By Theorem \ref{theorem:Realizability of abstract game forms}, there is a pointed concurrent game model $(\MM,s_0)$ realizing $\FAGF$ and $\gamma'$. Let $\MM = (\FST, \FAC, \Faja, \Fout, \Flab)$. We want to show $\MM, s_0 \Vdash \neg \gamma \wedge \FBW_{i \in \FNI} \Fclo{\FAA_i} \neg \phi_i$.

By Definition \ref{definition:Realization}, $\MM, s_0 \Vdash \neg \gamma$.

Let $i \in \FNI$. It suffices to show $\MM, s_0 \Vdash \Fclo{\FAA_i} \neg \phi_i$.
Let $\ja{\FAA_i} \in \FJA^0_{\FAA_i}$ such that $\sigma_{\FAA_i} (a) = i$ for all $a \in \FAA_i$. Then $\neg \phi_i \in \Fforce (\ja{\FAA_i})$.
By Definition \ref{definition:Realization}, $\ja{\FAA_i} \leadto_{s_0} \neg \phi_i$. Then $\MM, s_0 \Vdash \Fclo{\FAA_i} \neg \phi_i$.

\end{proof}

\subsection{Upward derivability lemma for $\FCLn$}

\begin{definition}[Derivability-reduction condition of standard disjunction for $\FCLn$]

Let $\gamma$ be an elementary disjunction, $\FNI$ be a set of negative indices, and $\FSD = \gamma \vee \FBV_{i \in \FNI} \Fclod{\FAA_i} \phi_i$ be a standard disjunction for $\FCLn$ with respect to $\gamma$ and $\FNI$.

The \Fdefs{derivability-reduction condition} of $\FSD$ is defined as follows: one of the following conditions holds:
\begin{enumerate}[label=(\alph*),leftmargin=3.33em]
\item $\vdash_\FCLn \gamma$;
\item there is a neat subset $\FNI'$ of $\FNI$ such that $\vdash_\FCLn \FBV_{i \in \FNI'} \phi_i$.
\end{enumerate}

\end{definition}

\begin{lemma}[Upward derivability for $\FCLn$]
\label{lemma:Upward derivability for FCLn}

Let $\gamma$ be an elementary disjunction, $\FNI$ be a set of negative indices, and $\FSD = \gamma \vee \FBV_{i \in \FNI} \Fclod{\FAA_i} \phi_i$ be a standard disjunction for $\FCLn$ with respect to $\gamma$ and $\FNI$.

Assume the derivability-reduction condition of $\FSD$ is met; that is, one of the following conditions holds:
\begin{enumerate}[label=(\alph*),leftmargin=3.33em]
\item $\vdash_\FCLn \gamma$;
\item there is a neat subset $\FNI'$ of $\FNI$ such that $\vdash_\FCLn \FBV_{i \in \FNI'} \phi_i$.
\end{enumerate}

Then, $\vdash_\FCLn \FSD$.

\end{lemma}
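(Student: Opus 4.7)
The plan is to split on the two alternatives of the derivability-reduction condition, reducing each to targeted applications of the three rules of the axiom system.

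For case (a), suppose $\vdash_\FCLn \gamma$. Since $\gamma$ is a disjunct of $\FSD$, the implication $\gamma \rightarrow \FSD$ is a propositional tautology, so a single application of $\Frule_1$ yields $\vdash_\FCLn \FSD$.

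For case (b), fix a neat subset $\FNI' = \{i_1, \dots, i_k\} \subseteq \FNI$ with $\vdash_\FCLn \phi_{i_1} \vee \dots \vee \phi_{i_k}$, and set $\FAA := \FAA_{i_1} \cup \dots \cup \FAA_{i_k}$. I would proceed in three stages. First, apply $\Frule_2$ to the derivable disjunction to obtain $\vdash_\FCLn \Fclod{\FAA}(\phi_{i_1} \vee \dots \vee \phi_{i_k})$. Second, peel off one coalition at a time via $\Frule_3$: by neatness, $\FAA_{i_1}, \dots, \FAA_{i_k}$ are pairwise disjoint, so in particular $\FAA_{i_j}$ is disjoint from $\FAA_{i_{j+1}} \cup \dots \cup \FAA_{i_k}$ at each stage. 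Iterating $\Frule_3$ (using $\Frule_1$ to regroup the disjunction as $\Fclod{\FAA_{i_j} \cup (\FAA_{i_{j+1}} \cup \dots \cup \FAA_{i_k})}(\phi_{i_j} \vee (\phi_{i_{j+1}} \vee \dots \vee \phi_{i_k}))$ and pushing the already-separated disjuncts into the $\chi$ parameter of the rule) yields, after $k-1$ steps,
\[
\vdash_\FCLn \Fclod{\FAA_{i_1}} \phi_{i_1} \vee \dots \vee \Fclod{\FAA_{i_k}} \phi_{i_k}.
\]
Third, one final application of $\Frule_1$ weakens this by the remaining disjuncts of $\FSD$, namely $\gamma$ and $\Fclod{\FAA_i} \phi_i$ for $i \in \FNI \setminus \FNI'$, delivering $\vdash_\FCLn \FSD$.

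I expect no serious obstacle: the three rules are essentially tailored for this derivation, and the neatness assumption on $\FNI'$ is exactly what is needed to satisfy the disjointness side condition of $\Frule_3$ at every stage of the iteration. The only bookkeeping point requiring care is that at each step $\Frule_1$ must be used to put the current line in the precise shape $\Fclod{\FAA \cup \FBB}(\phi \vee \psi) \vee \chi$ demanded by $\Frule_3$, with $\chi$ accumulating the previously separated disjuncts. Edge cases are trivial: if $\FNI' = \emptyset$, then $\FBV_{i \in \FNI'} \phi_i$ is $\bot$ by convention and $\FSD$ follows by $\Frule_1$; if $|\FNI'| = 1$, no application of $\Frule_3$ is needed, and $\Frule_2$ applied to $\phi_{i_1}$ followed by $\Frule_1$ immediately delivers $\FSD$.
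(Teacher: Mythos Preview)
Your proposal is correct and follows essentially the same approach as the paper: both proofs dispatch case~(a) with a single application of $\Frule_1$, and handle case~(b) by applying $\Frule_2$ to introduce $\Fclod{\bigcup_{i\in\FNI'}\FAA_i}\bigvee_{i\in\FNI'}\phi_i$, then iterating $\Frule_3$ to split this into $\bigvee_{i\in\FNI'}\Fclod{\FAA_i}\phi_i$, and finally weakening via $\Frule_1$. Your write-up is in fact slightly more explicit than the paper's about how neatness guarantees the disjointness side condition at each step of the $\Frule_3$ iteration, and about the edge cases $\FNI'=\emptyset$ and $|\FNI'|=1$.
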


\begin{proof}
~

It is easy to see that the result holds if $\FNI = \emptyset$. Assume $\FNI \neq \emptyset$.

Suppose (a). By application of the rule $\Frule_1$:
\[
\vdash_\FCLn \gamma \vee \FBV_{i \in \FNI} \Fclod{\FAA_i} \phi_i
\]

Suppose (b).

\begin{enumerate}[label=(\arabic*),leftmargin=3.33em]

\item

By application of the rule $\Frule_2$:
\[
\vdash_\FCLn \Fclod{\FBC_{i \in \FNI'} \FAA_i} \FBV_{i \in \FNI'} \phi_i
\]

\item

By repeated applications of the rule $\Frule_3$:
\[
\vdash_\FCLn \FBV_{i \in \FNI'} \Fclod{\FAA_i} \phi_i
\]

\item

By application of the rule $\Frule_1$:
\[
\vdash_\FCLn \gamma \vee \FBV_{i \in \FNI} \Fclod{\FAA_i} \phi_i
\]

\end{enumerate}

\end{proof}

\subsection{Completeness of $\FCLn$ by induction}

\begin{theorem}[Completeness of $\FCLn$]\label{theorem:Completeness of FCLn}

The axiomatic system for $\FCLn$ given in Definition \ref{definition:An axiomatic system for FCLn} is complete with respect to the set of valid formulas in $\Phi_{\FCLn}$.

\end{theorem}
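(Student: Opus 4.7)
The plan is to execute the four-step strategy of Section~\ref{section:Our completeness proof strategy} by assembling the three preparatory lemmas already in hand. Given $\phi \in \Phi_\FCLn$ with $\vDash \phi$, I would show $\vdash_\FCLn \phi$ by induction on the modal depth $n$ of $\phi$. For the base case $n = 0$, the formula $\phi$ is built from $\top, \bot, p, \neg p$ using $\wedge$ and $\vee$, so $\vDash \phi$ makes it a propositional tautology; hence $\vdash_\FCLn \phi$ via the propositional axioms together with the rule $\Frule_1$.

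For the inductive step $n > 0$, assume the claim for all formulas of $\Phi_\FCLn$ of modal depth strictly less than $n$. By Lemma~\ref{lemma:normal-form CLn}, obtain $\phi'$ of the same modal depth as $\phi$ with $\vDash \phi \Leftrightarrow \vDash \phi'$ and $\vdash_\FCLn \phi \Leftrightarrow \vdash_\FCLn \phi'$, where $\phi' = \FSD_0 \wedge \dots \wedge \FSD_k$ and each $\FSD_j$ is a standard disjunction. Since $\vDash \phi'$, each conjunct is valid, so it suffices to prove $\vdash_\FCLn \FSD_j$ for every $j$. If the modal depth of $\FSD_j$ is less than $n$, the inductive hypothesis gives this directly. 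Otherwise the modal depth of $\FSD_j$ equals $n$; writing $\FSD_j = \gamma \vee \FBV_{i \in \FNI} \Fclod{\FAA_i} \phi_i$, Lemma~\ref{lemma:Downward validity for FCLn} supplies the validity-reduction condition: either $\vDash \gamma$, or there is a neat $\FNI' \subseteq \FNI$ with $\vDash \FBV_{i \in \FNI'} \phi_i$. Both $\gamma$ and $\FBV_{i \in \FNI'} \phi_i$ lie in $\Phi_\FCLn$ and have modal depth strictly less than $n$ (each $\phi_i$ has modal depth at most $n-1$), so the inductive hypothesis upgrades the validity-reduction condition to the derivability-reduction condition. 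Lemma~\ref{lemma:Upward derivability for FCLn} then yields $\vdash_\FCLn \FSD_j$, completing the induction.

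The substantive mathematical content has been concentrated in the three preparatory lemmas, and in particular in the realization technique for regular abstract game forms that powers the downward validity lemma; so no further obstacle remains at this stage, and the proof is essentially a routine induction that glues them together. The only points worth a remark in the write-up are that the reduced formulas $\gamma$ and $\FBV_{i \in \FNI'} \phi_i$ remain inside the liability fragment $\Phi_\FCLn$ (which is closed under the connectives $\wedge$ and $\vee$ available on its generators), so the inductive hypothesis genuinely applies, and that the modal depth strictly decreases in passing to these reduced formulas, ensuring that the induction is well-founded.
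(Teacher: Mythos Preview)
Your proposal is correct and follows essentially the same approach as the paper: an induction on modal depth that combines the normal form lemma (Lemma~\ref{lemma:normal-form CLn}), the downward validity lemma (Lemma~\ref{lemma:Downward validity for FCLn}), and the upward derivability lemma (Lemma~\ref{lemma:Upward derivability for FCLn}) exactly as outlined in Section~\ref{section:Our completeness proof strategy}. Your additional remarks that the reduced formulas remain in $\Phi_\FCLn$ and have strictly smaller modal depth make explicit what the paper leaves implicit, but otherwise the arguments coincide.
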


\begin{proof}
~

Let $\phi$ be a formula in $\Phi_{\FCLn}$. Assume $\models \phi$. We want to show $\vdash_\FCLn \phi$. We put an induction on the modal depth $n$ of $\phi$.

Assume $n = 0$. Then, $\phi$ is a propositional tautology. Then $\vdash_\FCLn \phi$.

Assume $n > 0$.

By Lemma \ref{lemma:normal-form CLn}, the normal form lemma for $\FCLn$, there is $\phi' \in \Phi_{\FCLn}$ such that (1) $\models \phi$ if and only if $\models \phi'$, (2) $\vdash_\FCLn \phi$ if and only if $\vdash_{\FCLn} \phi'$, (3) $\phi$ and $\phi'$ have the same modal depth, and (4) $\phi'$ is in the form of $\FSD_0 \land \dots \land \FSD_k$, where every $\FSD_i$ is a standard disjunction for $\FCLn$.
Then $\models \phi'$.

Let $i \leq k$. Then $\models \FSD_i$. It suffices to show $\vdash_\FCLn \FSD_i$.

Assume the modal depth of $\FSD_i$ is less than $n$. By the inductive hypothesis, $\vdash_\FCLn \FSD_i$.

Assume the modal depth of $\FSD_i$ is $n$. Let $\FSD_i = \gamma \vee \FBV_{i \in \FNI} \Fclod{\FAA_i} \phi_i$.

By Lemma \ref{lemma:Downward validity for FCLn}, the downward validity lemma for $\FCLn$, the validity-reduction condition of $\FSD_i$ is met; that is, one of the following conditions holds:
\begin{enumerate}[label=(\alph*),leftmargin=3.33em]
\item $\vDash \gamma$;
\item there is a neat subset $\FNI'$ of $\FNI$ such that $\vDash \FBV_{i \in \FNI'} \phi_i$.
\end{enumerate}

By the inductive hypothesis, the derivability-reduction condition of $\FSD_i$ is met; that is, one of the following conditions holds:
\begin{enumerate}[label=(\alph*),leftmargin=3.33em]
\item $\vdash_\FCLn \gamma$;
\item $\vdash_\FCLn \FBV_{i \in \FNI'} \phi_i$.
\end{enumerate}

By Lemma \ref{lemma:Upward derivability for FCLn}, the upward derivability lemma for $\FCLn$, $\vdash_\FCLn \FSD_i$.

\end{proof}

\section{Completeness of the ability fragment $\FCLp$ of $\FCL$}
\label{section:Completeness of the ability fragment of FCL}

\begin{definition}[An axiomatic system for $\FCLp$]\label{definition:An axiomatic system for FCLp}
~

\noindent Axioms: propositional tautologies in $\Phi_{\FCLn}$

\noindent Inference rules:

\begin{center}
\begin{tabular}{lll}
$\Frule_1$: & $\dfrac{\;\phi_1, \dots, \phi_n\;}
{\;\psi\;}$ & \parbox[t]{22em}{where $(\phi_1 \land \dots \land \phi_n) \rightarrow \psi$ is a propositional tautology} \vspace{10pt} \\
$\Frule_2$: & $\dfrac{\;\phi \;}
{\Fclo{\FAA} \phi }$ & \vspace{12pt} \\
$\Frule_3$: & $\dfrac{\Fclo{\FAA} (\phi_1 \vee \phi_2) \lor \chi}
{\Fclo{\FAA} \phi_1 \vee \Fclo{\FAG} \phi_2 \lor \chi}$ &
\end{tabular}
\end{center}
\end{definition}

\begin{theorem}[Soundness of $\FCLp$]\label{theorem:Soundness of the ability fragment of CL}

The axiomatic system for $\FCLp$ given in Definition \ref{definition:An axiomatic system for FCLp} is sound with respect to the set of valid formulas in $\Phi_{\FCLp}$.

\end{theorem}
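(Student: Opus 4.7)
The plan is to follow exactly the template of the soundness proof for $\FCLn$: show that every axiom is valid and that every inference rule preserves validity. The propositional tautologies are valid by classical propositional reasoning, and the rule $\Frule_1$ preserves validity because valid formulas are closed under propositional consequence. For $\Frule_2$, if $\phi$ is valid, then in any pointed model $(\MM, s)$ every outcome state of every joint action satisfies $\phi$; since $\Faja(s, \FAA)$ is required to be nonempty, we may pick any $\sigma_\FAA \in \Faja(s, \FAA)$ as a witness, establishing $\MM, s \Vdash \Fclo{\FAA} \phi$.

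The only nontrivial case is $\Frule_3$. As in the $\FCLn$ proof, preservation of validity reduces (reasoning in the meta-theory, in the spirit of the author's remark about formulas outside the fragment) to showing that the implication $\Fclo{\FAA}(\phi_1 \vee \phi_2) \rightarrow \Fclo{\FAA} \phi_1 \vee \Fclo{\FAG} \phi_2$ is valid. The plan is a direct semantic case analysis. Suppose $\MM, s \Vdash \Fclo{\FAA}(\phi_1 \vee \phi_2)$, witnessed by some $\sigma_\FAA \in \Faja(s, \FAA)$ such that $\sigma_\FAA \leadto_s (\phi_1 \vee \phi_2)$. Split on whether $\sigma_\FAA \leadto_s \phi_1$ or not. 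If yes, then $\MM, s \Vdash \Fclo{\FAA} \phi_1$ immediately. If not, there exists $t \in \Fout(s, \sigma_\FAA)$ with $\MM, t \not\Vdash \phi_1$, hence $\MM, t \Vdash \phi_2$.

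The key observation then is that, by the definition of $\Fout$ for proper sub-coalitions, this outcome $t$ arises as $\Fout(s, \sigma_\FAG)$ for some extension $\sigma_\FAG \in \Faja(s, \FAG)$ of $\sigma_\FAA$. Since $\sigma_\FAG$ is a joint action of the grand coalition, $\Fout(s, \sigma_\FAG)$ is a singleton, namely $\{t\}$, and $\MM, t \Vdash \phi_2$. Hence $\sigma_\FAG$ witnesses $\MM, s \Vdash \Fclo{\FAG} \phi_2$, completing the case analysis. Adding a disjunct $\chi$ on both sides does not affect the argument, so the full rule preserves validity.

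The main obstacle, if any, is being careful about the grand-coalition step: one must invoke the clause in the definition of concurrent game models that makes $\Fout(s, \sigma_\FAG)$ a singleton precisely when $\sigma_\FAG$ is an available joint action of $\FAG$, and combine this with the fact that an outcome of a sub-coalition action is always reachable via some compatible extension to a full action profile. Beyond this small structural point the proof is routine, and the overall proof should be about the same length as the $\FCLn$ soundness proof.
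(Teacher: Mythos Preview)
Your proposal is correct and follows essentially the same approach as the paper's proof: reduce preservation of validity for $\Frule_3$ to the validity of $\Fclo{\FAA}(\phi_1 \vee \phi_2) \rightarrow \Fclo{\FAA}\phi_1 \vee \Fclo{\FAG}\phi_2$, pick a witness $\sigma_\FAA$, and in the failing case extract an outcome $t \Vdash \phi_2$ reached via some full profile $\sigma_\FAG$ with singleton outcome. The only cosmetic difference is that the paper assumes $\MM,s \not\Vdash \Fclo{\FAA}\phi_1$ and derives $\sigma_\FAA \not\leadto_s \phi_1$, whereas you case-split directly on whether $\sigma_\FAA \leadto_s \phi_1$; the content is identical.
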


\begin{proof}

It suffices to show that all the axioms are valid and all the rules preserve validity.
It is easy to see that all the axioms are valid, and the first two rules preserve validity. 
We show the third rule preserves validity. It suffices to show $\Fclo{\FAA} (\phi_1 \vee \phi_2) \rightarrow \big(\Fclo{\FAA} \phi_1 \vee \Fclo{\FAG} \phi_2 \big)$ is valid.
Let $(\MM,w)$, where $\MM = (\FST, \FAC, \Faja, \Fout, \Flab)$, be a pointed model such that $\MM,w \Vdash \Fclo{\FAA} (\phi_1 \vee \phi_2)$. Then, there is $\sigma_\FAA$ in $\Faja (w, \FAA)$ such that $\sigma_\FAA \leadto_w \phi_1 \lor \phi_2$.
Assume $\MM,w \not \Vdash \Fclo{\FAA} \phi_1$. It suffices to show $\MM,w \Vdash \Fclo{\FAG} \phi_2$.
Then $\sigma_\FAA \not \leadto_w \phi_1$.
Then there is $u \in \Fout (w, \sigma_\FAA)$ such that $\MM,u \Vdash \neg \phi_1$. Then $\MM,u \Vdash \phi_2$. Note there is $\sigma_\FAG \in \Faja (w, \FAG)$ such that $\sigma_\FAA \subseteq \sigma_\FAG$ and $\Fout (w, \sigma_\FAG) = \{u\}$. Then $\sigma_\FAG \leadto_w \phi_2$. Then $\MM,w \Vdash \Fclo{\FAG} \phi_2$.

\end{proof}

\subsection{Standard disjunctions and a normal form lemma for $\FCLp$}

For every natural number $n$, we call $\FPI = \{x \in \mathbb{Z} \mid 1 \leq x \leq n\}$ a \Fdefs{set of positive indices}. Note that $\FPI$ might be empty.

\begin{definition}[Standard disjunctions for $\FCLp$]

Let $\gamma$ be an elementary disjunction and $\FPI$ be a set of positive indices.

A formula $\FSD$ in the form of $\gamma \vee \FBV_{j \in \FPI} \Fclo{\FBB_j} \psi_j$ in $\Phi_{\FCLp}$ is called a \Fdefs{standard disjunction} for $\FCLp$ with respect to $\gamma$ and $\FPI$.

\end{definition}

The following example illustrates this definition.

\begin{example}[]
\label{example:Standard disjunctions}

Let $\gamma = \bot$ and $\FPI = \{1,2\}$.
Then, the following formula is called a \Fdefs{standard disjunction} for $\FCLp$ with respect to $\gamma$ and $\FPI$, where
$\FBB_{1} = \FAG$, $\psi_1 = p$, $\FBB_{2} = \{a\}$, $\psi_2 = \neg p$:
\[
\FSD = \bot \lor \Fclo{\FBB_{1}} \psi_1 \lor \Fclo{\FBB_{2}} \psi_2,
\]
that is,
\[
\FSD = \bot \lor \Fclo{\FAG} p \lor \Fclo{a} \neg p.
\]

\end{example}

\begin{lemma}[Normal form for $\FCLp$]
\label{lemma:normal-form CLp}

For every $\phi \in \Phi_{\FCLp}$, there is $\phi' \in \Phi_{\FCLp}$ such that (1) $\models \phi$ if and only if $\models \phi'$, (2) $\vdash_\FCLp \phi$ if and only if $\vdash_{\FCLp} \phi'$, (3) $\phi$ and $\phi'$ have the same modal depth, and (4) $\phi'$ is in the form of $\FSD_0 \land \dots \land \FSD_k$, where every $\FSD_i$ is a standard disjunction for $\FCLp$.

\end{lemma}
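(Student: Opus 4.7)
The plan mirrors the argument sketched for Lemma~\ref{lemma:normal-form CLn}: put $\phi$ into conjunctive normal form, treating the maximal modal subformulas of $\phi$ as propositional atoms. Concretely, let $\Fclo{\FBB_1}\chi_1, \dots, \Fclo{\FBB_m}\chi_m$ enumerate the distinct maximal subformulas of $\phi$ of the form $\Fclo{\FBB}\chi$ (those not embedded inside another such subformula), and introduce fresh propositional variables $q_1, \dots, q_m$. Replacing each $\Fclo{\FBB_j}\chi_j$ in $\phi$ by $q_j$ yields a formula $\hat\phi$ built only from literals and the $q_j$ using $\land$ and $\lor$. Convert $\hat\phi$ to classical CNF $\hat\phi' = \hat D_0 \land \dots \land \hat D_k$ via propositional distributivity, then substitute the $\Fclo{\FBB_j}\chi_j$ back for the $q_j$. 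Each clause becomes a disjunction of literals and formulas of the form $\Fclo{\FBB_j}\chi_j$; grouping the literal disjuncts of each clause into a single elementary disjunction $\gamma_i$ produces $\FSD_i = \gamma_i \vee \FBV_{j \in \FPI_i} \Fclo{\FBB_j}\chi_j$, so $\phi' = \FSD_0 \land \dots \land \FSD_k$ is in the required normal form.

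Condition (3) is immediate: the CNF transformation only rearranges top-level $\land$ and $\lor$ without touching the arguments of the modal operators, so the multiset of $\Fclo{\cdot}$-subformulas, and hence the modal depth, is unchanged. For condition (1), fix any pointed model $(\MM,s)$: the truth values of $\Fclo{\FBB_1}\chi_1, \dots, \Fclo{\FBB_m}\chi_m$ at $(\MM,s)$ induce a valuation of $q_1, \dots, q_m$ under which $\hat\phi$ and $\hat\phi'$ agree (since CNF conversion preserves classical equivalence), and that common value is also the common truth value of $\phi$ and $\phi'$ at $(\MM,s)$. Hence $\phi$ and $\phi'$ are semantically equivalent, so $\models \phi$ iff $\models \phi'$.

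For condition (2), the equivalence $\phi \leftrightarrow \phi'$ is a propositional tautology once the $\Fclo{\FBB_j}\chi_j$ are treated as atoms, so both $\phi \to \phi'$ and $\phi' \to \phi$ are propositional tautologies. Applying rule $\Frule_1$ in each direction yields $\vdash_\FCLp \phi$ if and only if $\vdash_\FCLp \phi'$. The point requiring care, which I expect to be the only real subtlety, is precisely the one already flagged in the paper's remarks after the soundness proofs: the implications invoked in $\Frule_1$ need not themselves lie in $\Phi_\FCLp$, since that fragment lacks $\to$ and general $\neg$. Once this licence is granted, the argument reduces to standard CNF bookkeeping, and the real work of the completeness argument is deferred to the downward validity and upward derivability lemmas for $\FCLp$.
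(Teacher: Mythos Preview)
Your proposal is correct and follows essentially the same approach as the paper, which merely sketches that the lemma follows from (1) $\FCLp$ extending classical propositional logic and (2) the soundness of $\FCLp$; you have simply spelled out the CNF-over-maximal-modal-subformulas argument and the use of $\Frule_1$ in appropriate detail. One small remark: your worry about the implications in $\Frule_1$ lying outside $\Phi_\FCLp$ is unnecessary, since the implication there is only a meta-level side condition on the rule, not a formula that must itself belong to the fragment.
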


This lemma can be shown by (1) $\FCLp$ is an extension of the classical propositional logic, and (2) the soundness of $\FCLp$.

The following notation will be used to state the validity-reduction and the derivability-reduction conditions of standard disjunctions for $\FCLp$: $\FPIz = \{j \in \FPI \mid \FBB_j = \FAG\}$.

\subsection{Downward validity lemma for $\FCLp$}

\begin{definition}[Validity-reduction condition of standard disjunctions for $\FCLp$]

Let $\gamma$ be an elementary disjunction, $\FPI$ be a set of positive indices, and $\FSD = \gamma \vee \FBV_{j \in \FPI} \Fclo{\FBB_j} \psi_j$ be a standard disjunction for $\FCLp$ with respect to $\gamma$ and $\FPI$.

The \Fdefs{validity-reduction condition} of $\FSD$ is defined as follows: one of the following conditions holds:
\begin{enumerate}[label=(\alph*),leftmargin=3.33em]
\item $\vDash \gamma$;
\item there is $j' \in \FPI$ such that $\vDash \FBV_{j \in \FPIz} \psi_{j} \lor \psi_{j'}$.
\end{enumerate}

\end{definition}

The following example illustrates this definition.

\begin{example}[]
\label{example:Standard disjunctions}

Let $\gamma = \bot$ and $\FPI = \{1,2\}$.
Consider the following standard disjunction for $\FCLp$ with respect to $\gamma$ and $\FPI$, where
$\FBB_{1} = \FAG$, $\psi_1 = p$, $\FBB_{2} = \{a\}$, $\psi_2 = \neg p$:
\[
\FSD = \bot \lor \Fclo{\FBB_{1}} \psi_1 \lor \Fclo{\FBB_{2}} \psi_2,
\]
that is,
\[
\FSD = \bot \lor \Fclo{\FAG} p \lor \Fclo{a} \neg p.
\]

It is easy to verify $\models \FSD$.
In fact, the validity-reduction condition of $\FSD$ is met. Note $\FPI^0 = \{1\}$. Let $j' = 2$. Note $\FBV_{j \in \FPIz} \psi_{j} \lor \psi_{j'} = p \lor \neg p$. Clearly, $\vDash p \lor \neg p$.

\end{example}

\begin{lemma}[Downward validity for $\FCLp$]\label{lemma:Downward validity for FCLp}

Let $\gamma$ be an elementary disjunction, $\FPI$ be a set of positive indices, and $\FSD = \gamma \vee \FBV_{j \in \FPI} \Fclo{\FBB_j} \psi_j$ be a standard disjunction for $\FCLp$ with respect to $\gamma$ and $\FPI$.

Assume $\vDash \FSD$.

Then, the validity-reduction condition of $\FSD$ is met; that is, one of the following conditions holds:
\begin{enumerate}[label=(\alph*),leftmargin=3.33em]
\item $\vDash \gamma$;
\item there is $j' \in \FPI$ such that $\vDash \FBV_{j \in \FPIz} \psi_{j} \lor \psi_{j'}$.
\end{enumerate}

\end{lemma}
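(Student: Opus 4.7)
I would prove the contrapositive: supposing that neither (a) nor (b) holds, I construct a pointed model falsifying $\FSD$. The failure of (a) provides a satisfiable elementary conjunction $\gamma'$ equivalent to $\neg \gamma$; the failure of (b) gives that, for every $j' \in \FPI$, the conjunction $\FBW_{k \in \FPIz} \neg \psi_k \land \neg \psi_{j'}$ is satisfiable. The degenerate case $\FPI = \emptyset$ reduces $\FSD$ to $\gamma$, forcing (a), so assume $\FPI \neq \emptyset$. Mirroring the strategy of Lemma \ref{lemma:Downward validity for FCLn}, I design a regular abstract game form $\FAGF = (\FAC_0, \Fforce)$ and invoke Theorem \ref{theorem:Realizability of abstract game forms} on $(\FAGF, \gamma')$ to obtain a pointed model $(\MM, s_0)$ with $\MM, s_0 \Vdash \neg \gamma \land \FBW_{j \in \FPI} \neg \Fclo{\FBB_j} \psi_j$.

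Assuming $\FPI \setminus \FPIz = \{j_0, \dots, j_{m-1}\}$ with $m \geq 1$, take $\FAC_0 = \{0, 1, \dots, m-1\}$ and, for each action profile $\sigma_\FAG \in \FJA^0_\FAG$, define
\[
f(\sigma_\FAG) = \Big(\sum_{a \in \FAG} \sigma_\FAG(a)\Big) \bmod m, \qquad \Fforce(\sigma_\FAG) = \{\neg \psi_k \mid k \in \FPIz\} \cup \{\neg \psi_{j_{f(\sigma_\FAG)}}\}.
\]
For a partial joint action $\sigma_\FAA$ with $\FAA \subsetneq \FAG$, set $\Fforce(\sigma_\FAA) = \bigcap \{\Fforce(\sigma_\FAG) \mid \sigma_\FAG \in \FJA^0_\FAG,\, \sigma_\FAG \supseteq \sigma_\FAA\}$, which automatically yields monotonicity. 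Each $\Fforce(\sigma_\FAG)$ is satisfiable by applying the failure of (b) with $j' = j_{f(\sigma_\FAG)}$, and $\Fforce(\sigma_\FAA)$ is satisfiable as a subset of any $\Fforce(\sigma_\FAG)$, so $\FAGF$ is regular. The residual subcase $\FPI = \FPIz$ is handled by the one-action form $\FAC_0 = \{0\}$ with $\Fforce(\sigma_\FAG) = \{\neg \psi_k \mid k \in \FPIz\}$, satisfiable because the failure of (b) at any $j' \in \FPIz$ yields that $\FBW_{k \in \FPIz} \neg \psi_k$ is satisfiable.

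Given a realization $(\MM, s_0)$ of $(\FAGF, \gamma')$, I check the three falsification clauses. First, $\MM, s_0 \Vdash \neg \gamma$ by the labeling requirement of realization. Second, for $j \in \FPIz$, $\FBB_j = \FAG$, so each action profile has a unique outcome which forces $\neg \psi_j \in \Fforce(\sigma_\FAG)$, refuting $\Fclo{\FAG} \psi_j$. Third, for $j = j_i \in \FPI \setminus \FPIz$ and any $\sigma_{\FBB_j} \in \FJA^0_{\FBB_j}$, I pick $a^* \in \FAG \setminus \FBB_j$ (nonempty since $\FBB_j \subsetneq \FAG$) and extend $\sigma_{\FBB_j}$ to a full profile $\sigma_\FAG$ by setting $\sigma_\FAG(a) = 0$ on $\FAG \setminus (\FBB_j \cup \{a^*\})$ and $\sigma_\FAG(a^*) = \bigl(i - \sum_{a \in \FBB_j} \sigma_{\FBB_j}(a)\bigr) \bmod m$; then $f(\sigma_\FAG) = i$, so the outcome of $\sigma_\FAG$ forces $\neg \psi_{j_i}$ and $\Fclo{\FBB_j} \psi_j$ is refuted as required.

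The main obstacle is engineering the labeling $\sigma_\FAG \mapsto j_{f(\sigma_\FAG)}$ so that simultaneously: (i) each profile receives only one index from $\FPI \setminus \FPIz$, preserving satisfiability of $\Fforce(\sigma_\FAG)$ under the failure of (b), which in general does not allow two terms $\neg \psi_j, \neg \psi_{j'}$ to coexist; and (ii) every partial profile $\sigma_{\FBB_j}$ admits an extension labeled by $j$. The naive analog of the $\FCLn$ recipe, where a designated agent outside $\FBB_j$ ``votes'' for $j$, breaks down here because overlapping coalitions $\FBB_j, \FBB_{j'}$ can tag a single profile with two incompatible indices. The modular-sum labeling circumvents both issues: it assigns each profile exactly one residue class, and because $\FBB_j$ is a proper subcoalition of $\FAG$ there is always a free agent whose action ranges over a complete residue system modulo $m$, so every target label is reachable from every restriction by a single-coordinate adjustment.
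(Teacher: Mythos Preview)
Your proposal is correct and follows essentially the same approach as the paper: both argue by contraposition, build a regular abstract game form whose $\Fforce$ on full profiles always contains $\{\neg\psi_k \mid k \in \FPIz\}$ together with one extra $\neg\psi_{j'}$ selected by a modular sum of the agents' actions, and exploit the free coordinate outside a proper coalition $\FBB_j$ to hit the target residue. The only packaging differences are cosmetic: the paper takes $\FAC_0 = \FPI$ itself (so the extra index $\FArej(\sigma_\FAG)$ may land in $\FPIz$, harmlessly redundant) and defines $\Fforce$ on partial actions directly as $\{\neg\psi_j \mid j \in \FPIz\}$, which avoids your separate subcase $\FPI = \FPIz$ and the intersection formula, but the underlying idea is identical.
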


\begin{proof}
~

It is easy to see that the result holds if $\FPI = \emptyset$. Assume $\FPI \neq \emptyset$.

Assume the validity-reduction of $\FSD$ is not met. Then, (a) $\neg \gamma$ is satisfiable, and (b) for every $j' \in \FPI$, $\FBW_{j \in \FPIz} \neg \psi_{j} \wedge \neg \psi_{j'}$ is satisfiable.
It suffices to show $\neg \gamma \land \FBW_{j \in \FPI} \Fclod{\FBB_j} \neg \psi_j$ is satisfiable.

------------------

Assume $\FPI = \{1, \dots, n\}$. Define an abstract game form $\FAGF = (\FAC_0, \Fforce)$ as follows, where for every $\FCC \subseteq \FAG$, $\FJA^0_\FCC$ is the set of joint actions of $\FCC$ with respect to $\FAC_0$:
\begin{itemize}
\item $\FAC_0 = \FPI$;
\item For every $\FCC \subseteq \FAG$ and $\ja{\FCC} \in \FJA^0_\FCC$, define:
\[
\FArej (\ja{\FCC}) = \big(\sum_{a \in \FCC} \sigma_\FCC (a) \bmod n \big) + 1
\]
\[
\force{\ja{\FCC}} =
\begin{cases}
\{\neg \psi_k\} \cup \{\neg \psi_j \mid j \in \FPIz\} &\text{if } \FCC = \FAG \\
\{\neg \psi_j \mid j \in \FPIz\} & \text{otherwise}
\end{cases}
\]
where $k = \FArej (\ja{\FCC})$
\end{itemize}

\begin{example}[]
\label{example:Standard disjunctions}

The following example illustrates the abstract game form with respect to $\FSD$.

Assume $\FAG = \{a,b\}$. Let $\gamma = \bot$ and $\FPI = \{1,2\}$.
Consider the following standard disjunction for $\FCLp$ with respect to $\gamma$ and $\FPI$, where
$\FBB_{1} = \FAG$, $\psi_1 = p$, $\FBB_{2} = \{a\}$, $\psi_2 = q$:
\[
\FSD = \bot \lor \Fclo{\FBB_{1}} \psi_1 \lor \Fclo{\FBB_{2}} \psi_2,
\]
that is,
\[
\FSD = \bot \lor \Fclo{\FAG} p \lor \Fclo{a} q.
\]

Note $\FPI^0 = \{1\}$. It is easy to check that (a) $\neg \gamma$ is satisfiable, and (b) for every $j' \in \FPI$, $\FBW_{j \in \FPIz} \neg \psi_{j} \wedge \neg \psi_{j'}$ is satisfiable.

The abstract game form $\FAGF = (\FAC_0, \Fforce)$ with respect to $\FSD$ is as follows:
\begin{itemize}
\item $\FAC_0 = \FPI$.

\item 

The sets of joint actions of coalitions are as follows:

$\FJA_\emptyset = \{\epsilon\}$;

$\FJA_a = \FJA_b = \FPI$;

$\FJA_\FAG = \{(1,1), (1,2), (2, 1), (2, 2)\}$.

\item 

The function $\FArej$ from $\FJA$, which equals to $\bigcup \{\FJA_\FAA \mid \FAA \subseteq \FAG\}$, to $\FNI$ is as follows, where we use coalitions as subscripts of $\FArej$ to indicate whose joint actions its arguments are: 

$\FArej_a (1) = (1 \bmod 2) + 1 = 2$;

$\FArej_a (2) = (2 \bmod 2) + 1 = 1$;

$\FArej_b (1) = (1 \bmod 2) + 1 = 2$;

$\FArej_b (2) = (2 \bmod 2) + 1 = 1$;

$\FArej_\FAG (1,1) = (2 \bmod 2) + 1 = 1$;

$\FArej_\FAG (1,2) = (3 \bmod 2) + 1 = 2$;

$\FArej_\FAG (2,1) = (3 \bmod 2) + 1 = 2$;

$\FArej_\FAG (2,2) = (4 \bmod 2) + 1 = 1$.

\item 

The function $\Fforce$ from $\FJA$ to $\Phi_\FCL$ is as follows, where we use coalitions as subscripts of $\Fforce$ to indicate whose joint actions its arguments are:

$\Fforce_a (1) = \{\neg p\}$;

$\Fforce_a (2) = \{\neg p\}$;

$\Fforce_b (1) = \{\neg p\}$;

$\Fforce_b (2) = \{\neg p\}$;

$\Fforce_\FAG (1,1) = \{\neg p\}$;

$\Fforce_\FAG (1,2) = \{\neg p, \neg q\}$;

$\Fforce_\FAG (2,1) = \{\neg p, \neg q\}$;

$\Fforce_\FAG (2,2) = \{\neg p\}$.

\end{itemize}

\end{example}

By (b), for every $\FCC \subseteq \FAG$ and $\ja{\FCC} \in \FJA^0_\FCC$, $\force{\ja{\FCC}}$ is satisfiable. It is easy to see that $\Fforce$ is monotonic. Then, $\FAGF$ is regular.

--------------------

Let $\gamma'$ be an elementary conjunction equivalent to $\neg \gamma$.

By Theorem \ref{theorem:Realizability of abstract game forms}, there is a pointed concurrent game model $(\MM,s_0)$ realizing $\FAGF$ and $\gamma'$. Let $\MM = (\FST, \FAC, \Faja, \Fout, \Flab)$. We want to show $\MM, s_0 \Vdash \neg \gamma \wedge \FBW_{j \in \FPI} \Fclod{\FAA_j} \neg \psi_j$.

By Definition \ref{definition:Realization}, $\MM, s_0 \Vdash \neg \gamma$.

Let $j \in \FPI$. It suffices to show $\MM, s_0 \Vdash \Fclod{\FAA_j} \neg \psi_j$.
Let $\ja{\FAA_{j}} \in \Faja (s_0, \FAA_j)$. Then, $\ja{\FAA_{j}} \in \FJA^0_{\FAA_j}$.
Consider the following two cases:
\begin{itemize}

\item

\textbf{Case 1:} $\FAA_{j} = \FAG$. Then, $j\in \FPIz$ and $\neg \psi_j\in \force{\ja{\FAA_{j}}}$. By Definition \ref{definition:Realization}, $\ja{\FAA_{j}} \leadto_{s_0} \neg \psi_{j}$.

\item

\textbf{Case 2:} $\FAA_{j} \neq \FAG$. Then, $\overline{\FAA_j} \neq \emptyset$. Then, there is $\ja{\overline{\FAA_{j}}} \in \FJA^0_{\overline{\FAA_{j}}}$ such that $\FArej (\ja{\FAA_{j}} \cup \ja{\overline{\FAA_{j}}}) = j$. Then, $\neg \psi_j \in \force{\ja{\FAA_{j}} \cup \ja{\overline{\FAA_{j}}}}$. By Definition \ref{definition:Realization}, $\ja{\FAA_{j}} \cup \ja{\overline{\FAA_{j}}} \leadto_{s_0} \neg \psi_{j}$.

\end{itemize}

Then, $\MM, s_0 \Vdash \Fclod{\FAA_j} \neg \psi_j$.

\end{proof}

\subsection{Upward derivability lemma for $\FCLp$}

\begin{definition}[Derivability-reduction condition of standard disjunction for $\FCLp$]

Let $\gamma$ be an elementary disjunction, $\FPI$ be a set of positive indices, and $\FSD = \gamma \vee \FBV_{j \in \FPI} \Fclo{\FBB_j} \psi_j$ be a standard disjunction for $\FCLp$ with respect to $\gamma$ and $\FPI$.

The \Fdefs{derivability-reduction condition} of $\FSD$ is defined as follows: one of the following conditions holds:
\begin{enumerate}[label=(\alph*),leftmargin=3.33em]
\item $\vdash_\FCLp \gamma$;
\item there is $j' \in \FPI$ such that $\vdash_\FCLp \FBV_{j \in \FPIz} \psi_{j} \lor \psi_{j'}$.
\end{enumerate}

\end{definition}

\begin{lemma}[Upward derivability for $\FCLp$]\label{lemma:Upward derivability for FCLp}

Let $\gamma$ be an elementary disjunction, $\FPI$ be a set of positive indices, and $\FSD = \gamma \vee \FBV_{j \in \FPI} \Fclo{\FBB_j} \psi_j$ be a standard disjunction for $\FCLp$ with respect to $\gamma$ and $\FPI$.

Assume the derivability-reduction condition of $\FSD$ is met; that is, one of the following conditions holds:
\begin{enumerate}[label=(\alph*),leftmargin=3.33em]
\item $\vdash_\FCLp \gamma$;
\item there is $j' \in \FPI$ such that $\vdash_\FCLp \FBV_{j \in \FPIz} \psi_{j} \lor \psi_{j'}$.
\end{enumerate}

Then $\vdash_\FCLp \gamma \lor \FBV_{j \in \FPI} \Fclo{\FBB_j} \psi_j$.

\end{lemma}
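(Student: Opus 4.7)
The plan is to dispose of the two cases of the derivability-reduction condition separately, exploiting the tight correspondence between the syntactic form of $\FSD$ and the three inference rules. Case (a) is essentially immediate: given $\vdash_\FCLp \gamma$, a single application of $\Frule_1$ to the propositional tautology $\gamma \to \gamma \lor \FBV_{j \in \FPI} \Fclo{\FBB_j} \psi_j$ yields the conclusion.

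Case (b) is where the real work sits. The idea will be to necessitate the hypothesis under the coalition $\FBB_{j'}$ and then repeatedly invoke $\Frule_3$ to split up the disjunction inside the modality. Concretely, starting from $\vdash_\FCLp \psi_{j'} \lor \FBV_{j \in \FPIz} \psi_{j}$, I first apply $\Frule_2$ with coalition $\FBB_{j'}$ to obtain $\vdash_\FCLp \Fclo{\FBB_{j'}}(\psi_{j'} \lor \FBV_{j \in \FPIz} \psi_{j})$. A single invocation of $\Frule_3$ with $\phi_1 = \psi_{j'}$ and $\phi_2 = \FBV_{j \in \FPIz} \psi_j$ (and empty $\chi$) yields $\vdash_\FCLp \Fclo{\FBB_{j'}} \psi_{j'} \lor \Fclo{\FAG} \FBV_{j \in \FPIz} \psi_j$. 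Next, enumerating $\FPIz = \{j_1, \dots, j_m\}$, I iteratively apply $\Frule_3$ with $\FAA = \FAG$, peeling off one $\psi_{j_k}$ per step and carrying $\Fclo{\FBB_{j'}} \psi_{j'}$ together with the disjuncts already split off as the $\chi$-part. This produces $\vdash_\FCLp \Fclo{\FBB_{j'}} \psi_{j'} \lor \FBV_{k=1}^{m} \Fclo{\FAG} \psi_{j_k}$. Since $\FBB_{j_k} = \FAG$ for each $j_k \in \FPIz$, this is (up to propositional reshuffling) a sub-disjunction of $\FBV_{j \in \FPI} \Fclo{\FBB_j} \psi_j$, so a final weakening via $\Frule_1$ delivers $\vdash_\FCLp \gamma \lor \FBV_{j \in \FPI} \Fclo{\FBB_j} \psi_j$.

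I do not anticipate any genuine obstacle, and the structure closely mirrors the analogous derivation for $\FCLn$ in Lemma \ref{lemma:Upward derivability for FCLn}. The two degenerate subcases are handled uniformly: if $\FPI = \emptyset$ then only case (a) can hold; if $\FPIz = \emptyset$ the iterated splitting is vacuous and after $\Frule_2$ we land immediately at $\vdash_\FCLp \Fclo{\FBB_{j'}} \psi_{j'}$, whence $\Frule_1$ suffices. The only point requiring bookkeeping is the iteration of $\Frule_3$: since every split tags the extracted disjunct with $\FAG$, the procedure terminates cleanly precisely because every remaining index lies in $\FPIz$, which explains why the derivability-reduction condition is formulated in terms of $\FPIz$ in the first place.
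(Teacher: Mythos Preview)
Your proposal is correct and follows essentially the same route as the paper's proof: in case (b) the paper also applies $\Frule_2$ with coalition $\FBB_{j'}$, then repeatedly applies $\Frule_3$ to obtain $\Fclo{\FBB_{j'}}\psi_{j'}\vee\FBV_{j\in\FPIz}\Fclo{\FAG}\psi_j$, and concludes via $\Frule_1$. Your write-up simply makes the iteration of $\Frule_3$ a bit more explicit and spells out the degenerate subcases, but the argument is the same.
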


\begin{proof}
~

It is easy to see that the result holds if $\FPI = \emptyset$. Assume $\FPI \neq \emptyset$.

Suppose (a). By application of the rule $\Frule_1$:
\[
\vdash_\FCLp \gamma \lor \FBV_{j \in \FPI} \Fclo{\FBB_i} \psi_j
\]

Suppose (b).

\begin{enumerate}[label=(\arabic*),leftmargin=3.33em]

\item

By application of the rule $\Frule_2$:
\[
\vdash_\FCLp \Fclo{\FBB_{j'}}(\psi_{j'} \vee \FBV_{j \in \FPIz} \psi_{j})
\]

\item

By repeated applications of the rules $\Frule_3$:
\[
\vdash_\FCLp \Fclo{\FBB_{j'}} \psi_{j'} \vee \FBV_{j \in \FPIz} \Fclo{\FAG} \psi_{j}
\]

\item

By application of the rule $\Frule_1$:
\[
\vdash_\FCLp \gamma \lor \FBV_{j \in \FPI} \Fclo{\FBB_j} \psi_j
\]

\end{enumerate}

\end{proof}

\subsection{Completeness of $\FCLp$ by induction}

\begin{theorem}[Completeness of $\FCLp$]
The axiomatic system for $\FCLp$ given in Definition \ref{definition:An axiomatic system for FCLp} is complete with respect to the set of valid formulas in $\Phi_{\FCLp}$.
\end{theorem}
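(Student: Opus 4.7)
The plan is to mimic verbatim the four-step induction used for $\FCLn$ in Theorem \ref{theorem:Completeness of FCLn}, since all three supporting lemmas for $\FCLp$ (the normal form Lemma \ref{lemma:normal-form CLp}, the downward validity Lemma \ref{lemma:Downward validity for FCLp}, and the upward derivability Lemma \ref{lemma:Upward derivability for FCLp}) have already been established. Let $\phi \in \Phi_{\FCLp}$ with $\models \phi$; I will show $\vdash_\FCLp \phi$ by induction on the modal depth $n$ of $\phi$.

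For the base case $n = 0$, the formula $\phi$ is a propositional tautology in $\Phi_\FCLp$, so it is an axiom of the system in Definition \ref{definition:An axiomatic system for FCLp} and hence $\vdash_\FCLp \phi$. For the inductive step $n > 0$, apply Lemma \ref{lemma:normal-form CLp} to obtain $\phi' \in \Phi_\FCLp$ of the same modal depth as $\phi$, with $\models \phi \Leftrightarrow \models \phi'$ and $\vdash_\FCLp \phi \Leftrightarrow \vdash_\FCLp \phi'$, such that $\phi' = \FSD_0 \land \dots \land \FSD_k$ with each $\FSD_i$ a standard disjunction for $\FCLp$. Since $\models \phi'$, each conjunct satisfies $\models \FSD_i$, and it suffices to derive each $\FSD_i$ separately.

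For each $\FSD_i$, if its modal depth is strictly less than $n$, the inductive hypothesis gives $\vdash_\FCLp \FSD_i$ directly. Otherwise, write $\FSD_i = \gamma \vee \FBV_{j \in \FPI} \Fclo{\FBB_j} \psi_j$ and apply Lemma \ref{lemma:Downward validity for FCLp}: either $\models \gamma$, or there is $j' \in \FPI$ with $\models \FBV_{j \in \FPIz} \psi_j \lor \psi_{j'}$. In either case the witnessing formula has modal depth strictly less than the modal depth of $\FSD_i$, so the inductive hypothesis yields the corresponding derivability condition, namely $\vdash_\FCLp \gamma$ or $\vdash_\FCLp \FBV_{j \in \FPIz} \psi_j \lor \psi_{j'}$. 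Lemma \ref{lemma:Upward derivability for FCLp} then delivers $\vdash_\FCLp \FSD_i$, completing the induction.

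There is essentially no new obstacle at this stage, since the heavy lifting is concentrated in the downward validity lemma (where regular abstract game forms and Theorem \ref{theorem:Realizability of abstract game forms} are used). The only point worth checking carefully is that the modal depths of $\gamma$ and of the $\psi_j$'s in the validity-reduction condition are indeed strictly smaller than that of $\FSD_i$, which is immediate from the syntactic shape of a standard disjunction: $\gamma$ is purely propositional, and each $\psi_j$ sits properly under the top-level modality $\Fclo{\FBB_j}$.
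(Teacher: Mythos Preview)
Your proposal is correct and follows essentially the same approach as the paper's own proof: induction on modal depth, using the normal form lemma, the downward validity lemma, the inductive hypothesis to pass from validity to derivability of the reduced formulas, and the upward derivability lemma. The paper's proof is slightly more terse but structurally identical.
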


\begin{proof}
~

Let $\phi$ be a formula in $\Phi_{\FCLp}$. Assume $\models \phi$. We want to show $\vdash_\FCLp \phi$. We put an induction on the modal depth $n$ of $\phi$.

Assume $n = 0$. Then, $\phi$ is a propositional tautology. Then, $\vdash_\FCLp \phi$.

Assume $n > 0$.

By Lemma \ref{lemma:normal-form CLp}, the normal form lemma for $\FCLp$, there is $\phi' \in \Phi_{\FCLp}$ such that (1) $\models \phi$ if and only if $\models \phi'$, (2) $\vdash_\FCLp \phi$ if and only if $\vdash_{\FCLp} \phi'$, (3) $\phi$ and $\phi'$ have the same modal depth, and (4) $\phi'$ is in the form of $\FSD_0 \land \dots \land \FSD_k$, where every $\FSD_i$ is a standard disjunction for $\FCLp$.

Then $\models \phi'$. Let $i \leq k$. Then, $\models \FSD_i$. It suffices to show $\vdash_\FCLp \FSD_i$.

Assume the modal depth of $\FSD_i$ is less than $n$. By the inductive hypothesis, $\vdash_\FCLp \FSD_i$.

Assume the modal depth of $\FSD_i$ is $n$. Let $\FSD_i = \gamma \vee \FBV_{i \in \FPI} \Fclo{\FAA_i} \phi_i$.

By Lemma \ref{lemma:Downward validity for FCLp}, the downward validity lemma for $\FCLp$, the validity-reduction condition of $\FSD_i$ is met; that is, one of the following conditions holds:
\begin{enumerate}[label=(\alph*),leftmargin=3.33em]
\item $\vDash \gamma$;
\item there is $j' \in \FPI$ such that $\vDash \FBV_{j \in \FPIz} \psi_{j} \lor \psi_{j'}$.
\end{enumerate}

By the inductive hypothesis, the derivability-reduction condition of $\FSD_i$ is met; that is, one of the following conditions holds:
\begin{enumerate}[label=(\alph*),leftmargin=3.33em]
\item $\vdash_\FCLp \gamma$;
\item $\vdash_\FCLp \FBV_{j \in \FPIz} \psi_{j} \lor \psi_{j'}$.
\end{enumerate}

By Lemma \ref{lemma:Upward derivability for FCLp}, the upward derivability lemma for $\FCLp$, $\vdash_\FCLp \FSD_i$.

\end{proof}

\section{Completeness of the liability fragment $\FCCSRn$ of $\FCCSR$}
\label{section:Completeness of the liability fragment of FCCSR}

\begin{definition}[An axiomatic system for $\FCCSRn$]\label{definition:An axiomatic system for FCCSRn}
~

\noindent Axioms: propositional tautologies in $\Phi_{\FCCSRn}$

\noindent Inference rules:

\begin{center}
\begin{tabular}{lll}
$\Frule_1$: & $\dfrac{\phi_1, \dots, \phi_n}
{\psi}$ & \parbox[t]{22em}{where $(\phi_1 \land \dots \land \phi_n) \rightarrow \psi$ is a propositional tautology} \vspace{10pt} \\
$\Frule_2$: & $\dfrac{\phi}
{\Fchanced{\FAA}{\phi}{\emptyset}{\bot}}$ & \vspace{10pt} \\
$\Frule_3$: & $\dfrac{\Fchanced{\FAA \cup \FBB}{\phi \lor \psi}{\emptyset}{\bot} \lor \chi}
{\Fchanced{\FAA}{\phi}{\emptyset}{\bot} \lor \Fchanced{\FBB}{\psi}{\emptyset}{\bot} \lor \chi}$ & \parbox[t]{22em}{where $\FAA \cap \FBB = \emptyset$} \vspace{10pt} \\
$\Frule_4$: & $\dfrac{\Fchanced{\FAA}{\phi}{\emptyset}{\bot} \lor \chi}
{\Fchanced{\FAA}{\phi}{\FBB}{\psi} \lor \chi}$ & \vspace{10pt} \\
$\Frule_5$: & $\dfrac{\Fchanced{\FAA \cup \FBB}{\phi \vee \psi}{\emptyset}{\bot} \lor \chi}
{\Fchanced{\FAA}{\phi}{\FBB}{\psi} \lor \chi}$ &
\end{tabular}
\end{center}

\end{definition}

\begin{theorem}[Soundness of $\FCCSRn$]

The axiomatic system for $\FCCSRn$ given in Definition \ref{definition:An axiomatic system for FCCSRn} is sound with respect to the set of valid formulas in $\Phi_{\FCCSRn}$.

\end{theorem}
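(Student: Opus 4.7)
The plan is the standard one for soundness: check that each axiom is valid and that each of the five rules preserves validity. The axioms are propositional tautologies in $\Phi_{\FCCSRn}$ and are therefore valid, and $\Frule_1$ preserves validity by the semantic soundness of propositional consequence. The real work lies in $\Frule_2$--$\Frule_5$.

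A single observation reduces most of the remaining work to cases already handled in the soundness proof for $\FCLn$. Unfolding the semantic clause for $\Fchanced{\cdot}{\cdot}{\cdot}{\cdot}$ shows that $\Fchanced{\FAA}{\phi}{\emptyset}{\bot}$ is equivalent to $\Fclod{\FAA}\phi$: the second disjunct in the clause, namely ``for every joint action of $\emptyset$, $\bot$ holds at some outcome of $\sigma_\FAA \uplus \sigma_\emptyset$'', is vacuously false since $\emptyset$ has a unique joint action $\emptyset$ and $\bot$ is never satisfied, while the first disjunct reduces to $\sigma_\FAA \Fena_s \phi$. Under this identification, $\Frule_2$ becomes the familiar necessitation rule $\phi / \Fclod{\FAA}\phi$, and $\Frule_3$ becomes $\Fclod{\FAA \cup \FBB}(\phi \lor \psi) \lor \chi \vdash \Fclod{\FAA}\phi \lor \Fclod{\FBB}\psi \lor \chi$ for disjoint $\FAA$ and $\FBB$; both have already been verified to preserve validity in the soundness argument for $\FCLn$.

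For $\Frule_4$, the implication $\Fchanced{\FAA}{\phi}{\emptyset}{\bot} \to \Fchanced{\FAA}{\phi}{\FBB}{\psi}$ is immediate: the first disjunct in the semantic clause of $\Fchanced{\FAA}{\phi}{\FBB}{\psi}$ is exactly the condition defining $\Fchanced{\FAA}{\phi}{\emptyset}{\bot}$, so any pointed model satisfying the antecedent automatically satisfies the consequent. For $\Frule_5$, I would pass to the contrapositive and show the validity of $\Fchance{\FAA}{\neg\phi}{\FBB}{\neg\psi} \to \Fchance{\FAA \cup \FBB}{\neg\phi \land \neg\psi}{\emptyset}{\top}$. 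Assuming the antecedent, the reformulation from the fact in the preliminaries gives a joint action $\sigma_{\FAA \cup \FBB} \in \Faja(s,\FAA \cup \FBB)$ with $\sigma_{\FAA \cup \FBB}|_\FAA \leadto_s \neg\phi$ and $\sigma_{\FAA \cup \FBB} \leadto_s \neg\psi$; since every outcome of the full joint action is also an outcome of its restriction to $\FAA$, we conclude $\sigma_{\FAA \cup \FBB} \leadto_s \neg\phi \land \neg\psi$, which together with the definability of $\Fclo{\cdot}$ from $\Fchance{\cdot}{\cdot}{\emptyset}{\top}$ yields the consequent.

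The main potential obstacle is purely notational: carefully tracking how the combination operator $\uplus$ and the restriction and extension of joint actions interact, and in particular verifying the monotonicity fact that every outcome of an extension of a joint action is also an outcome of the action itself. With the equivalence $\Fchanced{\FAA}{\phi}{\emptyset}{\bot} \equiv \Fclod{\FAA}\phi$ in hand, however, everything else is either immediate from the semantic clauses or an instance of the $\FCLn$ soundness argument, so I do not anticipate any genuine difficulty.
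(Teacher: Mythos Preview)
Your proposal is correct and follows essentially the same route as the paper: both reduce $\Frule_2$ and $\Frule_3$ to the $\FCLn$ case via the equivalence $\Fchanced{\FAA}{\phi}{\emptyset}{\bot} \equiv \Fclod{\FAA}\phi$, treat $\Frule_4$ as immediate from the semantic clause, and handle $\Frule_5$ by passing to the contrapositive and verifying $\Fchance{\FAA}{\neg\phi}{\FBB}{\neg\psi} \to \Fchance{\FAA \cup \FBB}{\neg\phi \land \neg\psi}{\emptyset}{\top}$. Your write-up is in fact slightly more explicit than the paper's on the $\Frule_5$ step, invoking the reformulation of $\Fchance{\FAA}{\phi}{\FBB}{\psi}$ in terms of a single joint action of $\FAA \cup \FBB$.
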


\begin{proof}
~

It suffices to show that all the axioms are valid and all the rules preserve validity.
It is easy to see that all the axioms are valid, and the first, the second, and the fourth rules preserve validity.

We show that the third rule preserves validity.
It suffices to show
$
\Fchanced{\FAA \cup \FBB}{\phi \lor \psi}{\emptyset}{\bot}
\rightarrow
\big( \Fchanced{\FAA}{\phi}{\emptyset}{\bot} \lor \Fchanced{\FBB}{\psi}{\emptyset}{\bot} \big)
$
is valid, where $\FAA \cap \FBB = \emptyset$. It is easy to see that this formula is equivalent to $
\neg \big(
\Fchanced{\FAA}{\phi}{\emptyset}{\bot} \lor \Fchanced{\FBB}{\psi}{\emptyset}{\bot}
\big)
\rightarrow 
\neg \Fchanced{\FAA \cup \FBB}{\phi \lor \psi}{\emptyset}{\bot}
$, which is equivalent to $
\big(
\Fchance{\FAA}{\neg \phi}{\emptyset}{\top} \land \Fchance{\FBB}{\neg \psi}{\emptyset}{\top}
\big)
\rightarrow 
\Fchance{\FAA \cup \FBB}{\neg \phi \land \neg \psi}{\emptyset}{\top}
$, which is equivalent to 
$
\big(\Fclo{\FAA} \neg \phi \land \Fclo{\FBB} \neg \phi\big) \rightarrow \Fclo{\FAA \cup \FBB} (\neg \phi \land \neg \psi)
$, which can be easily shown to be valid.

We show the fifth rule preserves validity. It suffices to show $
\Fchanced{\FAA \cup \FBB}{\phi \vee \psi}{\emptyset}{\bot}
\rightarrow
\Fchanced{\FAA}{\phi}{\FBB}{\psi}
$ is valid.
It suffices to show $
\neg \Fchanced{\FAA}{\phi}{\FBB}{\psi}
\rightarrow
\neg \Fchanced{\FAA \cup \FBB}{\phi \vee \psi}{\emptyset}{\bot}
$ is valid.
It suffices to show
$
\Fchance{\FAA}{\neg \phi}{\FBB}{\neg \psi}
\rightarrow
\Fchance{\FAA \cup \FBB}{\neg \phi \wedge \neg \psi}{\emptyset}{\top}
$ is valid, which is easy to see.

\end{proof}

\subsection{Standard disjunctions and a normal form lemma for $\FCCSRn$}

\begin{definition}[Standard disjunctions for $\FCCSRn$]

Let $\gamma$ be an elementary disjunction and $\FNI$ be a set of negative indices.

A formula $\FSD$ in the form of $\gamma \vee \FBV_{i \in \FNI} \Fchanced{\FAA_i}{\phi_i}{\FBB_i}{\psi_i}$ in $\Phi_{\FCCSRn}$ is called a \Fdefs{standard disjunction} for $\FCCSRn$ with respect to $\gamma$ and $\FNI$.

\end{definition}

The following example illustrates this definition.

\begin{example}
\label{example: a valid CConStR_LI standard disjunction}

Let $\gamma = \bot $ and $\FNI =\{-1,-2,-3\}$. Then, the following formula is a standard disjunction for $\FCCSRn$ with respect to $\gamma$ and $\FNI$, where $\FAA_{-1}=\{a\}$, $\phi_{-1}=p$, $\FBB_{-1}= \emptyset$, $\psi_{-1} = \bot$,
$\FAA_{-2}=\{b\}$, $\phi_{-2}=q$, $\FBB_{-2}=\emptyset$, $\psi_{-2} = \bot$,
$\FAA_{-3}=\{c\}$, $\phi_{-3} = \bot$, $\FBB_{-3}=\{d\}$, $\psi_{-3}=\neg p \wedge \neg q$:
\[
\FSD = \bot \vee \Fchanced{\FAA_{-1}}{\phi_{-1}}{\FBB_{-1}}{\psi_{-1}} \vee \Fchanced{\FAA_{-2}}{\phi_{-2}}{\FBB_{-2}}{\psi_{-2}} \vee \Fchanced{\FAA_{-3}}{\phi_{-3}}{\FBB_{-3}}{\psi_{-3}},
\]
that is,
\[
\FSD = \bot \vee \Fchanced{a}{p}{\emptyset }{\bot}\vee \Fchanced{b}{q}{\emptyset }{\bot}\vee \Fchanced{c}{\bot}{d}{(\neg p \wedge \neg q)}.
\]

\end{example}

\begin{lemma}[Normal form for $\FCCSRn$]
\label{lemma:normal-form CCSRn}

For every $\phi \in \Phi_{\FCCSRn}$, there is $\phi' \in \Phi_{\FCCSRn}$ such that (1) $\models \phi$ if and only if $\models \phi'$, (2) $\vdash_\FCCSRn \phi$ if and only if $\vdash_{\FCCSRn} \phi'$, (3) $\phi$ and $\phi'$ have the same modal depth, and (4) $\phi'$ is in the form of $\FSD_0 \land \dots \land \FSD_k$, where every $\FSD_i$ is a standard disjunction for $\FCCSRn$.

\end{lemma}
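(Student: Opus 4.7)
The plan is to treat this lemma as a purely propositional manipulation. I would regard every maximal subformula of $\phi$ of the form $\Fchanced{\FAA}{\chi_1}{\FBB}{\chi_2}$ --- i.e.\ one that is not nested inside another such operator --- as a fresh propositional letter, and regard each literal $p$ or $\neg p$ as a propositional literal as well. Because the syntax of $\Phi_{\FCCSRn}$ pushes every negation down to the atomic level and permits only $\wedge$, $\vee$, $\top$, $\bot$ as propositional connectives on top, what is left of $\phi$ is then a purely positive Boolean combination over these ``atoms''. Applying the usual distributivity of $\wedge$ over $\vee$ together with the standard simplifications for $\top$ and $\bot$, I convert this into a conjunctive normal form $C_0 \wedge \dots \wedge C_k$, where each $C_i$ is a disjunction of such atoms.

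In each clause $C_i$ I then group the propositional literals into an elementary disjunction $\gamma_i$ (taking $\gamma_i = \bot$ when no literal occurs) and group the modal subformulas into a big disjunction $\FBV_{j \in \FNI_i} \Fchanced{\FAA_j}{\phi_j}{\FBB_j}{\psi_j}$ indexed by a finite (possibly empty) set $\FNI_i$ of negative indices. Each $C_i$ then has exactly the shape of a standard disjunction $\FSD_i$ for $\FCCSRn$, and I set $\phi' := \FSD_0 \wedge \dots \wedge \FSD_k$.

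It remains to verify the four clauses of the statement. Clause (3) is immediate, because the CNF conversion never descends into a modal subformula, so the modal depth of each outermost modal atom is preserved and no new modalities are introduced. Clause (1) follows from the fact that every step of the conversion is a substitution instance of a propositional tautology in the chosen atoms, so $\phi$ and $\phi'$ have the same truth value at every pointed concurrent game model. Clause (2) is equally direct: the axiomatic system of $\FCCSRn$ contains every propositional tautology in $\Phi_{\FCCSRn}$ as an axiom and $\Frule_1$ closes derivability under propositional consequence, so $\phi$ and $\phi'$ are interderivable in $\FCCSRn$. The only mild bookkeeping is around degenerate clauses --- empty propositional part, empty modal part, clauses containing $\top$, or a formula $\phi$ which is already propositional --- but all such cases fit into the standard-disjunction template by choosing $\gamma_i = \bot$ or $\FNI_i = \emptyset$. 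The real substance of the completeness argument lies not here but in the forthcoming downward validity lemma for $\FCCSRn$.
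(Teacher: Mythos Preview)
Your proposal is correct and follows essentially the same approach as the paper, which merely remarks that the lemma follows from $\FCCSRn$ being an extension of classical propositional logic together with the soundness of $\FCCSRn$. You have simply spelled out what that remark means: treat outermost modal subformulas as fresh atoms, distribute into CNF, and read off the standard disjunctions.
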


This lemma can be shown by (1) $\FCCSRn$ is an extension of the classical propositional logic, and (2) the soundness of $\FCCSRn$.

The following notion will be used in stating the validity-reduction and the derivability-reduction conditions of standard disjunctions for $\FCCSRn$.

\begin{definition}[Neat tuples of negative index sets]

Let $\gamma$ be an elementary disjunction, $\FNI$ be a set of negative indices, and $\FSD = \gamma \vee \FBV_{i \in \FNI} \Fchanced{\FAA_i}{\phi_i}{\FBB_i}{\psi_i}$ be a standard disjunction for $\FCCSRn$ with respect to $\gamma$ and $\FNI$.

A tuple $(\FNI_1,\FNI_2)$, where $\FNI_1 \subseteq \FNI$ and $\FNI_2 \subseteq \FNI$, is \Fdefs{neat} if:
\begin{itemize}
\item for all $i, i' \in \FNI_1$, if $i \neq i'$, then $\FAA_i \cap \FAA_{i'} = \emptyset$;
\item for all $i, i' \in \FNI_2$, if $i \neq i'$, then $(\FAA_i \cup \FBB_i) \cap (\FAA_{i'} \cup \FBB_{i'}) = \emptyset$;
\item for all $i \in \FNI_1$ and $i' \in \FNI_2$, if $i \neq i'$, then $\FAA_{i} \cap (\FAA_{i'} \cup \FBB_{i'}) = \emptyset$.
\end{itemize}

\end{definition}

The following example illustrates this definition.

\begin{example}
\label{example: neatness of the valid CConStR_LI standard disjunction}

Consider the standard disjunction for $\FCCSRn$ given in example \ref{example: a valid CConStR_LI standard disjunction}. Let $\FNI_1=\{-1,-2\}$ and $\FNI_2=\{-3\}$. It is easy to check $(\FNI_1,\FNI_2)$ is neat.

\end{example}

\subsection{Downward validity lemma for $\FCCSRn$}

\begin{definition}[Validity-reduction condition of standard disjunctions for $\FCCSRn$]

Let $\gamma$ be an elementary disjunction, $\FNI$ be a set of negative indices, and $\FSD = \gamma \vee \FBV_{i \in \FNI} \Fchanced{\FAA_i}{\phi_i}{\FBB_i}{\psi_i}$ be a standard disjunction for $\FCCSRn$ with respect to $\gamma$ and $\FNI$.

The \Fdefs{validity-reduction condition} of $\FSD$ is defined as follows: one of the following conditions holds:
\begin{enumerate}[label=(\alph*),leftmargin=3.33em]
\item $\models \gamma$;
\item there is $\FNI_1 \subseteq \FNI$ and $\FNI_2 \subseteq \FNI$ such that $(\FNI_1,\FNI_2)$ is neat and $\models \FBV_{i \in \FNI_1} \phi_i \vee \FBV_{i \in \FNI_2} (\phi_i \lor \psi_i)$.
\end{enumerate}

\end{definition}

The following example illustrates this definition.

\begin{example}

Let $\gamma = \bot $ and $\FNI =\{-1,-2,-3\}$. Consider the following standard disjunction for $\FCCSRn$ with respect to $\gamma$ and $\FNI$, where $\FAA_{-1}=\{a\}$, $\phi_{-1}=p$, $\FBB_{-1}= \emptyset$, $\psi_{-1} = \bot$,
$\FAA_{-2}=\{b\}$, $\phi_{-2}=q$, $\FBB_{-2}=\emptyset$, $\psi_{-2} = \bot$,
$\FAA_{-3}=\{c\}$, $\phi_{-3} = \bot$, $\FBB_{-3}=\{d\}$, $\psi_{-3}=\neg p \wedge \neg q$:
\[
\FSD = \bot \vee \Fchanced{\FAA_{-1}}{\phi_{-1}}{\FBB_{-1}}{\psi_{-1}} \vee \Fchanced{\FAA_{-2}}{\phi_{-2}}{\FBB_{-2}}{\psi_{-2}} \vee \Fchanced{\FAA_{-3}}{\phi_{-3}}{\FBB_{-3}}{\psi_{-3}},
\]
that is,
\[
\FSD = \bot \vee \Fchanced{a}{p}{\emptyset }{\bot}\vee \Fchanced{b}{q}{\emptyset }{\bot}\vee \Fchanced{c}{\bot}{d}{(\neg p \wedge \neg q)}.
\]

It is easy to verify $\models \FSD$. In fact, the validity-reduction condition of $\FSD$ is met.
Let $\FNI_1=\{-1,-2\}$ and $\FNI_2=\{-3\}$. As mentioned in Example \ref{example: neatness of the valid CConStR_LI standard disjunction}, $(\FNI_1,\FNI_2)$ is neat. Note $\FBV_{i \in \FNI_1} \phi_i \vee \FBV_{i \in \FNI_2} (\phi_i \lor \psi_i) = p \vee q\vee (\neg p \wedge \neg q)$. Clearly, $\vDash p \vee q\vee (\neg p \wedge \neg q)$.

\end{example}

\begin{lemma}[Downward validity for $\FCCSRn$]\label{lemma:Downward validity for FCCSRn}

Let $\gamma$ be an elementary disjunction, $\FNI$ be a set of negative indices, and $\FSD = \gamma \vee \FBV_{i \in \FNI} \Fchanced{\FAA_i}{\phi_i}{\FBB_i}{\psi_i}$ be a standard disjunction for $\FCCSRn$ with respect to $\gamma$ and $\FNI$.

Assume $\models \FSD$.

Then, the validity-reduction condition of $\FSD$ is met; that is, one of the following conditions holds:
\begin{enumerate}[label=(\alph*),leftmargin=3.33em]
\item $\vDash \gamma$;
\item there is $\FNI_1 \subseteq \FNI$ and $\FNI_2 \subseteq \FNI$ such that $(\FNI_1,\FNI_2)$ is neat and $\vDash \FBV_{i \in \FNI_1} \phi_i \vee \FBV_{i \in \FNI_2} (\phi_i \lor \psi_i)$.
\end{enumerate}

\end{lemma}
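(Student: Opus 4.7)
The plan is to adapt the approach used for Lemma~\ref{lemma:Downward validity for FCLn} to the richer operator $\Fchanced{\cdot}{\cdot}{\cdot}{\cdot}$. The case $\FNI = \emptyset$ is immediate, so assume $\FNI \neq \emptyset$ and proceed by contradiction: if the validity-reduction condition of $\FSD$ fails, then $\neg \gamma$ is satisfiable and for every neat pair $(\FNI_1, \FNI_2)$ the conjunction $\FBW_{i \in \FNI_1} \neg \phi_i \wedge \FBW_{i \in \FNI_2}(\neg \phi_i \wedge \neg \psi_i)$ is satisfiable. It then suffices to build a pointed concurrent game model in which $\neg \gamma \wedge \FBW_{i \in \FNI} \Fchance{\FAA_i}{\neg \phi_i}{\FBB_i}{\neg \psi_i}$ holds, contradicting $\vDash \FSD$.

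The heart of the proof is the design of a regular abstract game form $\FAGF = (\FAC_0, \Fforce)$ for $\FSD$; Theorem~\ref{theorem:Realizability of abstract game forms} will then supply the model. I take $\FAC_0 = \FNI$, so that each agent's action is a ``vote'' for a disjunct index. For every $\FCC \subseteq \FAG$ and $\sigma_\FCC \in \FJA^0_\FCC$ I define two coincidence sets
\[
\Fcoin_1(\sigma_\FCC) = \{ i \in \FNI \mid \FAA_i \subseteq \FCC \text{ and } \sigma_\FCC(a) = i \text{ for all } a \in \FAA_i \},
\]
\[
\Fcoin_2(\sigma_\FCC) = \{ i \in \FNI \mid \FAA_i \cup \FBB_i \subseteq \FCC \text{ and } \sigma_\FCC(a) = i \text{ for all } a \in \FAA_i \cup \FBB_i \},
\]
and set $\Fforce(\sigma_\FCC) = \{ \neg \phi_i \mid i \in \Fcoin_1(\sigma_\FCC) \} \cup \{ \neg \psi_i \mid i \in \Fcoin_2(\sigma_\FCC) \}$. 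The intended witnesses of $\Fchance{\FAA_i}{\neg \phi_i}{\FBB_i}{\neg \psi_i}$ at the root will be the ``uniformly $i$'' joint actions in which every relevant agent plays the action $i$.

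Regularity reduces to neatness. Since $\Fcoin_2(\sigma_\FCC) \subseteq \Fcoin_1(\sigma_\FCC)$, set $\FNI_1 := \Fcoin_1(\sigma_\FCC) \setminus \Fcoin_2(\sigma_\FCC)$ and $\FNI_2 := \Fcoin_2(\sigma_\FCC)$. An agent-by-agent argument in the style of the $\FCLn$ case shows $(\FNI_1, \FNI_2)$ is neat: any agent lying in two of the relevant ``$\FAA$-type'' or ``$(\FAA \cup \FBB)$-type'' target sets would be forced by $\sigma_\FCC$ to play two distinct indices. Assumption (b) then yields a satisfying valuation for $\Fforce(\sigma_\FCC)$, and monotonicity of $\Fforce$ follows at once from monotonicity of both $\Fcoin_1$ and $\Fcoin_2$ along action extensions.

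Once Theorem~\ref{theorem:Realizability of abstract game forms} produces a pointed model $(\MM, s_0)$ realizing $\FAGF$ together with an elementary conjunction $\gamma'$ equivalent to $\neg \gamma$, the proof closes quickly. For each $i \in \FNI$ let $\sigma_{\FAA_i}^*$ and $\sigma_{\FBB_i}^*$ be the joint actions assigning the action $i$ to every agent of $\FAA_i$ and $\FBB_i$, respectively. Then $i \in \Fcoin_1(\sigma_{\FAA_i}^*)$, and because $\sigma_{\FAA_i}^* \uplus \sigma_{\FBB_i}^*$ still assigns $i$ to every agent of $\FAA_i \cup \FBB_i$ we obtain $i \in \Fcoin_2(\sigma_{\FAA_i}^* \uplus \sigma_{\FBB_i}^*)$. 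Realization therefore gives $\sigma_{\FAA_i}^* \leadto_{s_0} \neg \phi_i$ and $\sigma_{\FAA_i}^* \uplus \sigma_{\FBB_i}^* \leadto_{s_0} \neg \psi_i$, witnessing $\Fchance{\FAA_i}{\neg \phi_i}{\FBB_i}{\neg \psi_i}$ at $s_0$. I expect the main obstacle to be calibrating $\Fforce$ so that its satisfiability reduces \emph{exactly} to the three-clause neatness condition of the lemma --- in particular the mixed clause bounding $\FAA_i \cap (\FAA_{i'} \cup \FBB_{i'})$, which is precisely what forbids an agent from being obliged to vote for two different indices across the $\Fcoin_1/\Fcoin_2$ split.
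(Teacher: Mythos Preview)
Your proposal is correct and follows essentially the same route as the paper's proof: both build the abstract game form on $\FAC_0=\FNI$, use the two coincidence sets $\Fcoin_1$ (for $\FAA_i$) and $\Fcoin_2$ (for $\FAA_i\cup\FBB_i$), verify regularity by showing the associated index pair is neat, and witness each $\Fchance{\FAA_i}{\neg\phi_i}{\FBB_i}{\neg\psi_i}$ with the ``all agents vote $i$'' joint actions. The only cosmetic difference is that the paper puts $\neg\phi_i\wedge\neg\psi_i$ into $\Fforce$ for $i\in\Fcoin_2$ and takes $(\Fcoin_1,\Fcoin_2)$ as the neat pair, whereas you put $\neg\psi_i$ alone (relying on $\Fcoin_2\subseteq\Fcoin_1$ to supply $\neg\phi_i$) and take $(\Fcoin_1\setminus\Fcoin_2,\Fcoin_2)$; the resulting $\Fforce$ sets are logically equivalent and neatness is preserved under shrinking either component, so both choices work.
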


\begin{proof}
~

It is easy to see that the result holds if $\FNI = \emptyset$. Assume $\FNI \neq \emptyset$.

Assume the validity-reduction condition of $\FSD$ is not met. Then, (a) $\neg \gamma$ is satisfiable, and (b) for all $\FNI_1 \subseteq \FNI$ and $\FNI_2 \subseteq \FNI$, if $(\FNI_1,\FNI_2)$ is neat, then $\FBW_{i \in \FNI_1} \neg \phi_i \wedge \FBW_{i \in \FNI_2} (\neg \phi_i \land \neg \psi_i)$ is satisfiable.
It suffices to show $\neg \gamma \wedge \FBW_{i \in \FNI} \Fchance{\FAA_i}{\neg \phi_i}{\FBB_i}{\neg \psi_i}$ is satisfiable.

--------------------

Define an abstract game form $\FAGF = (\FAC_0, \Fforce)$ as follows, where for every $\FAA \subseteq \FAG$, $\FJA^0_\FAA$ is the set of joint actions of $\FAA$ with respect to $\FAC_0$:
\begin{itemize}
\item $\FAC_0 = \FNI$;

\item For every $\FCC \subseteq \FAG$ and $\ja{\FCC} \in \FJA^0_\FCC$, define:
\[
\Fcoinf (\ja{\FCC}) = \{i \in \FNI \mid \FAA_i \subseteq \FCC \text{ and for all } a \in \FAA_i, \sigma_\FCC (a) = i\}
\]
\[
\Fcoinfs (\ja{\FCC}) = \{i \in \FNI \mid \FAA_i \cup \FBB_i \subseteq \FCC \text{ and for all } a \in \FAA_i \cup \FBB_i, \sigma_\FCC (a) = i\}
\]
\[
\force{\ja{\FCC}} = \{ \neg \phi_i \mid i \in \Fcoinf (\ja{\FCC}) \} \cup \{ \neg \phi_i \land \neg \psi_i \mid i \in \Fcoinfs (\ja{\FCC}) \}
\]

\end{itemize}

We claim $\FAGF$ is regular.

First, we show that for every coalition $\FCC$ and $\ja{\FCC} \in \FJA^0_\FCC$, $\Fforce (\ja{\FCC})$ is satisfiable. By (b), it suffices to show that for every coalition $\FCC$ and $\ja{\FCC} \in \FJA^0_\FCC$, $(\Fcoinf (\ja{\FCC}), \Fcoinfs (\ja{\FCC}))$ is neat. Let $\FCC$ be a coalition and $\ja{\FCC} \in \FJA^0_\FCC$.
\begin{itemize}
\item Let $i, i'$ be two different indices in $\Fcoinf (\ja{\FCC})$. Assume $\FAA_i \cap \FAA_{i'} \neq \emptyset$. Let $a \in \FAA_i \cap \FAA_{i'}$. Then $\sigma_\FCC (a) = i$ and $\sigma_\FCC (a) = i'$. There is a contradiction. Then $\FAA_i \cap \FAA_{i'} = \emptyset$.
\item Let $i, i'$ be two different indices in $\Fcoinfs (\ja{\FCC})$. Assume $(\FAA_i \cup \FBB_i) \cap (\FAA_{i'} \cup \FBB_{i'}) \neq \emptyset$. Let $a \in (\FAA_i \cup \FBB_i) \cap (\FAA_{i'} \cup \FBB_{i'})$. Then $\sigma_\FCC (a) = i$ and $\sigma_\FCC (a) = i'$. There is a contradiction. Then $(\FAA_i \cup \FBB_i) \cap (\FAA_{i'} \cup \FBB_{i'}) = \emptyset$.
\item Let $i \in \Fcoinf (\ja{\FCC})$ and $i' \in \Fcoinfs (\ja{\FCC})$ such that $i \neq i'$. Assume $\FAA_i \cap (\FAA_{i'} \cup \FBB_{i'}) \neq \emptyset$. Let $a \in \FAA_i \cap (\FAA_{i'} \cup \FBB_{i'})$. Then $\sigma_\FCC (a) = i$ and $\sigma_\FCC (a) = i'$. There is a contradiction. Then $\FAA_i \cap (\FAA_{i'} \cup \FBB_{i'}) = \emptyset$.
\end{itemize}

Second, we show that $\Fforce$ is monotonic. Let $\FCC$ and $\FCC'$ be two coalitions, $\ja{\FCC} \in \FJA^0_\FCC$ and $\ja{\FCC'} \in \FJA^0_{\FCC'}$ such that $\ja{\FCC}\subseteq \ja{\FCC'}$. It is easy to see $\Fcoinf (\ja{\FCC}) \subseteq \Fcoinf (\ja{\FCC'})$ and $\Fcoinfs (\ja{\FCC}) \subseteq \Fcoinfs (\ja{\FCC'})$. Then, $\Fforce (\ja{\FAA}) \subseteq \Fforce (\ja{\FAA'})$.

--------------------

Let $\gamma'$ be an elementary conjunction equivalent to $\neg \gamma$.

By Theorem \ref{theorem:Realizability of abstract game forms}, there is a pointed concurrent game model $(\MM,s_0)$ realizing $\FAGF$ and $\gamma'$. Let $\MM = (\FST, \FAC, \Faja, \Fout, \Flab)$. We want to show $\MM, s_0 \Vdash \neg \gamma \wedge \FBW_{i \in \FNI} \Fchance{\FAA_i}{\neg \phi_i}{\FBB_i}{\neg \psi_i}$.

By Definition \ref{definition:Realization}, $\MM, s_0 \Vdash \neg \gamma$.

Let $i \in \FNI$. We want to show $\MM, s_0 \Vdash \Fchance{\FAA_i}{\neg \phi_i}{\FBB_i}{\neg \psi_i}$.
Let $\ja{\FAA_i \cup \FBB_i} \in \FJA^0_{\FAA_i \cup \FBB_i}$ be such that $\sigma_{\FAA_i \cup \FBB_i} (a) = i$ for all $a \in \FAA_i \cup \FBB_i$. Let $\sigma_{\FAA_i} = \ja{\FAA_i \cup \FBB_i} |_{\FAA_i}$.
Then $i \in \Fcoinf (\ja{\FAA_i})$ and $\neg \phi_i \in \Fforce (\sigma_{\FAA_i})$. By Definition \ref{definition:Realization}, $\ja{\FAA_i} \leadto_{s_0} \neg \phi_i$.
Then $i \in \Fcoinfs (\ja{\FAA_i \cup \FBB_i})$ and $\neg \phi_i \land \neg \psi_i \in \Fforce (\sigma_{\FAA_i \cup \FBB_i})$. By Definition \ref{definition:Realization}, $\ja{\FAA_i \cup \FBB_i} \leadto_{s_0} \neg \phi_i \land \neg \psi_i$.
Then $\MM, s_0 \Vdash \Fchance{\FAA_i}{\neg \phi_i}{\FBB_i}{\neg \psi_i}$.

\end{proof}

\subsection{Upward derivability lemma for $\FCCSRn$}

\begin{definition}[Derivability-reduction condition of standard disjunctions for $\FCCSRn$]

Let $\gamma$ be an elementary disjunction, $\FNI$ be a set of negative indices, and $\FSD = \gamma \vee \FBV_{i \in \FNI} \Fchanced{\FAA_i}{\phi_i}{\FBB_i}{\psi_i}$ be a standard disjunction for $\FCCSRn$ with respect to $\gamma$ and $\FNI$.

The \Fdefs{derivability-reduction condition} of $\FSD$ is defined as follows: one of the following conditions holds:
\begin{enumerate}[label=(\alph*),leftmargin=3.33em]
\item $\vdash_\FCCSRn \gamma$;
\item there is $\FNI_1 \subseteq \FNI$ and $\FNI_2 \subseteq \FNI$ such that $(\FNI_1,\FNI_2)$ is neat and $\vdash_\FCCSRn \FBV_{i \in \FNI_1} \phi_i \vee \FBV_{i \in \FNI_2} (\phi_i \lor \psi_i)$.
\end{enumerate}

\end{definition}

\begin{lemma}[Upward derivability for $\FCCSRn$]\label{lemma:Upward derivability for FCCSRn}

Let $\gamma$ be an elementary disjunction, $\FNI$ be a set of negative indices, and $\FSD = \gamma \vee \FBV_{i \in \FNI} \Fchanced{\FAA_i}{\phi_i}{\FBB_i}{\psi_i}$ be a standard disjunction for $\FCCSRn$ with respect to $\gamma$ and $\FNI$.

Assume the derivability-reduction condition of $\FSD$ is met; that is, one of the following conditions holds:
\begin{enumerate}[label=(\alph*),leftmargin=3.33em]
\item $\vdash_\FCCSRn \gamma$;
\item there is $\FNI_1 \subseteq \FNI$ and $\FNI_2 \subseteq \FNI$ such that $(\FNI_1,\FNI_2)$ is neat and $\vdash_\FCCSRn \FBV_{i \in \FNI_1} \phi_i \vee \FBV_{i \in \FNI_2} (\phi_i \lor \psi_i)$.
\end{enumerate}

Then, $\vdash_\FCCSRn \gamma \vee \FBV_{i \in \FNI} \Fchanced{\FAA_i}{\phi_i}{\FBB_i}{\psi_i}$.

\end{lemma}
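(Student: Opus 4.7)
The plan is to mirror the syntactic strategy used in Lemma \ref{lemma:Upward derivability for FCLn} and Lemma \ref{lemma:Upward derivability for FCLp}, now exploiting the additional rules $\Frule_4$ and $\Frule_5$ that mediate between the ``purely-$\FAA$'' formula $\Fchanced{\FAA}{\phi}{\emptyset}{\bot}$ and the full operator $\Fchanced{\FAA}{\phi}{\FBB}{\psi}$. The case $\FNI=\emptyset$ is immediate from $\Frule_1$, and if condition (a) holds, one application of $\Frule_1$ (propositional weakening by $\gamma \to \FSD$) yields $\vdash_\FCCSRn \FSD$. The substantive case is (b).

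Assume $(\FNI_1,\FNI_2)$ is neat and $\vdash_\FCCSRn \FBV_{i \in \FNI_1} \phi_i \vee \FBV_{i \in \FNI_2} (\phi_i \lor \psi_i)$. First I would reduce to the case $\FNI_1 \cap \FNI_2 = \emptyset$: if some $i$ lies in both, its $\FNI_1$-disjunct $\phi_i$ is a propositional consequence of its $\FNI_2$-disjunct $\phi_i \lor \psi_i$, so a step of $\Frule_1$ drops $i$ from $\FNI_1$ while preserving derivability, and shrinking $\FNI_1$ preserves the three neatness clauses. So assume $\FNI_1 \cap \FNI_2 = \emptyset$. For convenience set $\FCC_i = \FAA_i$ and $\theta_i = \phi_i$ for $i \in \FNI_1$, and $\FCC_i = \FAA_i \cup \FBB_i$ and $\theta_i = \phi_i \lor \psi_i$ for $i \in \FNI_2$. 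The three neatness clauses now say exactly that the family $\{\FCC_i \mid i \in \FNI_1 \cup \FNI_2\}$ is pairwise disjoint.

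Starting from $\vdash_\FCCSRn \FBV_{i \in \FNI_1 \cup \FNI_2} \theta_i$, one application of $\Frule_2$ gives $\vdash_\FCCSRn \Fchanced{\FBC_{i} \FCC_i}{\FBV_{i} \theta_i}{\emptyset}{\bot}$. Peeling off one $\FCC_i$ at a time via $\Frule_3$ (permissible because at every stage the ``left'' and ``right'' coalitions are disjoint, while the already-separated disjuncts sit inside the context $\chi$ of the rule) delivers
\[
\vdash_\FCCSRn \FBV_{i \in \FNI_1 \cup \FNI_2} \Fchanced{\FCC_i}{\theta_i}{\emptyset}{\bot}.
\]
Each disjunct with $i \in \FNI_1$ has the form $\Fchanced{\FAA_i}{\phi_i}{\emptyset}{\bot}$, which $\Frule_4$ converts to $\Fchanced{\FAA_i}{\phi_i}{\FBB_i}{\psi_i}$; each disjunct with $i \in \FNI_2$ has the form $\Fchanced{\FAA_i \cup \FBB_i}{\phi_i \lor \psi_i}{\emptyset}{\bot}$, which $\Frule_5$ converts to the same $\Fchanced{\FAA_i}{\phi_i}{\FBB_i}{\psi_i}$. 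A final $\Frule_1$ step weakens the resulting disjunction by the missing disjuncts, namely $\gamma$ and $\Fchanced{\FAA_i}{\phi_i}{\FBB_i}{\psi_i}$ for $i \in \FNI \setminus (\FNI_1 \cup \FNI_2)$, to yield $\vdash_\FCCSRn \FSD$.

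I expect the only delicate point to be the ordered book-keeping of the $\Frule_3$ peelings: at every step the coalition being split must decompose into two disjoint sub-coalitions, and the already-processed disjuncts must be parked inside the context $\chi$ of the rule. Pairwise disjointness of the family $\{\FCC_i\}$, which is precisely what neatness buys us, handles the first concern, and the schematic form of $\Frule_3$ handles the second, so once the reduction to a disjoint pair $(\FNI_1,\FNI_2)$ has been performed the rest is a routine iteration of the three rules $\Frule_2, \Frule_3$ and $\Frule_4$/$\Frule_5$.
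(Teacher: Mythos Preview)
Your proposal is correct and follows essentially the same route as the paper: apply $\Frule_2$ to wrap the derivable disjunction in a single $\Fchanced{\cdot}{\cdot}{\emptyset}{\bot}$, peel off the individual coalitions via repeated $\Frule_3$, convert each resulting disjunct to the target form by $\Frule_4$ (for $i\in\FNI_1$) or $\Frule_5$ (for $i\in\FNI_2$), and finish with $\Frule_1$. Your preliminary reduction to disjoint $\FNI_1$ and $\FNI_2$ is a small refinement the paper leaves implicit; it is exactly what is needed to guarantee pairwise disjointness of the family $\{\FCC_i\}$ so that the iterated $\Frule_3$ applications are legal.
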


\begin{proof}
~

It is easy to see that the result holds if $\FNI = \emptyset$. Assume $\FNI \neq \emptyset$.

Suppose (a). By application of the rule $\Frule_1$:
\[
\vdash_\FCCSRn \gamma \vee \FBV_{i \in \FNI} \Fchanced{\FAA_i}{\phi_i}{\FBB_i}{\psi_i}
\]

Suppose (b).

\begin{enumerate}[label=(\arabic*),leftmargin=3.33em]

\item

By application of the rule $\Frule_2$:
\[
\vdash_\FCCSRn \Fchanced
{\FBC_{i \in \FNI_1} \FAA_i \cup \FBC_{i \in \FNI_2} (\FAA_i \cup \FBB_i)}
{\FBV_{i \in \FNI_1} \phi_i \vee \FBV_{i \in \FNI_2} (\phi_i \lor \psi_i)}
{\emptyset}
{\bot}
\]

\item

By repeated applications of the rule $\Frule_3$:
\[
\vdash_\FCCSRn \FBV_{i \in \FNI_1} \Fchanced{\FAA_i}{\phi_i}{\emptyset}{\bot}
\lor
\FBV_{i \in \FNI_2} \Fchanced{\FAA_i \cup \FBB_i}{\phi_i \lor \psi_i}{\emptyset}{\bot}
\]

\item

By repeated applications of the rule $\Frule_4$:
\[
\vdash_\FCCSRn \FBV_{i \in \FNI_1} \Fchanced{\FAA_i}{\phi_i}{\FBB_i}{\psi_i}
\lor
\FBV_{i \in \FNI_2} \Fchanced{\FAA_i \cup \FBB_i}{\phi_i \lor \psi_i}{\emptyset}{\bot}
\]

\item

By repeated applications of the rule $\Frule_5$:
\[
\vdash_\FCCSRn \FBV_{i \in \FNI_1} \Fchanced{\FAA_i}{\phi_i}{\FBB_i}{\psi_i}
\lor
\FBV_{i \in \FNI_2} \Fchanced{\FAA_i }{\phi_i}{\FBB_i}{\psi_i}
\]

\item

By application of the rule $\Frule_1$:
\[
\vdash_\FCCSRn \gamma \vee \FBV_{i \in \FNI} \Fchanced{\FAA_i}{\phi_i}{\FBB_i}{\psi_i}
\]

\end{enumerate}

\end{proof}

\subsection{Completeness of $\FCCSRn$ by induction}

\begin{theorem}[Completeness of $\FCCSRn$]
The axiomatic system for $\FCCSRn$ given in Definition \ref{definition:An axiomatic system for FCCSRn} is complete with respect to the set of valid formulas in $\Phi_{\FCCSRn}$.
\end{theorem}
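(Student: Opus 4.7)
The plan is to mirror exactly the induction used for the completeness of $\FCLn$ (Theorem~\ref{theorem:Completeness of FCLn}) and $\FCLp$, since all four fragments of the excerpt have been set up to fit the common four-step schema from Section~\ref{section:Our completeness proof strategy}. Given a formula $\phi \in \Phi_{\FCCSRn}$ with $\models \phi$, I would induct on the modal depth $n$ of $\phi$. For the base case $n=0$, $\phi$ is a propositional tautology in $\Phi_{\FCCSRn}$, so $\vdash_\FCCSRn \phi$ by $\Frule_1$ (applied to an empty premise list, as in the other base cases).

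For the inductive step $n > 0$, I would first apply Lemma~\ref{lemma:normal-form CCSRn}, the normal form lemma for $\FCCSRn$, to obtain $\phi' \in \Phi_{\FCCSRn}$ of the form $\FSD_0 \wedge \dots \wedge \FSD_k$ with each $\FSD_i$ a standard disjunction for $\FCCSRn$, such that $\models \phi \iff \models \phi'$, $\vdash_\FCCSRn \phi \iff \vdash_\FCCSRn \phi'$, and the modal depth is preserved. Reducing to derivability of a single conjunct $\FSD_i$ with $\models \FSD_i$, I would split on whether the modal depth of $\FSD_i$ is less than $n$ (in which case the inductive hypothesis applies directly) or equal to $n$.

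In the latter case, write $\FSD_i = \gamma \vee \FBV_{j \in \FNI} \Fchanced{\FAA_j}{\phi_j}{\FBB_j}{\psi_j}$. By Lemma~\ref{lemma:Downward validity for FCCSRn}, the validity-reduction condition of $\FSD_i$ is met: either $\models \gamma$, or there exist $\FNI_1, \FNI_2 \subseteq \FNI$ with $(\FNI_1, \FNI_2)$ neat and $\models \FBV_{j \in \FNI_1} \phi_j \vee \FBV_{j \in \FNI_2} (\phi_j \lor \psi_j)$. Both disjuncts in the validity-reduction condition involve formulas of modal depth strictly less than $n$ (the outer $\Fchanced{}{}{}{}$ operators have been stripped away, and the elementary disjunction $\gamma$ has modal depth $0$), so the inductive hypothesis yields the derivability-reduction condition of $\FSD_i$: either $\vdash_\FCCSRn \gamma$, or $\vdash_\FCCSRn \FBV_{j \in \FNI_1} \phi_j \vee \FBV_{j \in \FNI_2} (\phi_j \lor \psi_j)$ for the same neat pair. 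Finally, Lemma~\ref{lemma:Upward derivability for FCCSRn} gives $\vdash_\FCCSRn \FSD_i$, closing the induction.

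No step should present a genuine obstacle, since the real work has been done in Lemmas~\ref{lemma:normal-form CCSRn}, \ref{lemma:Downward validity for FCCSRn}, and \ref{lemma:Upward derivability for FCCSRn}; the only subtlety worth double-checking is that the validity-reduction formulas produced in the downward lemma have modal depth strictly less than that of $\FSD_i$, so that the inductive hypothesis on modal depth is legitimately applicable. This is immediate from the definitions: each $\phi_j$ and $\psi_j$ appears strictly below a top-level $\Fchanced{}{}{}{}$ operator in $\FSD_i$, and $\gamma$ is modality-free.
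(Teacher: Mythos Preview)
Your proposal is correct and follows essentially the same approach as the paper's own proof: induction on modal depth, normal form lemma, downward validity lemma, inductive hypothesis to pass from validity-reduction to derivability-reduction, then upward derivability lemma. Your added remark about verifying that the reduction formulas have strictly smaller modal depth is a welcome clarification that the paper leaves implicit.
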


\begin{proof}
~

Let $\phi$ be a formula in $\Phi_{\FCCSRn}$. Assume $\models \phi$. We want to show $\vdash_\FCCSRn \phi$. We put an induction on the modal depth $n$ of $\phi$.

Assume $n = 0$. Then, $\phi$ is a propositional tautology. Then, $\vdash_\FCCSRn \phi$.

Assume $n > 0$.

By Lemma \ref{lemma:normal-form CCSRn}, the normal form lemma for $\FCCSRn$, there is $\phi' \in \Phi_{\FCCSRn}$ such that (1) $\models \phi$ if and only if $\models \phi'$, (2) $\vdash_\FCCSRn \phi$ if and only if $\vdash_{\FCCSRn} \phi'$, (3) $\phi$ and $\phi'$ have the same modal depth, and (4) $\phi'$ is in the form of $\FSD_0 \land \dots \land \FSD_k$, where every $\FSD_i$ is a standard disjunction for $\FCCSRn$.

Then $\models \phi'$. Let $i \leq k$. Then $\models \FSD_i$. It suffices to show $\vdash_\FCCSRn \FSD_i$.

Assume the modal depth of $\FSD_i$ is less than $n$. By the inductive hypothesis, $\vdash_\FCCSRn \FSD_i$.

Assume the modal depth of $\FSD_i$ is $n$. Let $\FSD_i = \gamma \vee \FBV_{i \in \FNI} \Fchanced{\FAA_i}{\phi_i}{\FBB_i}{\psi_i}$.

By Lemma \ref{lemma:Downward validity for FCCSRn}, the downward validity lemma for $\FCCSRn$, the validity-reduction condition of $\FSD_i$ is met; that is, one of the following conditions holds:
\begin{enumerate}[label=(\alph*),leftmargin=3.33em]
\item $\vDash \gamma$;
\item there is $\FNI_1 \subseteq \FNI$ and $\FNI_2 \subseteq \FNI$ such that $(\FNI_1,\FNI_2)$ is neat and $\vDash \FBV_{i \in \FNI_1} \phi_i \vee \FBV_{i \in \FNI_2} (\phi_i \lor \psi_i)$.
\end{enumerate}

By the inductive hypothesis, the derivability-reduction condition of $\FSD_i$ is met; that is, one of the following conditions holds:
\begin{enumerate}[label=(\alph*),leftmargin=3.33em]
\item $\vdash_\FCCSRn \gamma$;
\item $\vdash_\FCCSRn \FBV_{i \in \FNI_1} \phi_i \vee \FBV_{i \in \FNI_2} (\phi_i \lor \psi_i)$.
\end{enumerate}

By Lemma \ref{lemma:Upward derivability for FCCSRn}, the upward derivability lemma for $\FCCSRn$, $\vdash_\FCCSRn \FSD_i$.

\end{proof}

\section{Completeness of the ability fragment $\FCCSRp$ of $\FCCSR$}
\label{section:Completeness of the ability fragment of FCCSR}

\begin{definition}[An axiomatic system for $\FCCSRp$]\label{definition:An axiomatic system for FCCSRp}
~

\noindent Axioms: propositional tautologies in $\Phi_{\FCCSRp}$

\noindent Inference rules:

\begin{center}
\begin{tabular}{lll}
$\Frule_1$: & $\dfrac{\phi_1, \dots, \phi_n}
{\psi}$ & \parbox[t]{17em}{where $(\phi_1 \land \dots \land \phi_n) \rightarrow \psi$ is a propositional tautology} \vspace{10pt} \\
$\Frule_2$: & $\dfrac{\phi}
{\Fchance{\FAA}{\phi}{\emptyset}{\top}}$ & \vspace{12pt} \\
$\Frule_3$: & $\dfrac
{
\Fchance
{\FAA}
{\phi_{1} \vee \phi_{2}}
{\emptyset}
{\top}
\vee \chi
}
{
\Fchance{\FAA}
{\phi_{1}}
{\emptyset}
{\top}
\vee
\Fchance
{\FAG}
{\phi_{2}}
{\emptyset}
{\top}
\vee \chi
}$ & \vspace{12pt} \\
$\Frule_4$: & $\dfrac{\Fchance{\FAA }{\phi \wedge \psi}{\emptyset}{\top}\vee \chi}
{\Fchance{\FAA}{\phi }{\FBB}{\psi} \vee \chi}$ & \vspace{12pt} \\
$\Frule_5$: & $\dfrac
{
(\Fchance{\FAA \cup \FBB}{\phi \wedge \psi}{\emptyset}{\top}\wedge \Fchance{\emptyset}{\phi}{\emptyset}{\top}) \vee \chi
}
{\Fchance{\FAA}{\phi }{\FBB}{\psi }\vee \chi}$ & 
\end{tabular}
\end{center}
\end{definition}

\begin{theorem}[Soundness of $\FCCSRp$]

The axiomatic system for $\FCCSRp$ given in Definition \ref{definition:An axiomatic system for FCCSRp} is sound with respect to the set of valid formulas in $\Phi_{\FCCSRp}$.

\end{theorem}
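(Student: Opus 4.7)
The plan is to follow the same template used in the three preceding soundness proofs: verify that the propositional tautologies are valid and that each of the five inference rules preserves validity. Since the axioms are tautologies in $\Phi_{\FCCSRp}$, and $\Frule_1$ is classical propositional reasoning, these two cases are immediate. The bulk of the work lies in the four modal rules, but three of them reduce---via the identification $\Fchance{\FCC}{\phi}{\emptyset}{\top} \equiv \Fclo{\FCC}\phi$---to arguments already carried out for $\FCLp$.

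For $\Frule_2$, I would observe that if $\phi$ is valid, then for any $(\MM,s)$ and any $\sigma_\FAA \in \Faja(s,\FAA)$ (nonempty by definition), we have $\sigma_\FAA \leadto_s \phi$; taking the (unique and available) empty joint action of $\emptyset$ to witness clause (2), $\Fchance{\FAA}{\phi}{\emptyset}{\top}$ holds at $s$. For $\Frule_3$, the required implication is $\Fchance{\FAA}{\phi_1 \vee \phi_2}{\emptyset}{\top} \rightarrow \Fchance{\FAA}{\phi_1}{\emptyset}{\top} \vee \Fchance{\FAG}{\phi_2}{\emptyset}{\top}$, which is a direct translation of the implication validated in Theorem \ref{theorem:Soundness of the ability fragment of CL}: if a witness $\sigma_\FAA$ guarantees $\phi_1 \vee \phi_2$ but not $\phi_1$, then some outcome state $u$ of $\sigma_\FAA$ satisfies $\phi_2$, and there exists $\sigma_\FAG \in \Faja(s,\FAG)$ extending $\sigma_\FAA$ with $\Fout(s,\sigma_\FAG)=\{u\}$, giving $\Fclo{\FAG}\phi_2$.

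For $\Frule_4$, I would verify $\Fchance{\FAA}{\phi \wedge \psi}{\emptyset}{\top} \rightarrow \Fchance{\FAA}{\phi}{\FBB}{\psi}$: given a witness $\sigma_\FAA$ guaranteeing $\phi \wedge \psi$, pick any $\sigma_\FBB \in \Faja(s,\FBB)$; since every $\sigma_\FAG \in \Faja(s,\FAG)$ extending $\sigma_\FAA \uplus \sigma_\FBB$ also extends $\sigma_\FAA$, we have $\Fout(s,\sigma_\FAA \uplus \sigma_\FBB) \subseteq \Fout(s,\sigma_\FAA)$, so $\sigma_\FAA \uplus \sigma_\FBB$ inherits the $\psi$-guarantee from $\sigma_\FAA$, while clause (1) of $\Fchance{\FAA}{\phi}{\FBB}{\psi}$ is immediate.

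The main obstacle is $\Frule_5$, whose associated implication $(\Fchance{\FAA \cup \FBB}{\phi \wedge \psi}{\emptyset}{\top} \wedge \Fchance{\emptyset}{\phi}{\emptyset}{\top}) \rightarrow \Fchance{\FAA}{\phi}{\FBB}{\psi}$ genuinely requires both conjuncts. Given a witness $\sigma_{\FAA \cup \FBB}$ for the first conjunct, I would set $\sigma_\FAA := \sigma_{\FAA \cup \FBB}|_\FAA$ and $\sigma_\FBB := \sigma_{\FAA \cup \FBB}|_\FBB$, and verify by direct computation that $\sigma_\FAA \uplus \sigma_\FBB = \sigma_{\FAA \cup \FBB}$ (regardless of whether $\FAA$ and $\FBB$ overlap, since they agree on $\FAA \cap \FBB$ via $\sigma_{\FAA \cup \FBB}$); this yields clause (2). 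The \emph{delicate} point is clause (1): $\sigma_\FAA$ on its own does not obviously guarantee $\phi$, because its extensions to $\FAG$ include profiles disagreeing with $\sigma_\FBB$ on $\FBB-\FAA$, whose outcomes need not satisfy $\phi$. This gap is closed exactly by the second conjunct: $\Fchance{\emptyset}{\phi}{\emptyset}{\top}$ forces $\phi$ at every state in $\Fout(s,\emptyset) = \bigcup\{\Fout(s,\sigma_\FAG) \mid \sigma_\FAG \in \Faja(s,\FAG)\}$, and the inclusion $\Fout(s,\sigma_\FAA) \subseteq \Fout(s,\emptyset)$ then delivers clause (1). Recognizing why this auxiliary conjunct is indispensable---and writing it into the rule in the first place---is the conceptual crux of the whole soundness argument.
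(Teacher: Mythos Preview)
Your proposal is correct and follows essentially the same approach as the paper: both verify axioms and $\Frule_1$ trivially, reduce $\Frule_2$ and $\Frule_3$ to the $\FCLp$ case via the identification $\Fchance{\FCC}{\phi}{\emptyset}{\top}\equiv\Fclo{\FCC}\phi$, and handle $\Frule_4$ and $\Frule_5$ by the same restriction/inclusion arguments (in particular, using the second conjunct $\Fchance{\emptyset}{\phi}{\emptyset}{\top}$ to secure clause (1) in $\Frule_5$). Your write-up is slightly more explicit than the paper's about the outcome-set inclusions $\Fout(s,\sigma_\FAA\uplus\sigma_\FBB)\subseteq\Fout(s,\sigma_\FAA)\subseteq\Fout(s,\emptyset)$, but the underlying reasoning is identical.
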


\begin{proof}
~

It suffices to show that all the axioms are valid and all the rules preserve validity.
It is easy to see that all the axioms are valid, and the first two rules preserve validity. 

We show the third rule preserves validity.
It suffices to show that $\Fchance
{\FAA}
{\phi_{1} \vee \phi_{2}}
{\emptyset}
{\top}
\rightarrow
\big( \Fchance{\FAA}
{\phi_{1}}
{\emptyset}
{\top}
\vee
\Fchance
{\FAG}
{\phi_{2}}
{\emptyset}
{\top} \big)$ is valid, which is equivalent to $\Fclo{\FAA} (\phi_1 \vee \phi_2) \rightarrow \big(\Fclo{\FAA} \phi_1 \vee \Fclo{\FAG} \phi_2 \big)$, which is shown to be valid in the proof for Theorem \ref{theorem:Soundness of the ability fragment of CL}.

We show the fourth rule preserves validity.
It suffices to show that $\Fchance{\FAA }{\phi \wedge \psi}{\emptyset}{\top} \rightarrow \Fchance{\FAA}{\phi }{\FBB}{\psi}$ is valid.
Let $(\MM,w)$, where $\MM = (\FST, \FAC, \Faja, \Fout, \Flab)$, be a pointed model.
Assume $\MM,w \Vdash \Fchance{\FAA }{\phi \wedge \psi}{\emptyset}{\top}$. Then there is $\sigma_\FAA \in \Faja (w,\FAA)$ such that $\sigma_\FAA \leadto_w \phi \land \psi$. Let $\sigma_\FBB \in \Faja (w,\FBB)$. Then $\sigma_\FAA \uplus \sigma_\FBB \leadto_w \phi \land \psi$. Then $\sigma_\FAA \uplus \sigma_\FBB \leadto_w \psi$. Then $\MM,w \Vdash \Fchance{\FAA}{\phi }{\FBB}{\psi}$.

We show the fifth rule preserves validity.
It suffices to show $\big(\Fchance{\FAA \cup \FBB}{\phi \wedge \psi}{\emptyset}{\top}\wedge \Fchance{\emptyset}{\phi}{\emptyset}{\top}\big) \rightarrow \Fchance{\FAA}{\phi }{\FBB}{\psi}$ is valid.
Let $(\MM,w)$, where $\MM = (\FST, \FAC, \Faja, \Fout, \Flab)$, be a pointed model.
Assume $\MM,w \Vdash \Fchance{\FAA \cup \FBB}{\phi \wedge \psi}{\emptyset}{\top}\wedge \Fchance{\emptyset}{\phi}{\emptyset}{\top}$. Then there is $\sigma_{\FAA \cup \FBB} \in \Faja (w,\FAA \cup \FBB)$ such that $\sigma_{\FAA \cup \FBB} \leadto_w \phi \land \psi$, and $\emptyset \leadto_w \phi$.
Let $\sigma_\FAA$ be the restriction of $\sigma_{\FAA \cup \FBB}$ to $\FAA$ and $\sigma_\FBB$ be the restriction of $\sigma_{\FAA \cup \FBB}$ to $\FBB$. 
Note $\sigma_{\FAA \cup \FBB} = \sigma_\FAA \uplus \sigma_\FBB$.
Then $\sigma_\FAA \leadto_w \phi$ and $\sigma_{\FAA \uplus \FBB} \leadto_w \psi$. Then $\MM,w \Vdash \Fchance{\FAA}{\phi }{\FBB}{\psi}$.

\end{proof}

Now, we start to show the completeness of $\FCCSRp$. Unlike in the last three sections, here, we will give two downward validity lemmas: the first downward validity lemma and the second downward validity lemma, and the former will be used to show the latter.

\subsection{Standard disjunctions and a normal form lemma for $\FCCSRp$}

\begin{definition}[Standard disjunctions for $\FCCSRp$]

Let $\gamma$ be an elementary disjunction and $\FPI$ be a set of positive indices.

A formula $\FSD$ in the form of $\gamma \vee \FBV_{j \in \FPI} \Fchp{j}$ in $\Phi_{\FCCSRp}$ is called a \Fdefs{standard disjunction} for $\FCCSRp$ with respect to $\gamma$ and $\FPI$.

\end{definition}

The following example illustrates this definition.

\begin{example}
\label{example: standard disjunction for CCSRp}

Let $\gamma = \bot $ and $\FPI =\{1,2,3\}$. Then, the following formula is a standard disjunction for $\FCCSRn$ with respect to $\gamma$ and $\FPI$, where 
$\FAA_1= \emptyset$, $\phi_1= \top$, $\FBB_1=\FAG$, $\psi_1=p_1$,
$\FAA_2 = \emptyset$, $\phi_2=\neg p_1$, $\FBB_2=\FAG$, $\psi_2 = p_2$,
$\FAA_3 = \emptyset$, $\phi_3=\neg p_2$, $\FBB_3=\emptyset$, $\psi_3=\top$:
\[\FSD = \bot \vee \Fchance{\FAA_1}{\phi_1}{\FBB_1}{\psi_1} \vee \Fchance{\FAA_2}{\phi_2}{\FBB_2}{\psi_2} \vee \Fchance{\FAA_3}{\phi_3}{\FBB_3}{\psi_3} ,\]
that is,
\[\FSD = \bot \vee \Fchance{\emptyset}{\top}{\FAG}{p_1} \vee \Fchance{\emptyset}{\neg p_1}{\FAG}{p_2}\vee \Fchance{\emptyset}{\neg p_2}{\emptyset }{\top} .\]

\end{example}

\begin{lemma}[Normal form for $\FCCSRp$]
\label{lemma:normal-form CCSRp}
For every $\phi \in \Phi_{\FCCSRp}$, there is $\phi' \in \Phi_{\FCCSRp}$ such that (1) $\models \phi$ if and only if $\models \phi'$, (2) $\vdash_\FCCSRp \phi$ if and only if $\vdash_{\FCCSRp} \phi'$, (3) $\phi$ and $\phi'$ have the same modal depth, and (4) $\phi'$ is in the form of $\FSD_0 \land \dots \land \FSD_k$, where every $\FSD_i$ is a standard disjunction for $\FCCSRp$.
\end{lemma}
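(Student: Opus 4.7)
The plan is to reduce the lemma to a straightforward propositional conjunctive normal form (CNF) argument, exactly parallel to the one-line proofs sketched for the normal form lemmas of the earlier three fragments. The key observation is that formulas of $\Phi_{\FCCSRp}$ are built from the ``modal atoms'' $\Fchance{\FAA}{\chi}{\FBB}{\chi'}$ and the propositional literals $\top, \bot, p, \neg p$ using only the Boolean connectives $\wedge$ and $\vee$; in particular, negation occurs only at the atomic-proposition level, and no $\Fchance$-subformula ever appears negated.

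First, I would fix $\phi \in \Phi_{\FCCSRp}$ and call a subformula \emph{modal-maximal} in $\phi$ if it is of the form $\Fchance{\FAA}{\chi}{\FBB}{\chi'}$ and is not a proper subformula of any other $\Fchance$-subformula of $\phi$. Replacing each modal-maximal subformula by a fresh propositional variable yields a purely Boolean skeleton $\widetilde\phi$ over the variables consisting of the original atomic propositions occurring outside modal-maximal subformulas together with the fresh variables. By classical propositional logic, $\widetilde\phi$ is equivalent to a CNF $\FSD_0 \wedge \dots \wedge \FSD_k$ in which each $\FSD_i$ is a disjunction of literals over the extended signature. Resubstituting the modal-maximal subformulas for their fresh names produces a formula $\phi' \in \Phi_{\FCCSRp}$ of the form $\FSD_0 \wedge \dots \wedge \FSD_k$ where each $\FSD_i$ has exactly the shape $\gamma \vee \FBV_{j \in \FPI} \Fchp{j}$ demanded by the definition of a standard disjunction for $\FCCSRp$.

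Next, I would verify the four required properties. Property (3), preservation of modal depth, is immediate since the transformation rearranges only Boolean structure above the modal-maximal subformulas and never enters their interiors. Property (1), $\models \phi$ iff $\models \phi'$, follows because $\phi \leftrightarrow \phi'$ is a substitution instance of a propositional tautology and is therefore valid in every pointed model. Property (2), $\vdash_\FCCSRp \phi$ iff $\vdash_\FCCSRp \phi'$, follows from the fact that the axiomatization of $\FCCSRp$ includes all propositional tautologies in $\Phi_{\FCCSRp}$ and the tautological-consequence rule $\Frule_1$, so any two formulas whose equivalence is a propositional tautology are interderivable.

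The only genuine subtlety, and thus the main (minor) obstacle, is that the fragment syntactically forbids negations of $\Fchance$-formulas, so one must check that the CNF procedure does not accidentally introduce such negations. This is easily handled by observing that the fresh variables occur only positively in $\widetilde\phi$, so $\widetilde\phi$ is monotone in them and admits a CNF in which they again appear only positively; equivalently, one can carry out the CNF transformation using only associativity, commutativity, and the distributive law $\alpha \vee (\beta \wedge \delta) \equiv (\alpha \vee \beta) \wedge (\alpha \vee \delta)$, together with the propositional normalization of the literal part $\gamma$, none of which ever creates new negations. Once this point is made, the argument concludes in one line exactly as for $\FCLn$, $\FCLp$, and $\FCCSRn$.
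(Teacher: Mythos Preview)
Your proposal is correct and is essentially a detailed unpacking of the paper's own one-line justification (``(1) $\FCCSRp$ is an extension of the classical propositional logic, and (2) the soundness of $\FCCSRp$''). In particular, your substitution-of-fresh-variables CNF argument, together with the observation that modal subformulas occur only positively so that distribution alone suffices and no negated $\Fchance$-formulas are created, is exactly the content the paper leaves implicit.
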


This lemma can be shown by (1) $\FCCSRp$ is an extension of the classical propositional logic, and (2) the soundness of $\FCCSRp$.

The following notations will be used to state the validity-reduction and the derivability-reduction conditions of standard disjunctions for $\FCCSRp$:
$\FPIf = \{j \in \FPI \mid \FAA_j = \FAG\}$ and $\FPIfs = \{j \in \FPI \mid \FAA_j \cup \FBB_j = \FAG\}$. Here, ``$\mathtt{f}$'' and ``$\mathtt{s}$'' indicate ``first'' and ``second'', respectively.
Clearly, $\FPIf \subseteq \FPIfs$.

\subsection{The first downward validity lemma for $\FCCSRp$}

\begin{definition}[The first validity-reduction condition of standard disjunctions of $\FCCSRp$]

Let $\gamma$ be an elementary disjunction, $\FPI$ be a set of positive indices, and $\FSD = \gamma \vee \FBV_{j \in \FPI} \Fchp{j}$ be a standard disjunction for $\FCCSRp$ with respect to $\gamma$ and $\FPI$.

The \Fdefs{first validity-reduction condition} of $\FSD$ is defined as follows: one of the following conditions holds:
\begin{enumerate}[label=(\alph*),leftmargin=3.33em]

\item 

$\models \gamma$;

\item 

for all $\FPI^{*} \subseteq \FPI$ such that $\FPIf \subseteq \FPI^{*} \subseteq \FPIfs$, one of the following conditions holds:
\begin{enumerate}[label=(\arabic*)]
\item $\vDash \FBV_{j \in \FPI^*} (\phi_j \land \psi_j)$;
\item there is $j' \in \FPI - \FPI^{*}$ such that $\vDash \FBV_{j \in \FPI^*} (\phi_j \land \psi_j) \vee (\phi_{j'} \land \psi_{j'})$;
\item there is $j' \in \FPI - \FPI^{*}$ such that $\vDash \FBV_{j \in \FPI^*} (\phi_j \land \psi_j) \vee \phi_{j'}$ and $\FAA_{j'} \cup \FBB_{j'}= \FAG$.
\end{enumerate}

\end{enumerate}

\end{definition}

The following example illustrates this definition.

\begin{example}
\label{example: a valid standard disjunction CCSRp}

Let $\gamma = \bot $ and $\FPI =\{1,2,3\}$. Consider the following standard disjunction for $\FCCSRp$ with respect to $\gamma$ and $\FPI$, where 
$\FAA_1= \emptyset$, $\phi_1= \top$, $\FBB_1=\FAG$, $\psi_1=p_1$,
$\FAA_2 = \emptyset$, $\phi_2=\neg p_1$, $\FBB_2=\FAG$, $\psi_2 = p_2$,
$\FAA_3 = \emptyset$, $\phi_3=\neg p_2$, $\FBB_3=\emptyset$, $\psi_3=\top$:
\[\FSD = \bot \vee \Fchance{\FAA_1}{\phi_1}{\FBB_1}{\psi_1} \vee \Fchance{\FAA_2}{\phi_2}{\FBB_2}{\psi_2} \vee \Fchance{\FAA_3}{\phi_3}{\FBB_3}{\psi_3} ,\]
that is,
\[\FSD = \bot \vee \Fchance{\emptyset}{\top}{\FAG}{p_1} \vee \Fchance{\emptyset}{\neg p_1}{\FAG}{p_2}\vee \Fchance{\emptyset}{\neg p_2}{\emptyset }{\top} .\]

It can be easily shown that $\FSD$ is valid.
In fact, the first validity-reduction condition of $\FSD$ is met.
Note $\FPIf = \emptyset$ and $\FPIfs = \{1,2\}$. Suppose $\FPIf \subseteq \FPI^{*} \subseteq \FPIfs$.

Assume $\FPI^* = \emptyset$. It can be seen that the condition (b3) holds: 
$\vDash \FBV_{j \in \FPI^*} (\phi_j \land \psi_j) \vee \phi_{1}$ and $\FAA_{1} \cup \FBB_{1}= \FAG$.

Assume $\FPI^* = \{1\}$. It can be seen that the condition (b3) holds:
$\vDash \FBV_{j \in \FPI^*} (\phi_j \land \psi_j) \vee \phi_{2}$ and $\FAA_{2} \cup \FBB_{2}= \FAG$.

Assume $\FPI^* = \{2\}$. It can be seen that the condition (b3) holds:
$\vDash \FBV_{j \in \FPI^*} (\phi_j \land \psi_j) \vee \phi_{1}$ and $\FAA_{1} \cup \FBB_{1}= \FAG$.

Assume $\FPI^* = \{1,2\}$. It can be checked that the condition (b2) holds:
$\vDash \FBV_{j \in \FPI^*} (\phi_j \land \psi_j) \vee (\phi_{3}\land \psi_{3})$.

\end{example}

\begin{lemma}[The first downward validity for $\FCCSRp$]
\label{lemma:1 implies 2}

Let $\gamma$ be an elementary disjunction, $\FPI$ be a set of positive indices, and $\FSD = \gamma \vee \FBV_{j \in \FPI} \Fchp{j}$ be a standard disjunction for $\FCCSRp$ with respect to $\gamma$ and $\FPI$.

Assume $\vDash \FSD$.

Then, the first validity-reduction condition of $\FSD$ is met; that is, one of the following conditions holds:
\begin{enumerate}[label=(\alph*),leftmargin=3.33em]

\item 

$\models \gamma$;

\item 

for all $\FPI^{*} \subseteq \FPI$ such that $\FPIf \subseteq \FPI^{*} \subseteq \FPIfs$, one of the following conditions holds:
\begin{enumerate}[label=(\arabic*)]
\item $\vDash \FBV_{j \in \FPI^*} (\phi_j \land \psi_j)$;
\item there is $j' \in \FPI - \FPI^{*}$ such that $\vDash \FBV_{j \in \FPI^*} (\phi_j \land \psi_j) \vee (\phi_{j'} \land \psi_{j'})$;
\item there is $j' \in \FPI - \FPI^{*}$ such that $\vDash \FBV_{j \in \FPI^*} (\phi_j \land \psi_j) \vee \phi_{j'}$ and $\FAA_{j'} \cup \FBB_{j'}= \FAG$.
\end{enumerate}

\end{enumerate}

\end{lemma}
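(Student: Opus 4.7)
The plan is to follow the template of Lemmas \ref{lemma:Downward validity for FCLn}, \ref{lemma:Downward validity for FCLp}, and \ref{lemma:Downward validity for FCCSRn}: suppose the first validity-reduction condition fails, fix a witnessing $\FPI^*$, build a regular abstract game form $\FAGF = (\FAC_0, \Fforce)$ tailored to $\FPI^*$, and realize it via Theorem \ref{theorem:Realizability of abstract game forms} to obtain a counter-model. After disposing of the trivial case $\FPI = \emptyset$, the failure of the first validity-reduction condition supplies (i) $\neg \gamma$ satisfiable, and (ii) a specific $\FPI^*$ with $\FPIf \subseteq \FPI^* \subseteq \FPIfs$ for which the three subconditions (b1), (b2), (b3) all fail; I would fix such a $\FPI^*$.

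For the abstract game form I would take $\FAC_0 = \FPI - \FPI^*$ when this is nonempty (and a singleton otherwise), equipped with a rejection function $\FArej$ defined by a modular sum over the action indices of the agents, exactly as in the proof of Lemma \ref{lemma:Downward validity for FCLp}. Since $\FPIf \subseteq \FPI^*$, every $j \in \FPI - \FPI^*$ has $\FAA_j \neq \FAG$, so the mod trick ensures that for every such $j$ and every joint action $\sigma_\FCC$ with $\FCC \neq \FAG$, some action profile $\sigma_\FAG$ extending $\sigma_\FCC$ satisfies $\FArej(\sigma_\FAG) = j$. Then I would set
\[
\Fforce(\sigma_\FAG) = \{\neg(\phi_j \land \psi_j) : j \in \FPI^*\} \cup \{\chi_{\FArej(\sigma_\FAG)}\},
\]
with $\chi_k = \neg \phi_k$ when $k \in \FPIfs - \FPI^*$ and $\chi_k = \neg(\phi_k \land \psi_k)$ when $k \in \FPI - \FPIfs$, and $\Fforce(\sigma_\FCC) = \{\neg(\phi_j \land \psi_j) : j \in \FPI^*\}$ whenever $\FCC \neq \FAG$. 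Satisfiability of $\Fforce(\sigma_\FAG)$ follows from the failure of (b3) when $\FArej(\sigma_\FAG) \in \FPIfs - \FPI^*$ and from the failure of (b2) when $\FArej(\sigma_\FAG) \in \FPI - \FPIfs$; satisfiability on strict subcoalitions follows from the failure of (b1). Monotonicity is immediate from the definition.

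Realizing $\FAGF$ together with an elementary conjunction equivalent to $\neg \gamma$ via Theorem \ref{theorem:Realizability of abstract game forms} would yield $(\MM, s_0)$ with $\MM, s_0 \Vdash \neg \gamma$. It remains to verify $\MM, s_0 \Vdash \Fchanced{\FAA_j}{\neg \phi_j}{\FBB_j}{\neg \psi_j}$ for every $j \in \FPI$; equivalently, for each $\sigma_{\FAA_j}$, either strategy (1) (some outcome of $\sigma_{\FAA_j}$ satisfies $\neg \phi_j$) or strategy (2) (for every $\sigma_{\FBB_j}$, some outcome of $\sigma_{\FAA_j} \uplus \sigma_{\FBB_j}$ satisfies $\neg \psi_j$) holds. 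The crux is an \emph{automatic strategy-choice} argument: whenever an outcome state carries $\neg(\phi_j \land \psi_j)$, it satisfies either $\neg \phi_j$ (feeding strategy (1)) or $\neg \psi_j$ (feeding strategy (2)). Concretely, for $j \in \FPI^*$ every outcome state satisfies $\neg(\phi_j \land \psi_j)$ and $\FAA_j \cup \FBB_j = \FAG$ makes the dichotomy immediate; for $j \in \FPIfs - \FPI^*$ the $\FArej$-witness extension of $\sigma_{\FAA_j}$ carries $\neg \phi_j$ outright; and for $j \in \FPI - \FPIfs$ the $\FArej$-witnesses obtained one per $\sigma_{\FBB_j}$ carry $\neg(\phi_j \land \psi_j)$, to which the same dichotomy applies, routing either to strategy (1) or, uniformly over $\sigma_{\FBB_j}$, to strategy (2).

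The main obstacle is choosing the right formulas to place in $\Fforce$: a naive attempt to install $\neg \psi_j$ for $j \in \FPI^*$ at every outcome would demand the satisfiability of $\FBW_{j \in \FPI^*} \neg \psi_j$, which is strictly stronger than anything supplied by the failures of (b1)--(b3). The fix is to weaken the universally-present formula to $\neg(\phi_j \land \psi_j)$, which is exactly what failure of (b1) gives, and to postpone the choice between strategies (1) and (2) until model-checking time via the automatic dichotomy above; the conditions (b2) and (b3) then supply the satisfiability needed to enrich $\Fforce(\sigma_\FAG)$ with the $\FArej$-indexed witness.
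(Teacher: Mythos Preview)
Your proposal is correct and follows essentially the same route as the paper: assume the condition fails, fix a witnessing $\FPI^*$, build a regular abstract game form whose $\Fforce$ carries $\{\neg(\phi_j \land \psi_j) : j \in \FPI^*\}$ everywhere and, at full profiles, one extra formula indexed by a modular-sum rejection function $\FArej$, then realize via Theorem~\ref{theorem:Realizability of abstract game forms} and check each $j'$. The paper even reduces the model-checking step to exactly your dichotomy: ``it suffices to show $\sigma_{\FAA_{j'}} \nleadto_{s_0} \phi_{j'}$ or $\sigma_{\FAA_{j'}} \uplus \sigma_{\FBB_{j'}} \nleadto_{s_0} \phi_{j'} \wedge \psi_{j'}$''.

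The only packaging difference is in how the extra formula at full profiles is chosen. You split statically on whether $k = \FArej(\sigma_\FAG)$ lies in $\FPIfs - \FPI^*$ (add $\neg\phi_k$, satisfiable by failure of~(b3)) or in $\FPI - \FPIfs$ (add $\neg(\phi_k \land \psi_k)$, satisfiable by failure of~(b2)), and you take $\FAC_0 = \FPI - \FPI^*$. The paper instead takes $\FAC_0 = \FPI$ and decides dynamically: add $\neg\phi_k$ if the resulting set is satisfiable, else $\neg(\phi_k \land \psi_k)$ if that is, else nothing. Your version is slightly cleaner (no satisfiability tests inside the definition of $\Fforce$) at the cost of a separate handling of the degenerate case $\FPI = \FPI^*$; the paper's version avoids that special case but has a three-clause $\Fforce$. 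Either way the verification splits into the same cases and uses the same dichotomy you identified.
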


\begin{proof}
~

It is easy to see that the result holds if $\FPI = \emptyset$. Assume $\FPI \neq \emptyset$.

Assume the first validity-reduction condition of $\FSD$ is not met. Then, the following conditions hold:
\begin{enumerate}[label=(\alph*),leftmargin=3.33em]

\item 

$\neg \gamma$ is satisfiable;

\item

there is $\FPI^{*} \subseteq \FPI$ such that $\FPIf \subseteq \FPI^{*} \subseteq \FPIfs$ and the following conditions are met:

\begin{enumerate}
\item $\FBW_{j \in \FPI^*} \neg (\phi_j \land \psi_j)$ is satisfiable;
\item for all $j' \in \FPI - \FPI^{*}$, $\FBW_{j \in \FPI^*} \neg (\phi_j \land \psi_j) \wedge \neg (\phi_{j'} \wedge \psi_{j'})$ is satisfiable;
\item for all $j' \in \FPI - \FPI^{*}$, $\FBW_{j \in \FPI^*} \neg (\phi_j \land \psi_j) \wedge \neg \phi_{j'}$ is satisfiable, or $\nega{\FAA_{j'} \cup \FBB_{j'}} \neq \emptyset $.
\end{enumerate}

\end{enumerate}

It suffices to show that $\neg \gamma \land \FBW_{j \in \FPI} \neg \Fchp{j}$ is satisfiable.

--------------------

Assume $\FPI = \{1, \dots, n\}$.
Define an abstract game form $\FAGF = (\FAC_0, \Fforce)$ as follows, where for every $\FAA \subseteq \FAG$, $\FJA^0_\FAA$ is the set of joint actions of $\FAA$ with respect to $\FAC_0$:
\begin{itemize}

\item $\FAC_0 = \FPI$;

\item For every $\FCC \subseteq \FAG$ and $\ja{\FCC} \in \FJA^0_\FCC$, define:
\[
\FArej (\ja{\FCC}) = (\sum_{a \in \FCC} \sigma_\FCC (a) \bmod n) + 1
\]
\[
\Fforce (\ja{\FCC})=
\begin{cases}
\parbox[t]{9.5em}{$\{\neg (\phi_j \land \psi_j) \mid j \in \FPI^*\} \cup \{\neg \phi_{k},\neg (\phi_{k} \wedge \psi_{k})\}$} & \parbox[t]{16em}{if $\FCC = \FAG$ and $\FBW_{j \in \FPI^*} \neg (\phi_j \land \psi_j) \wedge \neg \phi_{k}$ is satisfiable} \vspace{10pt} \\
\parbox[t]{9.5em}{$\{\neg (\phi_j \land \psi_j) \mid j \in \FPI^*\} \cup \{\neg (\phi_{k} \wedge \psi_{k})\}$} & \parbox[t]{16em}{if $\FCC = \FAG, \FBW_{j \in \FPI^*} \neg (\phi_j \land \psi_j) \wedge \neg \phi_{k}$ is not satisfiable, and $\FBW_{j \in \FPI^*} \neg (\phi_j \land \psi_j) \wedge \neg (\phi_{k} \wedge \psi_{k})$ is satisfiable} \vspace{10pt} \\
\parbox[t]{9.5em}{$\{\neg (\phi_j \land \psi_j) \mid j \in \FPI^*\}$} & \parbox[t]{16em}{otherwise}
\end{cases}
\]
\[
\text{where $k = \FArej(\ja{\FCC})$}
\]

\end{itemize}

It is easy to see that $\Fforce$ is monotonic and for every $\FCC \subseteq \FAG$ and $\ja{\FCC} \in \FJA^0_\FCC$, $\Fforce (\ja{\FCC})$ is satisfiable. Then, $\FAGF$ is regular.

Let $\gamma'$ be an elementary conjunction equivalent to $\neg \gamma$.

By Theorem \ref{theorem:Realizability of abstract game forms}, there is a pointed concurrent game model $(\MM,s_0)$ realizing $\FAGF$ and $\gamma'$. Let $\MM = (\FST, \FAC, \Faja, \Fout, \Flab)$.

--------------------

By Definition \ref{definition:Realization}, $\MM,s_0 \Vdash \neg \gamma$.

It suffices to show for all $j' \in \FPI$, $\MM, s_0 \Vdash \neg \Fchp{j'}$.
Let $j' \in \FPI$, $\ja{\FAA_{j'}} \in \FJA^0_{\FAA_{j'}}$ and $\ja{\FBB_{j'}} \in \FJA^0_{\FBB_{j'}}$. It suffices to show $\ja{\FAA_{j'}} \nleadto_{s_0} \phi_{j'}$ or $\ja{\FAA_{j'}} \uplus \ja{\FBB_{j'}} \nleadto_{s_0} \phi_{j'} \wedge \psi_{j'}$.

Consider the following two cases:
\begin{itemize}

\item

\textbf{Case 1: $j' \in \FPI^{*}$.}

Then $\neg (\phi_{j'} \wedge \psi_{j'}) \in \Fforce (\ja{\FAA_{j'}} \uplus \ja{\FBB_{j'}})$. By Definition \ref{definition:Realization}, $\ja{\FAA_{j'}} \uplus \ja{\FBB_{j'}} \leadto_{s_0} \neg (\phi_{j'} \wedge \psi_{j'})$. Then $\ja{\FAA_{j'}} \uplus \ja{\FBB_{j'}} \nleadto_{s_0} \phi_{j'} \wedge \psi_{j'}$.

\item

\textbf{Case 2: $j' \notin \FPI^{*}$.}

Note $\FPIf \subseteq \FPI^{*}$. Then, $\overline{\FAA_{j'}} \neq \emptyset$. By the assumption given at the beginning of this proof: (1) $\FBW_{j \in \FPI^*} \neg (\phi_{j} \land \psi_{j}) \wedge \neg (\phi_{j'} \wedge \psi_{j'})$ is satisfiable, and (2) $\FBW_{j \in \FPI^*} \neg (\phi_{j} \land \psi_{j}) \wedge \neg \phi_{j'}$ is satisfiable, or $\nega{\FAA_{j'} \cup \FBB_{j'}} \neq \emptyset$.

Consider the following two cases:
\begin{itemize}

\item

\textbf{Case 2.1: $\FBW_{j \in \FPI^*} \neg (\phi_j \land \psi_j) \wedge \neg \phi_{j'}$ is satisfiable.}

By the definition of $\FArej$, we can get that $\nega{\FAA_{j'}}$ has a joint action $\ja{\nega{\FAA_{j'}}}$ in $\FJA^0_{\nega{\FAA_{j'}}}$ such that $\FArej (\ja{\FAA_{j'}} \cup \ja{\nega{\FAA_{j'}}}) = j'$. Then, $\neg \phi_{j'} \in \Fforce (\ja{\FAA_{j'}} \cup \ja{\nega{\FAA_{j'}}})$. By Definition \ref{definition:Realization}, $\ja{\FAA_{j'}} \cup \ja{\nega{\FAA_{j'}}} \leadto_{s_0} \neg \phi_{j'}$. Then, $\ja{\FAA_{j'}} \nleadto_{s_0} \phi_{j'}$.

\item

\textbf{Case 2.2: $\nega{\FAA_{j'} \cup \FBB_{j'}} \neq \emptyset$.}

By the definition of $\FArej$, we can get that $\nega{\FAA_{j'} \cup \FBB_{j'}}$ has a joint action $\ja{\nega{\FAA_{j'} \cup \FBB_{j'}}}$ in $\FJA^0_{\nega{\FAA_{j'} \cup \FBB_{j'}}}$ such that $\FArej ((\ja{\FAA_{j'}} \uplus \ja{\FBB_{j'}}) \cup \ja{\nega{\FAA_{j'} \cup \FBB_{j'}}}) = j'$. Then, $\neg (\phi_{j'} \wedge \psi_{j'}) \in \Fforce ((\ja{\FAA_{j'}} \uplus \ja{\FBB_{j'}}) \cup \ja{\nega{\FAA_{j'} \cup \FBB_{j'}}})$. By Definition \ref{definition:Realization}, $\ja{\FAA_{j'}} \uplus \ja{\FBB_{j'}} \leadto_{s_0} \neg (\phi_{j'} \wedge \psi_{j'})$. Then, $\ja{\FAA_{j'}} \uplus \ja{\FBB_{j'}} \nleadto_{s_0} \phi_{j'} \wedge \psi_{j'}$.

\end{itemize}

\end{itemize}

\end{proof}

\subsection{The second downward validity lemma for $\FCCSRp$}

The second validity-reduction condition of standard disjunctions for $\FCCSRp$ involves a complex notion, which is defined as follows.

\begin{definition}[Semantically well-arranged sequences of sets of positive indices]

Let $\gamma$ be an elementary disjunction, $\FPI$ be a set of positive indices, and $\FSD = \gamma \vee \FBV_{j \in \FPI} \Fchp{j}$ be a standard disjunction for $\FCCSRp$ with respect to $\gamma$ and $\FPI$.

Let $\FPI^*_1, \FPI^*_2 \subseteq \FPI$. We say $\FPI^{*}_2$ is a \Fdefs{semantic expansion} of $\FPI^*_1$ if there is $j' \in \FPI - \FPI^{*}_1$ such that $\vDash \FBV_{j \in \FPI^*_1} (\phi_j \land \psi_j) \vee \phi_{j'}$ and $\FPI^*_2 = \FPI^*_1 \cup \{j'\}$.

Let $(\FPI^{*}_{k})_{0 \leq k \leq \FL}$ be a sequence of subsets of $\FPI$ for some $\FL \in \mathbb{N}$. We say $(\FPI^{*}_{k})_{0 \leq k \leq \FL}$ is \Fdefs{semantically well-arranged} if:
\begin{enumerate}[label=(\arabic*),leftmargin=3.33em]
\item $\FPI^*_0 = \FPIf$;
\item for every $k < \FL$, $\FPIf \subseteq \FPI^{*}_k \subseteq \FPIfs$;
\item for every $k < \FL$, $\FPI^*_{k+1}$ is a semantic expansion of $\FPI^*_{k}$;
\item $\vDash \FBV_{j \in \FPI^*_\FL} (\phi_j \land \psi_j)$.
\end{enumerate}

\end{definition}

An example for this definition will be given soon.

\begin{definition}[Second validity-reduction condition of standard disjunctions for $\FCCSRp$]

Let $\gamma$ be an elementary disjunction, $\FPI$ be a set of positive indices, and $\FSD = \gamma \vee \FBV_{j \in \FPI} \Fchp{j}$ be a standard disjunction for $\FCCSRp$ with respect to $\gamma$ and $\FPI$.

The \Fdefs{second validity-reduction condition} of $\FSD$ is defined as follows: one of the following conditions holds:
\begin{enumerate}[label=(\alph*),leftmargin=3.33em]
\item $\vDash \gamma$;
\item there is a semantically well-arranged sequence of subsets of $\FPI$.
\end{enumerate}

\end{definition}

The following example illustrates the two preceding definitions: 

\begin{example}

Let $\gamma = \bot $ and $\FPI =\{1,2,3\}$. Consider the following standard disjunction for $\FCCSRp$ with respect to $\gamma$ and $\FPI$, where 
$\FAA_1= \emptyset$, $\phi_1= \top$, $\FBB_1=\FAG$, $\psi_1=p_1$,
$\FAA_2 = \emptyset$, $\phi_2=\neg p_1$, $\FBB_2=\FAG$, $\psi_2 = p_2$,
$\FAA_3 = \emptyset$, $\phi_3=\neg p_2$, $\FBB_3=\emptyset$, $\psi_3=\top$:
\[\FSD = \bot \vee \Fchance{\FAA_1}{\phi_1}{\FBB_1}{\psi_1} \vee \Fchance{\FAA_2}{\phi_2}{\FBB_2}{\psi_2} \vee \Fchance{\FAA_3}{\phi_3}{\FBB_3}{\psi_3} ,\]
that is,
\[\FSD = \bot \vee \Fchance{\emptyset}{\top}{\FAG}{p_1} \vee \Fchance{\emptyset}{\neg p_1}{\FAG}{p_2}\vee \Fchance{\emptyset}{\neg p_2}{\emptyset }{\top} .\]

As mentioned in Example \ref{example: a valid standard disjunction CCSRp}, $\FSD$ is valid.
In fact, the second validity-reduction condition of $\FSD$ is met. It suffices to show $(\FPI^*_0, \FPI^*_1, \FPI^*_2, \FPI^*_3)$, where $\FPI^*_0 = \emptyset$, $\FPI^*_1 = \{1\}$, $\FPI^*_2 = \{1,2\}$, and $\FPI^*_3 = \{1,2,3\}$, is a semantically well-arranged sequence.
Note $\FPIf =\emptyset $ and $\FPIfs= \{1,2\}$.

It is easy to see that the conditions (1) and (2) in the definition of semantically well-arranged sequences hold.

Note $\FBV_{j \in \FPI^*_0} (\phi_j \land \psi_j) \vee \phi_1 = \bot \lor \top$, which is valid. Then, $\FPI^*_1$ is a semantic expansion of $\FPI^*_0$.

Note $\FBV_{j \in \FPI^*_1} (\phi_j \land \psi_j) \vee \phi_2 = (\top \wedge p_1) \vee \neg p_1$, which is valid. Then, $\FPI^*_2$ is a semantic expansion of $\FPI^*_1$.

Note $\FBV_{j \in \FPI^*_2} (\phi_j \land \psi_j) \vee \phi_3 = (\top \wedge p_1)\vee (\neg p_1\wedge p_2)\vee \neg p_2$, which is valid. Then, $\FPI^*_3$ is a semantic expansion of $\FPI^*_2$.

Then, the condition (3) holds.

Note $\FBV_{j \in \FPI^*_3} (\phi_j \land \psi_j) = (\top \wedge p_1)\vee (\neg p_1\wedge p_2)\vee (\neg p_2\wedge \top)$, which is valid. Then, the condition (4) holds.

\end{example}

\begin{lemma}[The second downward validity for $\FCCSRp$]
\label{lemma:Downward validity for FCCSRp}

Let $\gamma$ be an elementary disjunction, $\FPI$ be a set of positive indices, and $\FSD = \gamma \vee \FBV_{j \in \FPI} \Fchp{j}$ be a standard disjunction for $\FCCSRp$ with respect to $\gamma$ and $\FPI$.

Assume $\vDash \FSD$.

Then, the second validity-reduction condition of $\FSD$ is met; that is, one of the following conditions holds:
\begin{enumerate}[label=(\alph*),leftmargin=3.33em]
\item $\vDash \gamma$;
\item there is a semantically well-arranged sequence of subsets of $\FPI$.
\end{enumerate}

\end{lemma}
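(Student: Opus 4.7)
The plan is to invoke Lemma \ref{lemma:1 implies 2}, the first downward validity lemma, and then greedily build a semantically well-arranged sequence. Assuming $\vDash \FSD$, Lemma \ref{lemma:1 implies 2} gives either $\vDash \gamma$, in which case (a) holds and we are done, or the condition that for every $\FPI^*$ with $\FPIf \subseteq \FPI^* \subseteq \FPIfs$ at least one of (b1), (b2), (b3) is available at $\FPI^*$. I would set $\FPI^*_0 := \FPIf$, so clause (1) of the definition of a semantically well-arranged sequence is immediate, and extend the sequence step by step.

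Given $\FPI^*_k$ with $\FPIf \subseteq \FPI^*_k \subseteq \FPIfs$, I would inspect which of the three options is realized at $\FPI^*_k$ and branch accordingly. If (b1) holds, stop with $L := k$; clause (4) is then immediate. If (b2) holds via some $j' \in \FPI - \FPI^*_k$, set $\FPI^*_{k+1} := \FPI^*_k \cup \{j'\}$ and stop with $L := k+1$; since $\phi_{j'} \wedge \psi_{j'}$ propositionally entails $\phi_{j'}$, the validity furnished by (b2) simultaneously witnesses semantic expansion (clause (3)) and clause (4). If only (b3) is available, it must supply $j' \in \FPI - \FPI^*_k$ with $\FAA_{j'} \cup \FBB_{j'} = \FAG$, that is, $j' \in \FPIfs - \FPI^*_k$; set $\FPI^*_{k+1} := \FPI^*_k \cup \{j'\}$, note that this set still lies between $\FPIf$ and $\FPIfs$ (so clause (2) is preserved), and iterate with semantic expansion supplied directly by (b3).

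Termination is automatic: every (b3)-step strictly enlarges $\FPI^*_k$ inside the finite set $\FPIfs$, so after at most $|\FPIfs - \FPIf|$ iterations one has $\FPI^*_k = \FPIfs$, after which (b3) can no longer fire (it requires a fresh element of $\FPIfs$) and the next application of the first validity-reduction condition forces (b1) or (b2). Thus all four clauses of the definition are eventually satisfied, establishing (b).

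The chief obstacle is bookkeeping rather than conceptual. One must track the asymmetry in clause (2), which constrains $\FPI^*_k$ only for $k < L$ and so permits the terminal (b2)-branch to leave $\FPIfs$, and one must verify that (b3) genuinely provides a fresh element of $\FPIfs$ so that the iteration strictly progresses. Once these two points are in place, the construction translates the universal quantifier over $\FPI^*$ appearing in the first validity-reduction condition into the single existential witness demanded by the second.
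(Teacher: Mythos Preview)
Your proposal is correct and follows essentially the same approach as the paper. The paper phrases the argument via maximality---defining ``semantically pseudo-well-arranged'' sequences, taking a maximal one, and deriving a contradiction from case (b3) if clause (4) fails---whereas you phrase it as an explicit greedy construction with a termination bound; these are two standard packagings of the same idea, and your bookkeeping (in particular the observation that clause (2) constrains only $k<L$, so the terminal (b2)-step may leave $\FPIfs$) matches the paper's use of the definition exactly.
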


\begin{proof}
~

It is easy to see that the result holds if $\FPI = \emptyset$. Assume $\FPI \neq \emptyset$.

By Lemma \ref{lemma:1 implies 2}, the first downward validity lemma for $\FCCSRp$, the first validity-reduction of $\FSD$ is met; that is, one of the following conditions holds:
\begin{enumerate}[label=(\alph*),leftmargin=3.33em]

\item 

$\models \gamma$;

\item 

for all $\FPI^{*} \subseteq \FPI$ such that $\FPIf \subseteq \FPI^{*} \subseteq \FPIfs$, one of the following conditions holds:
\begin{enumerate}[label=(\arabic*)]
\item $\vDash \FBV_{j \in \FPI^*} (\phi_j \land \psi_j)$;
\item there is $j' \in \FPI - \FPI^{*}$ such that $\vDash \FBV_{j \in \FPI^*} (\phi_j \land \psi_j) \vee (\phi_{j'} \land \psi_{j'})$;
\item there is $j' \in \FPI - \FPI^{*}$ such that $\vDash \FBV_{j \in \FPI^*} (\phi_j \land \psi_j) \vee \phi_{j'}$ and $\FAA_{j'} \cup \FBB_{j'}= \FAG$.
\end{enumerate}

\end{enumerate}

It suffices to show if the condition (b) in the first validity-reduction of $\FSD$ holds, then there is a semantically well-arranged sequence of subsets of $\FPI$.
Assume the condition (b) in the first validity-reduction of $\FSD$ holds.

--------------------

Let $(\FPI^{*}_{k})_{0 \leq k \leq \FL}$ be a sequence of subsets of $\FPI$ for some $\FL \in \mathbb{N}$. We say $(\FPI^{*}_{k})_{0 \leq k \leq \FL}$ is \Fdefs{semantically pseudo-well-arranged} if:
\begin{itemize}
\item $\FPI^*_0 = \FPIf$;
\item for every $k \leq \FL$, $\FPIf \subseteq \FPI^{*}_k \subseteq \FPIfs$;
\item for every $k < \FL$, $\FPI^*_{k+1}$ is a semantic expansion of $\FPI^*_{k}$.
\end{itemize}

Let $\mathbf{SP}$ be the set of all semantically pseudo-well-arranged sequences of subsets of $\FPI$. Note that $(\FPIf)$ is a semantically pseudo-well-arranged sequence. Then, $\mathbf{SP}$ is not empty. Note that $\FPI$ is finite. Then, it is easy to see that $\mathbf{SP}$ is finite. Therefore, $\mathbf{SP}$ has a maximal sequence. Let $(\FPI^{*}_{k})_{0 \leq k \leq \FL}$ be such a maximal sequence.

--------------------

Assume $\vDash \FBV_{j \in \FPI^*_\FL} (\phi_j \land \psi_j)$. Then $(\FPI^{*}_{k})_{0 \leq k \leq \FL}$ is semantically well-arranged.

--------------------

Assume $\nvDash \FBV_{j \in \FPI^*_\FL} (\phi_j \land \psi_j)$. Since $\FPIf \subseteq \FPI^{*}_{\FL} \subseteq \FPIfs$, one of the following conditions is met:
\begin{enumerate}[label=(\alph*)]
\item $\vDash \FBV_{j \in \FPI^*_\FL} (\phi_j \land \psi_j)$;
\item there is $j' \in \FPI - \FPI^{*}_\FL$ such that $\vDash \FBV_{j \in \FPI^*} (\phi_j \land \psi_j) \vee (\phi_{j'} \land \psi_{j'})$;
\item there is $j' \in \FPI - \FPI^{*}_\FL$ such that $\vDash \FBV_{j \in \FPI^*_\FL} (\phi_j \land \psi_j) \vee \phi_{j'}$ and $\FAA_{j'} \cup \FBB_{j'}= \FAG$.
\end{enumerate}

Note (a) does not hold.

Assume (c). Look at the sequence $(\FPI^{*}_0, \dots, \FPI^*_\FL, \FPI^*_{\FL+1})$, where $\FPI^*_{\FL+1} = \FPI^*_\FL \cup \{j'\}$.
It is easy to see the following: 
\begin{itemize}
\item $\FPI^*_0 = \FPIf$;
\item for every $k \leq \FL+1$, $\FPIf \subseteq \FPI^{*}_k \subseteq \FPIfs$;
\item for every $k < \FL+1$, $\FPI_{k+1}$ is a semantic expansion of $\FPI_{k}$.
\end{itemize}

\noindent Then $(\FPI^{*}_0, \dots, \FPI^*_\FL, \FPI^*_{\FL+1})$ is semantically pseudo-well-arranged. Then $(\FPI^{*}_{k})_{0 \leq k \leq \FL}$ is not maximal. We have a contradiction.

Therefore, (b) holds. Note $\FBV_{j \in \FPI^*} (\phi_j \land \psi_j) \vee (\phi_{j'} \land \psi_{j'})$ implies $\FBV_{j \in \FPI^*_\FL} (\phi_j \land \psi_j) \vee \phi_{j'}$. Then $\vDash \FBV_{j \in \FPI^*_\FL} (\phi_j \land \psi_j) \vee \phi_{j'}$.
Look at the sequence $(\FPI^{*}_0, \dots, \FPI^*_\FL, \FPI^*_{\FL+1})$, where $\FPI^*_{\FL+1} = \FPI^*_\FL \cup \{j'\}$. It is easy to see the following:
\begin{itemize}
\item $\FPI^*_0 = \FPIf$;
\item for every $k < \FL+1$, $\FPIf \subseteq \FPI^{*}_k \subseteq \FPIfs$;
\item for every $k < \FL+1$, $\FPI^*_{k+1}$ is a semantic expansion of $\FPI^*_{k}$;
\item $\vDash \FBV_{j \in \FPI^*_{\FL+1}} (\phi_j \land \psi_j)$.
\end{itemize}

Then, $(\FPI^{*}_0, \dots, \FPI^*_\FL, \FPI^*_{\FL+1})$ is semantically well-arranged.

\end{proof}

\subsection{Upward derivability lemma for $\FCCSRp$}

\begin{definition}[Syntactically well-arranged sequences of sets of positive indices]

Let $\gamma$ be an elementary disjunction, $\FPI$ be a set of positive indices, and $\FSD = \gamma \vee \FBV_{j \in \FPI} \Fchp{j}$ be a standard disjunction for $\FCCSRp$ with respect to $\gamma$ and $\FPI$.

Let $\FPI^*_1, \FPI^*_2 \subseteq \FPI$.
We say $\FPI^{*}_2$ is a \Fdefs{syntactical expansion} of $\FPI^*_1$ if there is $j' \in \FPI - \FPI^{*}_1$ such that $\vdash_\FCCSRp \FBV_{j \in \FPI^*_1} (\phi_j \land \psi_j) \vee \phi_{j'}$ and $\FPI^*_2 = \FPI^*_1 \cup \{j'\}$.

Let $(\FPI^{*}_{k})_{0 \leq k \leq \FL}$ be a sequence of subsets of $\FPI$ for some $\FL \in \mathbb{N}$. We say $(\FPI^{*}_{k})_{0 \leq k \leq \FL}$ is \Fdefs{syntactically well-arranged} if:
\begin{enumerate}[label=(\arabic*),leftmargin=3.33em]
\item $\FPI^*_0 = \FPIf$;
\item for every $k < \FL$, $\FPIf \subseteq \FPI^{*}_k \subseteq \FPIfs$;
\item for every $k < \FL$, $\FPI^*_{k+1}$ is a syntactical expansion of $\FPI^*_{k}$;
\item $\vdash_\FCCSRp \FBV_{j \in \FPI^*_\FL} (\phi_j \land \psi_j)$.
\end{enumerate}

\end{definition}

\begin{definition}[Derivability-reduction condition of standard disjunctions for $\FCCSRp$]

Assume $\gamma$ is an elementary disjunction, $\FPI$ is a set of positive indices, and $\FSD = \gamma \vee \FBV_{j \in \FPI} \Fchp{j}$ is a standard disjunction for $\FCCSRp$ with respect to $\gamma$ and $\FPI$.

The \Fdefs{derivability-reduction condition} of $\FSD$ is defined as follows: one of the following conditions holds:
\begin{enumerate}[label=(\alph*),leftmargin=3.33em]
\item $\vdash_\FCCSRp \gamma$;
\item there is a syntactically well-arranged sequence of subsets of $\FPI$.
\end{enumerate}

\end{definition}

\begin{lemma}[Upward derivability for $\FCCSRp$]\label{lemma:Upward derivability for FCCSRp}

Let $\gamma$ be an elementary disjunction, $\FPI$ be a set of positive indices, and $\FSD = \gamma \vee \FBV_{j \in \FPI} \Fchp{j}$ be a standard disjunction for $\FCCSRp$ with respect to $\gamma$ and $\FPI$.

Assume the derivability-reduction condition of $\FSD$ is met; that is, one of the following conditions holds:
\begin{enumerate}[label=(\alph*),leftmargin=3.33em]
\item $\vdash_\FCCSRp \gamma$;
\item there is a syntactically well-arranged sequence of subsets of $\FPI$.
\end{enumerate}

Then, $\vdash_\FCCSRp \gamma \vee \FBV_{j \in \FPI} \Fchp{j}$.

\end{lemma}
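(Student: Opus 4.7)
The plan is to dispose of case (a) by a single application of $\Frule_1$, since $\gamma \to \gamma \vee \FBV_{j \in \FPI} \Fchp{j}$ is a propositional tautology; this also covers the degenerate case $\FPI = \emptyset$. For case (b), I fix a syntactically well-arranged sequence $(\FPI^*_k)_{0 \le k \le \FL}$ and abbreviate $\Psi_k := \FBV_{j \in \FPI^*_k}(\phi_j \wedge \psi_j)$ and $\Xi_k := \FBV_{j \in \FPI^*_k} \Fchp{j}$. Since adding the missing disjuncts together with $\gamma$ is a propositional move by $\Frule_1$, it suffices to derive $\vdash_\FCCSRp \Xi_\FL$.

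The backbone of the argument is the following auxiliary meta-claim $(\dagger_k)$, to be established by induction on $k$ for $0 \le k \le \FL - 1$: whenever $\vdash_\FCCSRp \Fchance{\FAG}{\Psi_k}{\emptyset}{\top} \vee \chi$, one has $\vdash_\FCCSRp \Xi_k \vee \chi$. The base case $k = 0$ uses that $\FAA_j = \FAG$ for every $j \in \FPIf$: repeatedly splitting the inner disjunction of $\Psi_0$ via $\Frule_3$ (which preserves the coalition on both new disjuncts once we start from $\FAG$) and then applying $\Frule_4$ converts each $\Fchance{\FAG}{\phi_j \wedge \psi_j}{\emptyset}{\top}$ into $\Fchp{j}$. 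For the inductive step, the constraint $\FPI^*_{k+1} \subseteq \FPIfs$ built into the definition of a syntactically well-arranged sequence guarantees $\FAA_{j'_k} \cup \FBB_{j'_k} = \FAG$, which is exactly what lets $\Frule_5$ fire with the coalition $\FAG$. Concretely, from $\vdash \Fchance{\FAG}{\Psi_{k+1}}{\emptyset}{\top} \vee \chi$ I split via $\Frule_3$ to obtain $\vdash \Fchance{\FAG}{\Psi_k}{\emptyset}{\top} \vee \Fchance{\FAG}{\phi_{j'_k} \wedge \psi_{j'_k}}{\emptyset}{\top} \vee \chi$, while from the syntactic-expansion fact $\vdash \Psi_k \vee \phi_{j'_k}$ I derive, after reordering and applying $\Frule_2$ with $\FAA = \emptyset$ followed by $\Frule_3$, the complementary formula $\vdash \Fchance{\emptyset}{\phi_{j'_k}}{\emptyset}{\top} \vee \Fchance{\FAG}{\Psi_k}{\emptyset}{\top}$. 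Combining these via the propositional tautology $(A \vee B) \wedge (C \vee B) \to (A \wedge C) \vee B$ produces the premise of $\Frule_5$, giving $\vdash \Fchp{j'_k} \vee \Fchance{\FAG}{\Psi_k}{\emptyset}{\top} \vee \chi$; invoking $(\dagger_k)$ with $\chi' := \Fchp{j'_k} \vee \chi$ contracts this to $\vdash \Xi_{k+1} \vee \chi$.

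With $(\dagger_{\FL-1})$ in hand, the final step handles the distinguished index $j^* := j'_{\FL-1}$, which, unlike the earlier $j'_k$'s, need not lie in $\FPIfs$. From $\vdash \Psi_\FL$ I apply $\Frule_2$ with coalition $\FAA_{j^*} \cup \FBB_{j^*}$ and then $\Frule_3$, placing $\phi_{j^*} \wedge \psi_{j^*}$ first, to obtain $\vdash \Fchance{\FAA_{j^*} \cup \FBB_{j^*}}{\phi_{j^*} \wedge \psi_{j^*}}{\emptyset}{\top} \vee \Fchance{\FAG}{\Psi_{\FL-1}}{\emptyset}{\top}$; the syntactic-expansion fact at step $\FL - 1$ yields, as before, $\vdash \Fchance{\emptyset}{\phi_{j^*}}{\emptyset}{\top} \vee \Fchance{\FAG}{\Psi_{\FL-1}}{\emptyset}{\top}$. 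A propositional combination feeds $\Frule_5$ to give $\vdash \Fchp{j^*} \vee \Fchance{\FAG}{\Psi_{\FL-1}}{\emptyset}{\top}$, and a last application of $(\dagger_{\FL-1})$ with $\chi := \Fchp{j^*}$ delivers $\vdash \Xi_\FL$. The main obstacle I anticipate is precisely the asymmetry between the inductive steps (where $j'_k \in \FPIfs$ lets $\Frule_5$ fire with the coalition $\FAG$ already present in the derivations produced by $\Frule_3$) and the final step (where one must instead apply $\Frule_2$ to $\Psi_\FL$ with the smaller coalition $\FAA_{j^*} \cup \FBB_{j^*}$), together with absorbing the edge cases $\FL = 0$ (handled directly by the base-case argument, with no final step) and $\FPIf = \emptyset$ (where $\Psi_0 = \Xi_0 = \bot$ forces starting the induction at $k = 1$, with the syntactic expansion at $k = 0$ simplifying to $\vdash \phi_{j'_0}$ and yielding $\vdash \Fchance{\emptyset}{\phi_{j'_0}}{\emptyset}{\top}$ directly via $\Frule_2$).
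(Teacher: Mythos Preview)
Your proposal is correct and follows essentially the same route as the paper: both arguments peel off the indices of the well-arranged sequence one at a time, using the syntactic-expansion datum $\vdash_\FCCSRp \Psi_k \vee \phi_{j'_k}$ (after $\Frule_2$ and $\Frule_3$) to supply the $\Fchance{\emptyset}{\phi_{j'_k}}{\emptyset}{\top}$ conjunct needed for $\Frule_5$, and both rely on $\FPI^*_k \subseteq \FPIfs$ to make the coalition $\FAG$ appearing after $\Frule_3$ match $\FAA_{j'_k}\cup\FBB_{j'_k}$. The differences are organizational rather than substantive: the paper runs a \emph{reverse} induction on a concrete derivable formula $\FBV_{j^*\in\FPI^*_k}\Fchance{\FAG}{\phi_{j^*}\wedge\psi_{j^*}}{\emptyset}{\top}\vee\FBV_{j\in\FPI^*_\FL-\FPI^*_k}\Fchp{j}$, whereas you package the same work as a forward induction on the meta-implication $(\dagger_k)$; and for the distinguished final index $j_\FL$ the paper applies $\Frule_2$ with the smaller coalition $\FAA_{j_\FL}$ and then uses $\Frule_4$ directly, which is a little cleaner than your use of $\Frule_5$ (the latter needlessly re-derives $\Fchance{\emptyset}{\phi_{j^*}}{\emptyset}{\top}$ when $\Frule_4$ would have sufficed). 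Your explicit flagging of the edge cases $\FL=0$ and $\FPIf=\emptyset$ is handled correctly and matches the paper's treatment.
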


\begin{proof}
~

It is easy to see that the result holds if $\FPI = \emptyset$. Assume $\FPI \neq \emptyset$.

Suppose (a). By the rule $\Frule_1$, $\vdash_\FCCSRp \gamma \vee \FBV_{j \in \FPI} \Fchance{\FAA_i}{\phi_i}{\FBB_i}{\psi_i}$.

Suppose (b). Let $(\FPI^{*}_{k})_{0 \leq k \leq \FL}$ be a syntactically well-arranged sequence for some $\FL \in \mathbb{N}$. Then:
\begin{itemize}
\item $\FPI^*_0 = \FPIf$;
\item for every $k < \FL$, $\FPIf \subseteq \FPI^{*}_{k} \subseteq \FPIfs$;
\item for every $k < \FL$, there is $j_{k+1} \in \FPI - \FPI^{*}_{k}$ such that $\vdash_\FCCSRp \FBV_{j^{*} \in \FPI^*_k} (\phi_{j^{*}} \land \psi_{j^{*}}) \vee \phi_{j_{k+1}}$ and $\FPI^{*}_{k+1} = \FPI^{*}_{k} \cup \{j_{k+1}\}$;
\item $\vdash_\FCCSRp \FBV_{j^{*} \in \FPI^*_\FL} (\phi_{j^{*}} \land \psi_{j^{*}})$.
\end{itemize}

\textbf{Assume $\FL=0$.}

\begin{enumerate}[label=(\arabic*),leftmargin=3.33em]

\item

Since $(\FPI^{*}_{k})_{0 \leq k \leq \FL}$ is a syntactically well-arranged sequence:
\[
\vdash_\FCCSRp
\FBV_{j^{*}\in \FPI^{*}_{0}}(\phi_{j^{*}}\wedge \psi_{j^{*}})
\]

\item

By application of the rule $\Frule_2$:
\[
\vdash_\FCCSRp
\Fchance
{\FAG}
{\FBV_{j^{*}\in \FPI^{*}_{0}}(\phi_{j^{*}}\wedge \psi_{j^{*}})}
{\emptyset}
{\top}
\]

\item

By repeated applications of the rule $\Frule_3$:
\[
\vdash_\FCCSRp
\FBV_{j^{*}\in \FPI^{*}_{0}}
\Fchance
{\FAG }
{\phi_{j^{*}}\wedge \psi_{j^{*}}}
{\emptyset}
{\top}
\]

\item

Note $\FAA_{j^{*}}=\FAG$ for all $j^{*}\in \FPI^{*}_{0}$. Then:
\[
\vdash_\FCCSRp
\FBV_{j^{*}\in \FPI^{*}_{0}}
\Fchance
{\FAA_{j^*}}
{\phi_{j^{*}}\wedge \psi_{j^{*}}}
{\emptyset}
{\top}
\]

\item

By repeated applications of the rule $\Frule_4$:
\[
\vdash_\FCCSRp
\FBV_{j^{*}\in \FPI^{*}_{0}}
\Fchance
{\FAA_{j^*}}
{\phi_{j^{*}}}
{\FBB_{j^*}}
{\psi_{j^{*}}}
\]

\item

By application of the rule $\Frule_1$:
\[
\vdash_\FCCSRp \gamma \vee \FBV_{j\in \FPI }\Fchp{j}
\]

\end{enumerate}

\textbf{Assume $\FL>0$.}

We claim that for every $k < \FL$:
\[
\vdash_\FCCSRp
\FBV_{j^{*}\in \FPI^{*}_{k}}
\Fchance
{\FAG}
{\phi_{j^{*}} \wedge \psi_{j^{*}}}
{\emptyset}
{\top}
\vee
\FBV_{j \in \FPI^{*}_{\FL} - \FPI^{*}_{k}} \Fchp{j}
\]

We put an induction on $k$ \emph{reversely}.

Assume $k = \FL - 1$.

\begin{enumerate}[label=(\arabic*),leftmargin=3.33em]

\item

Since $(\FPI^{*}_{k})_{0 \leq k \leq \FL}$ is a syntactically well-arranged sequence:
\[
\vdash_\FCCSRp
\FBV_{j^{*}\in \FPI^{*}_{\FL}}(\phi_{j^{*}}\wedge \psi_{j^{*}})
\]

\item

Note $\FPI^{*}_{\FL} = \FPI^{*}_{\FL-1} \cup \{j_{\FL}\}$. By application of the rule $\Frule_2$:
\[
\vdash_\FCCSRp
\Fchance
{\FAA_{j_{\FL}} }
{\FBV_{j^{*}\in \FPI^{*}_{\FL}}(\phi_{j^{*}}\wedge \psi_{j^{*}})}
{\emptyset}
{\top}
\]

\item

Then:
\[
\vdash_\FCCSRp
\Fchance
{\FAA_{j_{\FL}}}
{\FBV_{j^{*} \in \FPI^{*}_{\FL-1}} (\phi_{j^{*}} \wedge \psi_{j^{*}}) \vee (\phi_{j_\FL} \wedge \psi_{j_\FL})}
{\emptyset}
{\top}
\]

\item

By repeated applications of the rule $\Frule_3$:
\[
\vdash_\FCCSRp
\FBV_{j^{*}\in \FPI^{*}_{\FL-1}} \Fchance
{\FAG }
{\phi_{j^{*}} \wedge \psi_{j^{*}} }
{\emptyset}
{\top}
\vee
\Fchance
{\FAA_{j_{\FL}} }
{\phi_{j_{\FL}}\wedge \psi_{j_{\FL}}}
{\emptyset}
{\top}
\]

\item

By application of the rule $\Frule_4$:
\[
\vdash_\FCCSRp
\FBV_{j^{*}\in \FPI^{*}_{\FL-1}} \Fchance
{\FAG }
{\phi_{j^{*}} \wedge \psi_{j^{*}} }
{\emptyset}
{\top}
\vee
\Fchance
{\FAA_{j_{\FL}}}
{\phi_{j_{\FL}}}
{\FBB_{j_{\FL}}}
{\psi_{j_{\FL}}}
\]

\end{enumerate}

Assume $k = m - 1$ for some $m$ such that $1 \leq m < \FL$.

\begin{enumerate}[label=(\arabic*),leftmargin=3.33em]

\item

By the inductive hypothesis:
\[
\vdash_\FCCSRp
\FBV_{j^{*}\in \FPI^{*}_{m}} \Fchance
{\FAG }
{\phi_{j^{*}} \wedge \psi_{j^{*}} }
{\emptyset }
{\top }
\vee
\FBV_{j \in \FPI^{*}_{\FL}-\FPI^{*}_{m}}\Fchp{j}
\]

\item

Note $\FPI^*_m = \FPI^*_{m-1} \cup \{j_m\}$. Then:
\[
\vdash_\FCCSRp
\FBV_{j^{*}\in \FPI^{*}_{m-1}}(\phi_{j^{*}}\wedge \psi_{j^{*}})
\vee
\phi_{j_{m}}
\]

\item

By application of the rule $\Frule_2$:
\[
\vdash_\FCCSRp
\Fchance
{\emptyset }
{\FBV_{j^{*}\in \FPI^{*}_{m-1}}(\phi_{j^{*}} \wedge \psi_{j^{*}}) \vee \phi_{j_{m}}}
{\emptyset}
{\top}
\]

\item

By repeated applications of the rule $\Frule_3$:
\[
\vdash_\FCCSRp
\FBV_{j^{*} \in \FPI^{*}_{m-1}}
\Fchance
{\FAG}
{\phi_{j^{*}} \wedge \psi_{j^{*}}}
{\emptyset}
{\top}
\vee
\Fchance
{\emptyset }
{\phi_{j_{m}} }
{\emptyset }
{\top }
\]

\item

By application of the rule $\Frule_1$:
\[
\vdash_\FCCSRp
\FBV_{j^{*} \in \FPI^{*}_{m-1}}
\Fchance
{\FAG}
{\phi_{j^{*}} \wedge \psi_{j^{*}}}
{\emptyset}
{\top}
\vee
\Fchance
{\emptyset }
{\phi_{j_{m}} }
{\emptyset }
{\top }
\vee
\FBV_{j \in \FPI_{\FL}-\FPI^{*}_{m}}\Fchp{j}
\]

\item

From (1):
\[
\vdash_\FCCSRp
\FBV_{j^{*} \in \FPI^{*}_{m-1}} \Fchance
{\FAG }
{\phi_{j^{*}} \wedge \psi_{j^{*}} }
{\emptyset }
{\top }
\vee
\Fchance
{\FAG}
{\phi_{j_m} \wedge \psi_{j_{m}}}
{\emptyset }
{\top }
\vee
\]
\[
\FBV_{j \in \FPI^{*}_{\FL}-\FPI^{*}_{m}} \Fchp{j}
\]

\item

From (5) and (6), by application of the rule $\Frule_1$:
\[
\vdash_\FCCSRp
\FBV_{j^{*}\in \FPI^{*}_{m-1}} \Fchance
{\FAG}
{\phi_{j^{*}} \wedge \psi_{j^{*}} }
{\emptyset }
{\top }
\vee
\big(\Fchance
{\FAG}
{\phi_{j_m} \wedge \psi_{j_{m}} )}
{\emptyset }
{\top }
\wedge \Fchance
{\emptyset }
{\phi_{j_{m}}}
{\emptyset}
{\top}\big)
\vee
\]
\[
\FBV_{j \in \FPI^{*}_{\FL}-\FPI^{*}_{m}}\Fchp{j}
\]

\item

Note $\FAA_{j_{m}}\cup \FBB_{j_{m}}=\FAG$. By application of the rule $\Frule_5$:
\[
\vdash_\FCCSRp
\FBV_{j^{*}\in \FPI^{*}_{m-1}} \Fchance
{\FAG }
{\phi_{j^{*}} \wedge \psi_{j^{*}} }
{\emptyset }
{\top }
\vee
\Fchance
{\FAA_{j_{m}}}
{\phi_{j_m}}
{\FBB_{j_{m}}}
{\psi_{j_m} }
\vee
\]
\[
\FBV_{j \in \FPI^{*}_{\FL}-\FPI^{*}_{m}}\Fchp{j}
\]

\item

Note $j_{m} \notin \FPI^{*}_{m-1}$ and $\FPI^{*}_{m}=\FPI^{*}_{m-1} \cup \{j_{m}\}$. It can be verfied $\FPI^*_\FL- \FPI^*_{m-1} = (\FPI^{*}_\FL - \FPI^{*}_{m}) \cup \{j_{m}\}$. Then:
\[
\vdash_\FCCSRp
\FBV_{j^{*}\in \FPI^{*}_{m-1}} \Fchance
{\FAG }
{\phi_{j^{*}} \wedge \psi_{j^{*}} }
{\emptyset }
{\top}
\vee
\FBV_{j \in \FPI^{*}_{\FL}-\FPI^{*}_{m-1}}\Fchp{j}
\]
\end{enumerate}

Therefore, for every $k < \FL$:
\[
\vdash_\FCCSRp
\FBV_{j^{*}\in \FPI^{*}_{k}}
\Fchance
{\FAG}
{\phi_{j^{*}} \wedge \psi_{j^{*}}}
{\emptyset}
{\top}
\vee
\FBV_{j \in \FPI^{*}_{\FL} - \FPI^{*}_{k}} \Fchp{j}
\]

Specially:
\[
\vdash_\FCCSRp
\FBV_{j^{*}\in \FPI^{*}_{0}} \Fchance
{\FAG}
{\phi_{j^{*}} \wedge \psi_{j^{*}} )}
{\emptyset }
{\top}
\vee
\FBV_{j \in \FPI^{*}_{\FL}-\FPI^{*}_{0}}\Fchp{j}
\]

Note $\FAA_{j^{*}}=\FAG $ for all $j^{*}\in \FPI^{*}_{0}$. Then:
\[
\vdash_\FCCSRp
\FBV_{j^{*} \in \FPI^{*}_{0}} \Fchance
{\FAA_{j^{*}}}
{\phi_{j^{*}} \wedge \psi_{j^{*}} )}
{\emptyset }
{\top}
\vee
\FBV_{j \in \FPI^{*}_{\FL}-\FPI^{*}_{0}}\Fchp{j}
\]

By repeated applications of the rule $\Frule_4$:
\[
\vdash_\FCCSRp
\FBV_{j^{*}\in \FPI^{*}_{0}}\Fchp{j^{*}}
\vee
\FBV_{j \in \FPI^{*}_{\FL}-\FPI^{*}_{0}}\Fchp{j}
\]

That is:
\[
\vdash_\FCCSRp
\FBV_{j^{*} \in \FPI^{*}_{\FL}} \Fchp{j}
\]

By application of the rule $\Frule_1$:
\[
\vdash_\FCCSRp \gamma \vee \FBV_{j\in \FPI }\Fchp{j}
\]

\end{proof}

\subsection{Completeness of $\FCCSRp$ by induction}

\begin{theorem}[Completeness of $\FCCSRp$]

The axiomatic system for $\FCCSRp$ given in Definition \ref{definition:An axiomatic system for FCCSRp} is complete with respect to the set of valid formulas in $\Phi_{\FCCSRp}$.

\end{theorem}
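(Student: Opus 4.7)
The plan is to imitate exactly the induction-on-modal-depth template used already for $\FCLn$, $\FCLp$, and $\FCCSRn$, now instantiated with the $\FCCSRp$-specific normal form, downward validity, and upward derivability lemmas. Let $\phi \in \Phi_{\FCCSRp}$ with $\models \phi$, and induct on the modal depth $n$ of $\phi$. The base case $n = 0$ is immediate because $\phi$ is then a propositional tautology and hence an axiom of $\FCCSRp$.

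For the step $n > 0$, I would first apply Lemma~\ref{lemma:normal-form CCSRp} to obtain $\phi' = \FSD_0 \land \dots \land \FSD_k$ equivalent and equiderivable with $\phi$, of the same modal depth, where each $\FSD_i$ is a standard disjunction for $\FCCSRp$. Then $\models \FSD_i$ for each $i$, and it suffices to show $\vdash_\FCCSRp \FSD_i$. If the modal depth of $\FSD_i$ is less than $n$, the inductive hypothesis applies directly. Otherwise, write $\FSD_i = \gamma \vee \FBV_{j \in \FPI} \Fchp{j}$ and invoke Lemma~\ref{lemma:Downward validity for FCCSRp} (the \emph{second} downward validity lemma) to conclude that the second validity-reduction condition of $\FSD_i$ holds.

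The crucial bridging step converts the second validity-reduction condition for $\FSD_i$ into its derivability-reduction condition. If $\vDash \gamma$, then the inductive hypothesis (applied to $\gamma$, of modal depth $0 < n$) yields $\vdash_\FCCSRp \gamma$. Otherwise there is a semantically well-arranged sequence $(\FPI^*_k)_{0 \leq k \leq \FL}$ of subsets of $\FPI$. Every $\phi_j$ and $\psi_j$ occurs strictly inside a top-level $\Fchance{\cdot}{\cdot}{\cdot}{\cdot}$ operator of $\FSD_i$, so each has modal depth strictly less than $n$; consequently each witnessing formula $\FBV_{j \in \FPI^*_k} (\phi_j \land \psi_j) \vee \phi_{j_{k+1}}$ from the semantic expansions, as well as the final $\FBV_{j \in \FPI^*_\FL} (\phi_j \land \psi_j)$, is valid and of modal depth less than $n$. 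Applying the inductive hypothesis to each such formula upgrades the validities to derivabilities, so the very same sequence $(\FPI^*_k)_{0 \leq k \leq \FL}$ now witnesses syntactic well-arrangement. Lemma~\ref{lemma:Upward derivability for FCCSRp} then delivers $\vdash_\FCCSRp \FSD_i$, completing the induction.

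The only real obstacle is the modal-depth bookkeeping in the bridging step: one must be sure that every formula appearing in the semantic expansion and termination clauses is of modal depth strictly less than $n$ so that the inductive hypothesis is legitimately applicable. Since the syntactic shape of the conditions only involves boolean combinations of the $\phi_j, \psi_j$, all of which occur under a modal operator of $\FSD_i$, this bound is automatic. No further machinery beyond the lemmas already established is needed, and the proof is essentially a mechanical replay of the completeness arguments for the three previous fragments.
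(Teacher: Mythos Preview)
Your proposal is correct and follows essentially the same approach as the paper's proof: induction on modal depth, normal form lemma, second downward validity lemma, inductive hypothesis to pass from the second validity-reduction condition to the derivability-reduction condition, and finally the upward derivability lemma. You are in fact slightly more explicit than the paper about why the inductive hypothesis applies in the bridging step (the modal-depth bound on the $\phi_j$, $\psi_j$), which the paper leaves implicit.
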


\begin{proof}
~

Let $\phi$ be a formula in $\Phi_{\FCCSRp}$. Assume $\models \phi$. We want to show $\vdash_\FCCSRp \phi$. We put an induction on the modal depth $n$ of $\phi$.

Assume $n = 0$. Then, $\phi$ is a propositional tautology. Then, $\vdash_\FCCSRp \phi$.

Assume $n > 0$.

By Lemma \ref{lemma:normal-form CCSRp}, the normal form lemma for $\FCCSRp$, there is $\phi' \in \Phi_{\FCCSRp}$ such that (1) $\models \phi$ if and only if $\models \phi'$, (2) $\vdash_\FCCSRp \phi$ if and only if $\vdash_{\FCCSRp} \phi'$, (3) $\phi$ and $\phi'$ have the same modal depth, and (4) $\phi'$ is in the form of $\FSD_0 \land \dots \land \FSD_k$, where every $\FSD_i$ is a standard disjunction for $\FCCSRp$.

Then $\models \phi'$. Let $i \leq k$. Then $\models \FSD_i$. It suffices to show $\vdash_\FCCSRp \FSD_i$.

Assume the modal depth of $\FSD_i$ is less than $n$. By the inductive hypothesis, $\vdash_\FCCSRp \FSD_i$.

Assume the modal depth of $\FSD_i$ is $n$. Let $\FSD_i = \gamma \vee \FBV_{j \in \FPI} \Fchp{j}$.

By Lemma \ref{lemma:Downward validity for FCCSRp}, the second downward validity lemma for $\FCCSRp$, the second validity-reduction condition of $\FSD_i$ is met; that is, one of the following conditions holds:
\begin{enumerate}[label=(\alph*),leftmargin=3.33em]
\item $\vDash \gamma$;
\item there is a semantically well-arranged sequence of subsets of $\FPI$; that is, there is a sequence $(\FPI^{*}_{k})_{0 \leq k \leq \FL}$ of subsets of $\FPI$ for some $\FL \in \mathbb{N}$ such that:
\begin{itemize}
\item $\FPI^*_0 = \FPIf$;
\item for every $k < \FL$, $\FPIf \subseteq \FPI^{*}_k \subseteq \FPIfs$;
\item for every $k < \FL$, $\FPI^*_{k+1}$ is a semantic expansion of $\FPI^*_{k}$;
\item $\vDash \FBV_{j \in \FPI^*_\FL} (\phi_j \land \psi_j)$.
\end{itemize}

\end{enumerate}

By the inductive hypothesis, the derivability-reduction condition of $\FSD_i$ is met; that is, one of the following conditions holds:
\begin{enumerate}[label=(\alph*),leftmargin=3.33em]
\item $\vdash_\FCCSRp \gamma$;
\item there is a syntactically well-arranged sequence of subsets of $\FPI$; that is, there is a sequence $(\FPI^{*}_{k})_{0 \leq k \leq \FL}$ of subsets of $\FPI$ for some $\FL \in \mathbb{N}$ such that:
\begin{itemize}
\item $\FPI^*_0 = \FPIf$;
\item for every $k < \FL$, $\FPIf \subseteq \FPI^{*}_k \subseteq \FPIfs$;
\item for every $k < \FL$, $\FPI^*_{k+1}$ is a syntactical expansion of $\FPI^*_{k}$;
\item $\vdash_\FCCSRp \FBV_{j \in \FPI^*_\FL} (\phi_j \land \psi_j)$.
\end{itemize}

\end{enumerate}

By Lemma \ref{lemma:Upward derivability for FCCSRp}, the upward derivability lemma for $\FCCSRp$, $\vdash_\FCCSRp \FSD_i$.

\end{proof}

\section{Concluding remarks}
\label{section:Concluding remarks}

Previously, by the same approach, we showed the completeness of the liability and ability fragments of the Logic for Cooperating Conditional Strategic Reasoning $\FCCSR$.
The key ingredients of the approach include standard disjunctions, the validity-reduction condition of standard disjunctions, abstract game forms and their realization, and the derivability-reduction condition of standard disjunctions.
The approach is general, and after some adaptation, it can be applied to show the completeness of some other strategic logics. For example, by the same approach, we have proved the completeness of eight coalition logics~\cite{li_completeness_2024}, including Coalition Logic proposed by Pauly \cite{pauly_modal_2002}.

\subsection*{Acknowledgments}

We want to thank Valentin Goranko for his kind and great help with this project.
Thanks also go to the audience of a seminar and a workshop at Beijing Normal University.
This research was supported by the National Social Science Foundation of China (No. 19BZX137).

\bibliographystyle{alpha}
\bibliography{Strategy-reasoning}

\end{document}